\newtheorem{theorem}{Theorem}[section]
\newtheorem{proposition}[theorem]{Proposition}
\newtheorem{remark}[theorem]{Remark}
\newtheorem{lemma}[theorem]{Lemma}
\newtheorem{conjecture}[theorem]{Conjecture}
\newcommand{\var}{\operatorname{var}}
\newcommand{\cov}{\operatorname{cov}}
\newcommand{\floor}[1]{\left\lfloor #1 \right\rfloor}
\newcommand{\blog}{\log_2}
\newcommand{\symdif}{\triangle}
\newcommand{\sA}{\mathcal{A}} 
\newcommand{\sB}{\mathcal{B}} 
\newcommand{\sD}{\mathcal{D}}
\newcommand{\sE}{\mathcal{E}} 
\newcommand{\sF}{\mathcal{F}} 
\newcommand{\sG}{\mathcal{G}}
\newcommand{\sP}{\mathcal{P}}
\newcommand{\sM}{\mathcal{M}}
\newcommand{\sN}{\mathcal{N}}
\newcommand{\sT}{\mathcal{T}} 
\newcommand{\sV}{\mathcal{V}} 
\newcommand{\sW}{\mathcal{W}}
\newcommand{\RR}{\mathbb{R}} 
\newcommand{\ZZ}{\mathbb{Z}} 
\newcommand{\NN}{\mathbb{N}} 
\newcommand{\PP}{\mathbb{P}} 
\newcommand{\EE}{\mathbb{E}}
\newcommand{\naturals}{\NN}
\newcommand{\posint}{\NN^+}
\renewcommand{\d}{\mathrm{d}\,}
\newcommand{\pc}{p_{\mathrm{c}}}
\newcommand{\qc}{q_{\ast}}
\newcommand{\betac}{\beta_{\mathrm{c}}}
\newcommand{\backwardCouplingTime}{\mathfrak{T}}
\newcommand{\indicator}{\mathbf{1}}
\newcommand{\<}{\langle}
\newcommand{\mix}{t_\mathrm{mix}}
\newcommand{\rel}{t_\mathrm{rel}}
\newcommand{\expauto}{t_{\exp}}
\newcommand{\coupling}{T}
\newcommand{\coupon}{W}
\newcommand{\standard}{S}
\newcommand{\secondeigenvalue}{\lambda_2}
\renewcommand{\top}{\sT}
\newcommand{\bottom}{\sB}
\newcommand{\TV}{\mathrm{TV}}
\newcommand{\lastvisit}{H}
\newcommand{\fk}{\phi}
\newcommand{\ising}{\pi}
\renewcommand{\>}{\rangle}
\numberwithin{equation}{section}
\begin{document}
\title{On the coupling time of the heat-bath process for the Fortuin-Kasteleyn random-cluster model}
\author{Andrea Collevecchio \and Eren Metin El\c{c}i \and Timothy M. Garoni \and Martin Weigel}
\authorrunning{Collevecchio et al.} 
\institute{Andrea Collevecchio \and Eren Metin El\c{c}i \at
  School of Mathematical Sciences, Monash University, Clayton, VIC, 3800, Australia\\
  \and 
  Timothy M. Garoni \at
ARC Centre of Excellence for Mathematical and Statistical Frontiers (ACEMS), School of Mathematical Sciences, Monash University, Clayton, VIC 3800, Australia
\\
  \and 
  Martin Weigel \at Applied Mathematics Research Centre, Coventry University, Coventry, CV1 5FB, United Kingdom\\
  \email{andrea.collevecchio@monash.edu}\\
  \email{elci@posteo.de}\\
  \email{tim.garoni@monash.edu}\\
  \email{martin.weigel@coventry.ac.uk}
}

\date{\today}

\maketitle
\begin{abstract}
We consider the coupling from the past implementation of the random-cluster heat-bath process, and study its random running time, or
\emph{coupling time}.  We focus on hypercubic lattices embedded on tori, in dimensions one to three, with cluster fugacity at least one. We
make a number of conjectures regarding the asymptotic behaviour of the coupling time, motivated by rigorous results in one dimension and
Monte Carlo simulations in dimensions two and three.  Amongst our findings, we observe that, for generic parameter values, the distribution
of the appropriately standardized coupling time converges to a Gumbel distribution, and that the standard deviation of the coupling time is
asymptotic to an explicit universal constant multiple of the relaxation time.  Perhaps surprisingly, we observe these results to hold both
off criticality, where the coupling time closely mimics the coupon collector's problem, and also \emph{at} the critical point, provided the
cluster fugacity is below the value at which the transition becomes discontinuous.  Finally, we consider analogous questions for the
single-spin Ising heat-bath process.

\keywords{Coupling from the past \and Relaxation time \and Random-cluster model \and Markov-chain Monte Carlo}
\end{abstract}
\section{Introduction}
\label{sec:Introduction}
Since nontrivial models in statistical mechanics are rarely exactly solvable, Monte Carlo simulations provide an important tool for
obtaining information on phase diagrams and critical exponents.  The standard Markov-chain Monte Carlo procedure involves constructing a
Markov chain with the desired stationary distribution, and then running the chain long enough that the resulting samples
are close to stationarity.  The central obstacle to practical applications of MCMC is that it is typically not known \emph{a priori} how
many steps are required in order to reach (approximate) stationarity. In principle, the answer to this question can be quantified by
quantities such as the \emph{relaxation time} or \emph{mixing time} of the Markov chain (see below).  However, rigorously proving
practically useful upper bounds on such quantities is a very challenging task, as is their empirical estimation from simulations.

Coupling From The Past (CFTP), introduced by Propp and Wilson~\cite{ProppWilson96}, is a refinement of the MCMC method, which automatically
determines the required running time of the Markov chain, and then outputs exact samples, rather than approximate ones. The price that must be paid
for these two significant benefits is that, unlike naive MCMC, the running time of CFTP is random. The key question in determining the
efficiency of the CFTP method for a given application therefore becomes to understand the distribution of its random running time, or
\emph{coupling time}.  The name ``coupling from the past'' derives from two key features of the method. Firstly, rather than running a
single Markov chain, CFTP requires multiple Markov chains be run simultaneously (coupling). Secondly, the chains are not run forward from
time 0, but are instead run from the past to time 0.

In this article, we present a detailed study of the coupling time for the heat-bath dynamics of the Fortuin-Kasteleyn (FK) random-cluster model.
This process is one of the examples originally considered in~\cite{ProppWilson96}, and has been the subject of several recent
studies~\cite{BlancaSinclair16,GuoJerrum16,ElciWeigel13,ElciWeigel14,DengGaroniSokal07_sweeny}.  As discussed in more detail below, when the
cluster fugacity $q\ge1$, this process possesses an important monotonicity property, which makes it an ideal candidate for an efficient
implementation of CFTP. 

We consider the FK process on $d$-dimensional tori, $\ZZ_L^d$, for $d=1,2,3$.  Our methods are a combination of rigorous proof for $d=1$,
and systematic Monte Carlo experiments for $d=2,3$.  Based on our studies, we conjecture a number of results for the coupling time, which we
state precisely in Section~\ref{subsec:results}. Among them, we conjecture that, for generic choices of parameters $(p,q)$, the distribution
of the coupling time (appropriately standardized) tends to a Gumbel distribution as $L\to\infty$. For the special case of $q=1$, the
coupling time corresponds precisely to the coupon collector's problem, for which the Gumbel limit is a classical result~\cite{ErdosRenyi61}.
The surprising observation is that such a limit appears not only to remain universally valid for the FK heat-bath process at any
off-critical choice of $(p,q)\in(0,1)\times[1,\infty)$, but also \emph{at} the critical point, provided $q$ is below the value at which the
transition becomes discontinuous.  In particular, we conjecture that this limit law holds for all $p\in(0,1)$ when $q\in[1,4)$ and $d=2$.

In addition, we find strong evidence that the standard deviation of the coupling time is asymptotic, as $L\to\infty$, to a universal
constant times the relaxation time. Again, this is conjectured to hold not only off criticality for arbitrary $p\neq\pc$ and $q\ge1$, but
also \emph{at} $p=\pc$, provided $q$ is below the value at which the transition becomes discontinuous.  If true, this result suggests an
efficient empirical method for estimating the relaxation time of the FK heat-bath process: simply generate a number of independent
realizations of the coupling time and compute the sample variance.  We emphasize that this result would imply that consideration of the
coupling time can provide non-trivial information about the original Markov chain, and so its significance extends beyond possible
applications of the CFTP method, to standard MCMC simulations of the FK heat-bath chain.

For comparison, we also briefly study the single-spin-update heat-bath process for the Ising model.  Due to the slow mixing in the low
temperature phase~\cite{CesiGuadagniMartinelliSchonmann96}, our numerical results focus on the critical and high temperature regimes.  In
the high temperature regime, we find identical behaviour to that described above for the FK heat-bath process; in particular we find the
same Gumbel limit law for the coupling time, and the same relationship between the relaxation time and the standard deviation of the
coupling time. At criticality, however, the situation changes somewhat.  The relaxation time and coupling time standard deviation do still
appear to be asymptotically proportional, but now with a different proportionality constant.  Moreover, while the standardized coupling time
again appears to have a non-degenerate limit at criticality, the limit appears not to be of Gumbel type in this case.

\subsection{Outline}
Let us outline the remainder of this article. In Section~\ref{FK heat-bath process definitions} we define the FK heat-bath process, and
discuss some relevant recent literature.  We also define the coupling time, and explain its connection to CFTP.
Section~\ref{subsec:results} summarizes our theorems and conjectures for the FK coupling time. Sections~\ref{sec:fk moments} and~\ref{sec:fk
  limiting distribution} respectively consider the moments and limiting distributions, and present numerical evidence to support the
conjectures outlined in Section~\ref{subsec:results}. Sections~\ref{sec:FK arbitrary graphs} and~\ref{sec:FK one dimension} provide proofs
of Theorems~\ref{thm:FK bounds on arbitrary graphs} and~\ref{thm: d=1}, respectively. Section~\ref{sec:Ising} summarizes the analogous
results for the single-spin-update Ising heat-bath process.  Finally, Appendix~\ref{appendix:autocorrelation functions of increasing
  observables} establishes some relevant properties of autocorrelation functions of the FK heat-bath process, which we make use of in
Section~\ref{sec:fk moments}, and Appendix~\ref{appendix:coupon collecting} discusses some technical lemmas concerning the coupon
collector's problem.

\section{Fortuin-Kasteleyn heat-bath process}
\label{FK heat-bath process definitions}
\subsection{Definitions}
\label{subsec:definitions}
The Fortuin-Kasteleyn random-cluster model is a correlated bond percolation model, which can be defined on an arbitrary finite graph $G=(V,E)$ 
with parameters $p\in[0,1]$ and $q>0$ via the measure
\begin{equation}
\fk(A) = \dfrac{1}{Z_G(p,q)}\,q^{k(A)}\,p^{|A|}\,(1-p)^{|A^c|}, \qquad A\subseteq E
\label{FK distribution}
\end{equation}
where $k(A)$ is the number of connected components (\emph{clusters}) in the spanning subgraph $(V,A)$.  The partition function, $Z_G(p,q)$
is closely related to the Tutte polynomial, and its computation is known to be a \#P-hard problem~\cite{Welsh93,JaegerVertiganWelsh90}.  For
$q=1$, the FK model coincides with standard bond percolation, while for integer $q>1$ it is intimately related to the $q$-state Potts
model. Appropriate limits as $q\to0$ also coincide with spanning forests and uniform spanning trees.

While our focus in the current article is on finite graphs, standard arguments (see e.g.~\cite{Grimmett06}) allow random-cluster measures to
be defined\footnote{For concreteness, in the present discussion we refer to the measure corresponding to wired boundary
  conditions~\cite[Section 4.2]{Grimmett06}.} on the infinite lattice $\ZZ^d$. In this setting, it is well known~\cite{Grimmett06} that for
given $q\ge1$ and $d\ge2$, there exists a critical probability $\pc\in(0,1)$, such that the origin belongs to an infinite cluster with zero
probability when $p<\pc$, and with strictly positive probability when $p>\pc$. The exact value of $\pc$ when $d=2$ was recently
proved~\cite{BeffaraDuminilCopin12} to be $\pc=\sqrt{q}/(1+\sqrt{q})$.  The corresponding phase transition is said to be continuous if there
is zero probability that the origin belongs to an infinite cluster at $p=\pc$, and is discontinuous otherwise. It is
known~\cite{LaanaitMessagerMiracleSoleRuizShlosman91} that the transition is discontinuous for sufficiently large $q$.  It is
conjectured~\cite[Conjecture 6.32]{Grimmett06} that for every $d\ge2$ there exists $\qc$ such that the transition is continuous for $q<\qc$
and discontinuous for $q>\qc$. This has recently been proved when $d=2$, and moreover the exact value $\qc=4$ was established, confirming a
longstanding conjecture of Baxter~\cite{Baxter78}.  More precisely, in the specific case of $d=2$, the phase transition is
continuous~\cite{DuminilCopinSidoraviciusTassion15} for $1\le q\le 4$, and discontinuous~\cite{DuminilCopinGagnebinHarelManolescuTassion16} for
$q>4$.  Although $\pc=\pc(q,d)$ depends on $d$ and $q$, and $\qc$ depends on $d$, for brevity, we shall not explicitly write this dependence
when the values of $q,d$ are clear from the context.

To ease notation, for $A\subseteq E$ and $e\in E$, let $A_e:=A\setminus e$ and $A^e:=A\cup e$. Note that $A=A^e$ iff $e\in A$, and $A=A_e$
iff $e\not\in A$. An edge $e\in A$ is said to be \emph{occupied} in $A$.  An edge $e\in E$ is said to be \emph{pivotal} to the configuration
$A$ if $k(A_e)\neq k(A^e)$.

The FK heat-bath process has transition matrix $P=\dfrac{1}{m}\sum_{e\in E} P_e$ where
\begin{equation}
\begin{split}
P_e(A,B) &:= 
\begin{cases}
p(A,e), & B=A^e,\\
1-p(A,e), & B=A_e,\\
0, & \text{otherwise},
\end{cases}
\label{transition matrix}
\\
p(A,e) &:=
\frac{\fk(A^e)}{\fk(A^e)+\fk(A_e)}
=
\begin{cases}
\tilde{p}, & e \text{ is pivotal to } A, \\
p, & \text{otherwise},
\end{cases}
\\
\tilde{p}&:=\dfrac{p}{1+(q-1)(1-p)}.
\end{split}
\end{equation}
Note that, if $q\ge1$ and ${p\in(0,1)}$, we have $\tilde{p}\le p$, with equality iff $q=1$. 

We now proceed to define the central quantity of interest in this article, the coupling time of the FK heat-bath process.
It should be emphasized that the coupling time, and the corresponding CFTP algorithm, are not uniquely determined by the transition
probabilities of the process, but rather by the particular \emph{random mapping representation} that is chosen. Random mapping representations for
Markov chains provide convenient methods for constructing useful couplings, and also for constructing practical computational
implementations~\cite{LevinPeresWilmer09}. 

We focus attention on the following random mapping representation for $P$. Define $f: 2^E\times E\times [0,1]\to 2^E$ via
\begin{equation}
  f(A,e,u) := 
  \begin{cases}
    A^e, & u \le p(A,e),\\
    A_e, & u > p(A,e).
    \end{cases}
\label{random mapping definition}
\end{equation}
Let $\sE$ and $U$ be independent, with $\sE$ uniform on $E$ and $U$ uniform on $[0,1]$. By construction, $\PP(f(A,\sE,U)=B)=P(A,B)$, and so
$(f,\sE,U)$ defines a random mapping representation for $P$~\cite{LevinPeresWilmer09}.  It is straightforward to verify that $f$ is
\emph{monotonic}: for any fixed $e\in E$ and $u\in[0,1]$, if $A\subseteq B$, then $f(A,e,u)\subseteq f(B,e,u)$.  This random mapping
representation corresponds precisely to the manner in which a computational physicist would implement the transition matrix $P$ in practice.

Let $(\sE_t,U_t)_{t\in\posint}$ be an iid sequence\footnote{We adopt the convention that $\naturals:=\{0,1,2,\ldots\}$ and
  $\posint:=\{1,2,\ldots\}$.}  of copies of $(\sE,U)$.  Define $\top_t$ by $\top_0=E$ and $\top_{t+1}=f(\top_t,\sE_{t+1},U_{t+1})$. We refer
to $\top_t$ as the \emph{top chain}.  Likewise, the \emph{bottom chain} is defined by $\bottom_0=\emptyset$ and
$\bottom_{t+1}=f(\bottom_t,\sE_{t+1},U_{t+1})$. By construction, both $(\top_t)_{t\in\naturals}$ and $(\bottom_t)_{t\in\naturals}$ are
Markov chains with transition matrix $P$.  The coupled process $(\sB_t,\sF_t)_{t\in\naturals}$ is the fundamental object of consideration in
this article. For brevity, in what follows, we will refer to the coupled process $(\sB_t,\sF_t)_{t\in\naturals}$ as ``the FK heat-bath coupling''. 

We define the \emph{coupling time} of the FK heat-bath process to be 
\begin{equation}
\coupling:=\min\{t\in\naturals : \top_t = \bottom_t\}.
\label{coupling time definition}
\end{equation}
Note that, strictly speaking, the coupling time is a property of the FK heat-bath coupling, rather than of a single FK heat-bath process. 
Also note that, by monotonicity, a Markov chain started at time 0 in any state $A\subseteq E$ will have coalesced with $\top_t$ and $\bottom_t$ by time
$t=\coupling$, so $\top_{\coupling}$ can be viewed as the state of the Markov chain at the first time in which the initial state has been
\emph{forgotten} by the above coupling. As discussed further in Section~\ref{subsec:cftp}, the coupling time has the same distribution as
the running time of the CFTP algorithm.

\subsection{Previous studies of FK Glauber processes}
\label{subsec:previous studies}
A reversible Markov chain with stationary distribution~\eqref{FK distribution}, which is \emph{local} in the sense that at most one edge is
updated per time step, is typically referred to as a \emph{Glauber process} for the FK model. The two most commonly studied Glauber
processes for the FK model are the heat-bath process, as studied here, and the Metropolis process, as first studied numerically
in~\cite{Sweeny83}.

As a consequence of general results concerning heat-bath chains~\cite{DyerGreenhillUllrich14}, the transition matrix of the FK heat-bath
process, $P$, has non-negative eigenvalues. If $\secondeigenvalue$ denotes the second-largest eigenvalue of $P$, the \emph{relaxation
  time}~\cite{LevinPeresWilmer09} of $P$ is
 \begin{equation}
 \rel := \frac{1}{1-\secondeigenvalue}.
 \label{relaxation time definition}
 \end{equation}
A closely related quantity is the \emph{exponential autocorrelation time}~\cite{MadrasSlade96,Sokal97}, defined by
\begin{equation}
\expauto:=\frac{-1}{\ln(\secondeigenvalue)} = \frac{-1}{\ln(1-1/\rel)}.
\label{exponential autocorrelation time definition}
\end{equation}
It is easily verified that
\begin{equation}
\rel - 1 \le \expauto \le \rel.
\label{bound relating expauto and rel}
\end{equation}

Another quantity of importance is the \emph{mixing time}~\cite{LevinPeresWilmer09}, defined by
\begin{equation}
\mix(\epsilon) := \max_{A\subseteq E}\, \min_{t\in\naturals}\,\{\|P^t(A,\cdot)-\fk\|_{\TV} \le \epsilon\},
\label{mixing time definition}
\end{equation}
where $\|\cdot\|_{\TV}$ denotes total variation distance. Since $\mix(\epsilon) \le \lceil\blog\epsilon^{-1}\rceil\,\mix(1/4)$, one also
defines $\mix:=\mix(1/4)$~\cite{LevinPeresWilmer09}. Combining~\cite[Theorem 12.3]{LevinPeresWilmer09} 
and~\cite[Theorem 12.4]{LevinPeresWilmer09} with Lemma~\ref{lem: mu min} implies that for the FK heat-bath process
\begin{equation}
\frac{\rel-1}{2} \le \mix \le \ln\left(\frac{4q^2}{p(1-p)}\right)\,m\,\rel.
\end{equation}
The quantities $\rel$, $\expauto$ and $\mix$ all quantify the rate at which a Markov chain approaches stationarity, of
\emph{mixes}~\cite{LevinPeresWilmer09}.

Numerical studies~\cite{Gliozzi02,WangKozanSwendsen02,DengGaroniSokal07_sweeny} of FK Glauber processes suggest that their mixing in the
neighbourhood of continuous phase transitions can be surprisingly efficient; comparable to, and possibly faster than, non-local cluster
algorithms such as the Swendsen-Wang and Chayes-Machta processes~\cite{SwendsenWang87,ChayesMachta98}. In addition, it was observed
numerically in~\cite{DengGaroniSokal07_sweeny} that, for the FK Metropolis-Glauber process at criticality on the square and simple-cubic
lattices, certain observables apparently decorrelate asymptotically faster than a single sweep (i.e. in time $o(|E|)$), suggesting FK
Glauber processes could have significant advantages over cluster algorithms.

Significant progress has recently been made in rigorously bounding the mixing time of FK Glauber processes.  In~\cite{GuoJerrum16}, the
mixing time of the $q=2$ FK Metropolis-Glauber process on a graph with $m$ edges and $n$ vertices was shown to be $O(n^4 m^3)$. In addition,
precise asymptotics were given in~\cite{BlancaSinclair16} for the case of $q\ge1$ on $L\times L$ boxes in $\ZZ^2$, showing\footnote{The
  notation $a_L\asymp b_L$ means that there exist constants $c,C>0$ such that $c b_L \le a_L \le C b_L$ for all sufficiently large $L$.}
that $\mix \asymp L^2 \ln L$, provided $p\neq\pc$.  Even more recently, it has been shown in~\cite{GheissariLubetzky16} that
$\mix=O(L^{\ln L})$ on two-dimensional tori $\ZZ_L^2$ at $p=\pc$.

An important practical issue when simulating FK Glauber processes is the need to identify whether the edge to be updated is pivotal to the
current edge configuration. Sweeny~\cite{Sweeny83} proposed an algorithm for performing the necessary connectivity checks, which was
applicable to planar graphs.  In \cite{Elci15_thesis,ElciWeigel13,ElciWeigel14}, it was demonstrated that this algorithmic problem can be
efficiently solved by utilizing, and adapting, dynamic connectivity algorithms and appropriate data structures introduced
in~\cite{HolmDeLichtenbergThorup01}. These latter methods are applicable to arbitrary graphs, and can perform the required pivotality tests in time which
is poly-logarithmic in the graph size.

\subsection{Coupling from the past}
\label{subsec:cftp}
For completeness, in this section we present a brief review of the CFTP method applied to the FK heat-bath process. We note however that the
material in this section, which follows the discussion in~\cite{ProppWilson96}, serves only as motivation for studying the coupling
time~\eqref{coupling time definition}, and none of the concepts introduced in this section will be required outside of this section.

Let $(\sE_t,U_t)_{t\ge 0}$ be an iid sequence of copies of
$(\sE,U)$, define random maps $f_{-t} := f(\cdot,\sE_t,U_t)$, and for $t\in\posint$ form the compositions
\begin{equation}
F_{-t}^0 := f_0 \circ f_{-1} \circ \ldots \circ f_{-(t-1)}.
\label{backward composition}
\end{equation}
We can then define the \emph{backward coupling time} to be 
\begin{equation}
\backwardCouplingTime := \min\{t\in\posint : F_{-t}^0(E) = F_{-t}^0(\emptyset)\}.
\label{backward coupling time definition}
\end{equation}
As first shown in~\cite{ProppWilson96}, the random state $F_{-\backwardCouplingTime}^0(E)=F_{-\backwardCouplingTime}^0(\emptyset)$ is an
exact sample from the FK distribution~\eqref{FK distribution}. Algorithmically, a single step of the above procedure corresponds to starting
chains in states $E$ and $\emptyset$ at some point in the past, and running them until time 0. This procedure is then applied iteratively,
starting the chains at ever more distant times in the past, and terminating the iteration at the first time that the chains started at $E$ and
$\emptyset$ agree at time 0.

To appreciate why the resulting state $F_{-\backwardCouplingTime}^0(E)$ is distributed according to~\eqref{FK distribution}, we can make the
following observations. Firstly, by monotonicity, if $F_{-t}^0(E)=F_{-t}^0(\emptyset)$ then $F_{-t}^0(A)=F_{-t}^0(E)$ for every $A\subseteq
E$. Secondly, if $F_{-t}^0(E)=F_{-t}^0(\emptyset)$ then $F_{-s}^0(E)=F_{-s}^0(\emptyset)$ for every $s\ge t$. Therefore, the state
$F_{-\backwardCouplingTime}^0(E)$ coincides with $F_{-s}^0(A)$ for any $s\ge\backwardCouplingTime$ and $A\subseteq E$.  In this sense, we can picture 
$F_{-\backwardCouplingTime}^0(E)$ as the state,
at time $t=0$, of a Markov chain that started at an arbitrary state in the infinite past.

For comparison, note that performing a standard Markov-chain Monte Carlo simulation simply corresponds to composing the sequence of random maps in the
opposite order to~\eqref{backward composition}. Specifically, to defining random maps $f_{t} := f(\cdot,\sE_t,U_t)$ and forming the compositions
$$ 
F_{0}^t := f_{t} \circ \ldots \circ f_{1}.
$$ 
Even though, by monotonicity, we have $F_{0}^{\coupling}(E)=F_0^{\coupling}(A)=F_{0}^{\coupling}(\emptyset)$ for all $A\subseteq E$, there
is no reason to suspect $F_0^{\coupling}(E)$ should have distribution~\eqref{FK distribution}.

Despite the significant differences between the forward and backward couplings, it can be shown, quite generally, that forward and backward
coupling times are identically distributed~\cite{ProppWilson96}. As a consequence, to study the behaviour of the random running time
$\backwardCouplingTime$ of CFTP, it suffices to consider only the forward coupling time $\coupling$, defined in~\eqref{coupling time
  definition}.

The CFTP algorithm described above is the simplest version, however a number of algorithmic improvements have been devised. In particular,
rather than choosing the restart times to be $-1,-2,-3,\ldots$, the restart times can be chosen to be $-a_1,-a_2,\ldots$ for any monotonic
natural sequence $a_1,a_2,\ldots$. See the pedagogical discussions in~\cite{Jerrum98,Haggstrom03,LevinPeresWilmer09} for more details on
CFTP algorithms.

\subsection{Behaviour of the coupling time}
\label{subsec:results}
We now summarize our main results for the coupling time.  We begin with some general results, holding on arbitrary finite connected graphs,
which relate the coupling time~\eqref{coupling time definition} to $\mix$ and $\expauto$. Theorem~\ref{thm:FK bounds on arbitrary graphs} is
a slight refinement, in the specific setting of the FK heat-bath coupling, of the results presented in~\cite[Section 5]{ProppWilson96}.  Its
proof is deferred until Section~\ref{sec:FK arbitrary graphs}.
\begin{theorem} Consider the FK heat-bath coupling with parameters $p\in(0,1)$ and $q\ge1$, on a finite connected graph with $m\ge1$ edges, and let
$\psi:=\psi(p,q):=\dfrac{q^2}{p(1-p)}$. Then
 \begin{align}
 \frac{e^{-t/\expauto}}{2} \le \PP(\coupling>t) &\le e^{(\ln(\psi)+2)\,m - t/\expauto},\label{tail distribution is exp with scale expauto}\\
 \frac{\mix-1}{4} \le \EE(\coupling) &\le \min\left\{12\,\blog(4m)\,\mix,\,4(\blog(\psi)+3)\,m\,\expauto\right\},\\
 \sqrt{\var(\coupling)} &\le \min\left\{15\,\blog(4m)\,\mix,\,5(\blog(\psi)+3)\,m\,\expauto\right\}
 \end{align}
\label{thm:FK bounds on arbitrary graphs}
\end{theorem}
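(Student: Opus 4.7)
The proof hinges on two consequences of the monotonicity of the random mapping $f$. First, a \emph{sandwich property}: any chain $X^A$ started at $A$ and driven by the same sequence $(\sE_s, U_s)$ satisfies $\bottom_t \subseteq X^A_t \subseteq \top_t$ for all $t$, and hence for any $A, B$ the inclusions $X_t^{A \cap B} \subseteq X_t^A, X_t^B \subseteq X_t^{A \cup B}$ force $\{X^A_t \neq X^B_t\} \subseteq \{\coupling > t\}$. Second, \emph{submultiplicativity}: $\PP(\coupling > s + t) \leq \PP(\coupling > s)\,\PP(\coupling > t)$, obtained by restarting at time $s$ from the conditional states $\top_s \supseteq \bottom_s$ and dominating the remaining coupling time by that of $(E, \emptyset)$. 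These two facts, combined with standard reversible-chain spectral estimates, drive everything that follows.

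For the tail bounds, the sandwich property immediately yields $\PP(\coupling > t) \geq \bar d(t) \geq d(t)$, where $d(t)$ and $\bar d(t)$ denote the usual worst-case and pairwise total-variation distances at time $t$. The classical spectral lower bound $d(t) \geq \secondeigenvalue^t/2$ (obtained by testing against the second eigenfunction of $P$) together with $\secondeigenvalue^t = e^{-t/\expauto}$ gives the claimed lower tail. For the upper tail, monotonicity gives $\PP(\coupling > t) = \PP(|\top_t| > |\bottom_t|) \leq \EE(|\top_t|) - \EE(|\bottom_t|)$, and since $h(A) := |A|$ satisfies $\max h - \min h \leq m$ we have $|P^t h(A) - \ising(h)| \leq m\, d_A(t)$, producing $\PP(\coupling > t) \leq 2 m\, d(t)$. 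Combining this with the standard PSD-reversible-chain estimate $d(t) \leq \tfrac{1}{2} \secondeigenvalue^t/\sqrt{\mu_{\min}}$ and the quantitative lower bound $\mu_{\min} \geq \psi^{-m}$ supplied by Lemma~\ref{lem: mu min}, then absorbing the polynomial prefactor into $e^{2m}$ (which is painless because $\psi \geq 4$), yields the stated $e^{(\ln\psi + 2)m - t/\expauto}$.

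The moment bounds are extracted from the tail. The lower bound follows from Markov: $\EE(\coupling) \geq (\mix - 1)\,\PP(\coupling \geq \mix - 1) \geq (\mix - 1)\, d(\mix - 2) \geq (\mix - 1)/4$ by definition of $\mix = \mix(1/4)$. The two upper bounds arise from two routes. The $\mix$ route: $2m\, d(t) \leq 1/2$ as soon as $t \geq \mix(1/(4m)) \leq \lceil \blog(4m) \rceil\, \mix$, and submultiplicativity then yields $\PP(\coupling > k t_0) \leq 2^{-k}$ with $t_0 := \lceil \blog(4m)\rceil\, \mix$, so that $\EE(\coupling) = \sum_t \PP(\coupling > t) \leq 2 t_0$. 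For the variance, $\EE(\coupling^2) \leq \sum_t (2t+1)\PP(\coupling > t) = O(t_0^2)$ via the fact that $\lceil \coupling/t_0 \rceil$ is stochastically dominated by a geometric variable, and $\sqrt{\var(\coupling)} \leq \sqrt{\EE(\coupling^2)}$ then gives the constant $15$ after rounding. The $\expauto$ route instead integrates the exponential upper tail directly, splitting the sum at $t^* = (\ln\psi + 2)m\, \expauto$ where the exponential first drops below one.

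The main obstacle is entirely bookkeeping: tracking the polynomial prefactors (the $2m$ from $\EE(|\top_t|-|\bottom_t|)$, the $1/\sqrt{\mu_{\min}}$ from the spectral bound, and the geometric-series constants in $\EE(\coupling^2)$) so that they absorb cleanly into the stated numerical constants $2$, $3$, $12$, and $15$. Conceptually nothing is new beyond the Propp--Wilson framework of~\cite{ProppWilson96}; the theorem is best regarded as its careful specialization to the FK heat-bath setting, with the quantitative lower bound $\mu_{\min} \geq \psi^{-m}$ providing the explicit dependence on $(p, q)$.
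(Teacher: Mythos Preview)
Your proof is correct and follows essentially the same Propp--Wilson framework as the paper: both sandwich $\PP(\coupling>t)$ between $d(t)$ and $O(m)\,d(t)$, feed in spectral bounds on $d(t)$ together with the $\fk_{\min}$ estimate of Lemma~\ref{lem: mu min}, and then extract moment bounds from the resulting geometric tail $\PP(\coupling>lM)\le 2^{-l}$. Your minor deviations---invoking submultiplicativity of $\coupling$ explicitly (the paper instead verifies the geometric tail directly from the upper bound on $d(t)$, never using submultiplicativity of $\coupling$) and using the sharper $L^2$ estimate $d(t)\le \tfrac{1}{2}\secondeigenvalue^t/\sqrt{\fk_{\min}}$ in place of the paper's $1/\fk_{\min}$ bound---are harmless variations within the same scheme.
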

\begin{remark}
\label{rem:d=2 bounds}
In the special case of $L\times L$ boxes in $\ZZ^2$, with $p\neq \pc$, we can combine the mixing time bound presented
in~\cite{BlancaSinclair16} with Theorem~\ref{thm:FK bounds on arbitrary graphs} to conclude that both $\EE(\coupling)$ and
$\sqrt{\var(\coupling)}$ are $O(L^2\,\ln^2L)$, and that $\EE(\coupling)$ is $\Omega(L^2\ln L)$. 
Likewise, the results in~\cite{GheissariLubetzky16} imply that, at $p=\pc$, both $\EE(\coupling)$ and $\var(\coupling)$ are $L^{O(\ln L)}$ 
on $\ZZ_L^2$. 
\end{remark}

As mentioned briefly in Section~\ref{sec:Introduction}, the coupling time is related to the coupon collector's problem. We now make this
connection more precise. Consider a finite connected graph $G=(V,E)$ with $|E|=m$ and let
\begin{equation}
\coupon:=\min\{t\in\posint : \{\sE_1,\ldots,\sE_t\}=E\}.
\label{coupon definition}
\end{equation}
The random variable $\coupon$ is the \emph{coupon collector's time}, for the edge process $(\sE_t)_{t\in\posint}$, and its behaviour is
well-understood~\cite{ErdosRenyi61,LevinPeresWilmer09}.  It is elementary to show (see e.g.~\cite{Posfai10}) that
\begin{align}
\EE(\coupon) & = m\, H_m\,\sim m\ln(m), 
\label{coupon collector mean}\\
\var(\coupon) &= m^2 H_m^{(2)} - m\, H_m \sim \frac{\pi^2}{6} m^2, 
\label{coupon collector variance}
\end{align}
as $m\to\infty$, where $H_m^{(k)}:=\sum_{i=1}^m i^{-k}$ is the generalized Harmonic number~\cite{GrahamKnuthPatashnik94} of order $k$, and
$H_m:=H_m^{(1)}$.  Moreover, as first shown in~\cite{ErdosRenyi61}, for any $x\in\RR$ we have
\begin{equation}
\lim_{m\to\infty} \PP[\coupon\le \EE(\coupon)+x\sqrt{\var(\coupon)}] = G(x),
\label{coupon collector CLT}
\end{equation}
where
\begin{equation}
G(x):=\exp\left(-\exp\left(- \dfrac{\pi}{\sqrt{6}}\,x - \gamma\right)\right), \quad x\in \RR,
\label{gumbel distribution function}
\end{equation}
is the distribution function of the Gumbel distribution with zero mean and unit variance, and $\gamma$ is the Euler-Mascheroni constant.

Since the top and bottom chains cannot coalesce until every edge has been updated at least once, we clearly have
\begin{equation}
\coupling\ge \coupon.
\label{coupon lower bounds coupling}
\end{equation}
Moreover, if $t\in\posint$, then by monotonicity, $\bottom_t$ and $\top_t$ will disagree on the edge $\sE_t$ iff $\sE_t\in\top_t$ and
$\sE_t\not\in\bottom_t$. In turn, this will occur iff: $\sE_t$ is pivotal to $\bottom_{t-1}$ but not pivotal to $\top_{t-1}$; and $\tilde{p}
< U_t \le p$. If $G$ is a tree, every edge is pivotal to every $A\subseteq E$, and the first condition cannot occur. If $q=1$, then
$\tilde{p}=p$ and the second condition cannot occur. It follows that if $q=1$, or if $G$ is a tree, then $\coupling=\coupon$ identically.

Our main interest in this article is the case that $G$ is $\ZZ_L^d$ for some choice of $L$ and $d$. In this case, $\coupling$ is certainly
not identically equal to $\coupon$. For $d=1$ however, Theorem~\ref{thm: d=1} shows that, for large $L$, the behaviour of $\coupling$
closely mimics that of $\coupon$. To emphasize the dependence of $\coupling$ and $\coupon$ on $L$ we append subscripts in the remainder of
this section.
\begin{theorem}
\label{thm: d=1}
Consider the FK heat-bath coupling on $\ZZ_L$ with parameters $p\in(0,1)$ and $q\geq 1$. 
Then, as $L\to\infty$, we have:
\begin{enumerate}[label=(\roman*)]
\item\label{thm:d=1 coupling time mean} $\EE(\coupling_L) \sim \EE(\coupon_L)$.
\item\label{thm:d=1 coupling time variance}   $\var(\coupling_L) \sim \var(\coupon_L)$
\item\label{thm:d=1 coupling time distribution} $\PP[\coupling_L \leq \EE(\coupling_L) + x \sqrt{\var(\coupling_L)}]\to G(x)$ for each $x\in\RR$.
\item\label{thm:d=1 relaxation time} $\rel \asymp L$.
\end{enumerate}
\end{theorem}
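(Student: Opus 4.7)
My plan is to address part (iv) first and use it to underpin parts (i)--(iii).

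For the lower bound $\rel = \Omega(L)$ in part (iv), I apply the variational characterization to $f(A) = \indicator(e_0 \in A)$ for a fixed edge $e_0$: $\var_\fk(f) = \Theta(1)$, while the Dirichlet form $\mathcal{E}(f,f) = \Theta(1/L)$ since only an $O(1/L)$ fraction of updates touch $e_0$. For the upper bound $\rel = O(L)$, I would build a monotone coupling and quantify its contraction. The core observation is that, on $\ZZ_L$, an edge $e$ is pivotal to $A$ if and only if $A \not\supseteq E \setminus \{e\}$. Hence if $A \subseteq B$ both satisfy $|A| \leq |B| \leq L-2$ then every edge is pivotal in both chains with identical threshold $\tilde p$, so updates match perfectly and the Hamming distance contracts by a factor $|D|/L$ per step. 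The obstruction is that for $|B| \in \{L-1, L\}$ the pivotality mismatch between $A$ and $B$ can expand the distance with probability $p - \tilde p$ per update. Since the projected chain $|B_t|$ leaves $\{L-1, L\}$ in $O(1)$ expected steps and visits this region only with exponentially small stationary probability, I plan to absorb this transient through a suitable weighted metric or a block-coupling in order to recover $\rel = O(L)$.

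For parts (i)--(iii), the same pivotality characterization shows that a discrepancy (the event which puts $\sE_t$ into $D_t$) can occur only when $|\top_{t-1}| \geq L - 1$. Define $T_0 := \min\{t \geq 0 : |\top_t| \leq L - 2\}$; a direct analysis of the two-state projected chain on $\{L,L-1\}$ vs.\ $\{\leq L-2\}$ gives $\EE[T_0^k] = O(1)$ for every $k$, uniformly in $L$. Using $k(A) = L - |A| + \indicator_{A=E}$ on the cycle, one computes $Z_{\ZZ_L}(p,q) = (p + q(1-p))^L + (q-1)p^L$ and deduces $\fk(|A| \geq L-1) = O(L \rho^L)$ with $\rho := p/(p + q(1-p)) < 1$ for $p \in (0,1)$ and $q \geq 1$. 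Combined with the spectral gap from (iv), this yields $\PP(|\top_t| \geq L-1 \text{ for some } t \in [T_0, \coupon_L]) = o(1)$.

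On this high-probability event no discrepancy occurs after $T_0$, and the strong Markov property at $T_0$ gives $\coupling_L = T_0 + \tilde W$, where $\tilde W$ is the coupon-collector time (started at $T_0$) for the random set
\[
\sR := \{e \in E : e \notin \{\sE_1, \ldots, \sE_{T_0}\} \text{ or the last update of } e \text{ in } [1, T_0] \text{ was a discrepancy}\}.
\]
Since $L - |\sR| \leq T_0$ is uniformly tight, the standard coupon-collector asymptotics for a size-$(L - O(1))$ subset give $\EE[\tilde W] = L H_{|\sR|} = L H_L - O(1)$ and $\var(\tilde W) = L^2 H_{|\sR|}^{(2)} - L H_{|\sR|} = L^2 H_L^{(2)} - L H_L + O(1)$, together with the same Gumbel limit as $\coupon_L$. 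These imply (i), (ii), and (iii) on the good event, and the crude polynomial-in-$L$ bounds of Theorem~\ref{thm:FK bounds on arbitrary graphs} control the exponentially unlikely complementary event.

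The main technical obstacle I expect is the upper bound in part (iv): the pivotality-induced expansion near $\{|A| \geq L-1\}$ obstructs a vanilla Bubley--Dyer path coupling, and obtaining the sharp $\rel = O(L)$ (as opposed to $O(L \log L)$ via $\mix$-based reductions) will likely require either a carefully weighted metric or a direct spectral input exploiting the transfer-matrix structure of the one-dimensional random-cluster model.
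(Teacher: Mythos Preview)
Your overall strategy for (i)--(iii) has the right starting point---pivotality on the cycle forces discrepancies to occur only when $|\top_{t-1}|\ge L-1$---but there is a concrete error in the probability estimate for the complementary event. You assert it is ``exponentially unlikely'', apparently by combining the exponentially small stationary mass $\fk(|A|\ge L-1)=O(L\rho^L)$ with the spectral gap. This does not follow: at time $T_0$ the top chain has \emph{exactly} $L-2$ edges, and at the very next step one of the two missing edges is selected with probability $2/L$ and re-inserted with probability $\tilde p$, so $\PP(|\top_{T_0+1}|=L-1)\ge 2\tilde p/L$. Thus the bad event has probability $\Theta(1/L)$, not $e^{-cL}$. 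The spectral gap only tells you the chain is close to stationary after order $L\log L$ steps; it says nothing about the first $O(1)$ steps after $T_0$, which is precisely where the $\Theta(1/L)$ leak occurs.

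This matters because your moment transfer then breaks down. With $\PP(\text{bad})=\Theta(1/L)$ and only the Theorem~\ref{thm:FK bounds on arbitrary graphs} bound $\EE[T_L^2]=O(L^4)$ (via $m=L$, $\expauto\asymp L$), Cauchy--Schwarz gives $\EE[T_L\,\indicator_{\text{bad}}]=O(L^{3/2})$, which already swamps $\EE[W_L]\sim L\ln L$; the variance argument fails even more badly. The paper circumvents this via a different decomposition: it proves $\PP(T_L=W_L)=1-O(L^{-\epsilon})$ by a \emph{last-visit} argument (the first $\lfloor\ln L\rfloor$ edges to receive their final pre-$W_L$ update are pivotal w.h.p., and if any of them is closed at its last visit then every subsequent update is a non-discrepancy, forcing $T_L\le W_L$), and then sandwiches $W_L\le T_L\le\tilde T_L$, where $\tilde T_L$ is a geometric sum of i.i.d.\ copies of $W_L$ whose mean and variance are controlled by Wald's first and second equations. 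This restart construction is what replaces your failed bad-event bound; your scheme could presumably be repaired by a similar iteration, but as written it does not close.

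For (iv), your lower bound via the edge-indicator test function is essentially the paper's. For the upper bound, the paper sidesteps the path-coupling obstruction you anticipate by a direct spectral comparison: on $\ZZ_L$ one has $\fk(A)=r_L\,q^{\indicator(A=E)}\tilde\fk(A)$ with $\tilde\fk$ i.i.d.\ percolation at parameter $\tilde p$, and this multiplicative relation yields $\sE(g)\ge r_L\tilde\sE(g)$ and $\var_\fk(g)\le q\,r_L^2\var_{\tilde\fk}(g)$, hence $\gamma\ge\tilde\gamma/(q r_L)=1/(qL)(1+o(1))$ from the exactly-diagonalizable percolation chain, with no coupling needed.
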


Intuitively, one expects the behaviour of the model on $\ZZ_L$ to be representative of the sub-critical behaviour on $\ZZ_L^d$ for any
$d\ge1$.  This suggests that the sub-critical behaviour on $\ZZ_L^d$ should again be governed by the coupon collector time.
Conjectures~\ref{conj:off-critical mean} and~\ref{conj:off-critical variance} formalize this intuition in the case of the mean and variance.
These conjectures are consistent with the rigorous bounds known in two dimensions, discussed in Remark~\ref{rem:d=2 bounds}.

To ease notation in what follows, we define $\mu_\coupling(L):=\EE(\coupling_L)$ and $\sigma_\coupling(L):=\sqrt{\var(\coupling_L)}$, and
likewise set $\mu_\coupon(L):=\EE(\coupon_L)$ and $\sigma_\coupon(L):=\sqrt{\var(\coupon_L)}$.  For brevity, we omit explicit mention of the
dependence of $\mu_\coupling$ and $\sigma_\coupling$ on $p,q$. In later sections, we shall also often omit explicit mention of their $L$ dependence.
\begin{conjecture}[Off-critical mean]
\label{conj:off-critical mean}
Consider the FK heat-bath coupling on $\ZZ_L^d$ with $d\geq 2$, $q \geq 1$ and $p\in(0,1)$ such that $p\neq\pc$. 
There exists $C(p,q,d)\ge1$ such that as $L\to\infty$
$$
\mu_\coupling(L) \sim C(p,q,d)\,\mu_\coupon(L).
$$
\end{conjecture}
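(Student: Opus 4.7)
The plan is to sandwich $\mu_\coupling(L)/\mu_\coupon(L)$ between two constants that agree in the limit. The lower bound $\mu_\coupling(L) \ge \mu_\coupon(L)$ is immediate from the pathwise inequality $\coupling \ge \coupon$ noted just before the conjecture, yielding $\liminf_{L\to\infty} \mu_\coupling(L)/\mu_\coupon(L) \ge 1$. All the work lies in a matching upper bound $\mu_\coupling(L) \le (C+o(1))\,\mu_\coupon(L)$ and in the convergence of the ratio to a single constant.

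A direct application of Theorem~\ref{thm:FK bounds on arbitrary graphs} together with the off-critical mixing bound $\mix = O(L^d \ln L)$ of \cite{BlancaSinclair16} (established for $d=2$, conjectural for $d=3$) yields only $\mu_\coupling = O(L^d \ln^2 L)$, off by a logarithmic factor from the conjecture. To shave this, I would decompose $\coupling = \coupon + R$ and analyse the residual $R$ via the disagreement set $\Delta_t := \top_t \setminus \bottom_t$. Monotonicity forces every new disagreement to arise from a \emph{pivotal-mismatch} event: $\sE_t$ is pivotal to $\bottom_{t-1}$ but not to $\top_{t-1}$, and $U_t \in (\tilde{p}, p]$. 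Existing disagreements, meanwhile, vanish whenever the affected edge is reselected without such a mismatch recurring. The goal is to couple $R$ with a weighted coupon-collector-type process in which each edge, after its first refresh, must be refreshed a further geometric number of times before coalescing permanently, with geometric parameter $\alpha(p,q,d) \in [0,1)$ equal to the stationary probability that an update creates a disagreement at the selected edge. The coupon-collector asymptotics assembled in Appendix~\ref{appendix:coupon collecting} should then deliver $\mu_\coupling(L) \sim (1-\alpha)^{-1}\,\mu_\coupon(L)$, i.e.\ $C(p,q,d) = (1-\alpha(p,q,d))^{-1}$. The uniformity $\alpha < 1$ in $L$ reduces to uniform exponential decay of connectivities off criticality, which holds subcritically by \cite{DuminilCopinSidoraviciusTassion15} and supercritically by planar duality in $d=2$ (with higher-dimensional analogues still partly conjectural).

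The main obstacle is that $(\top_t, \bottom_t)$ is badly out of equilibrium: both chains start at the extremal states $E$ and $\emptyset$, and because $\mix$ is of the same order as $\mu_\coupon$ off criticality, a naive burn-in argument fails. A block-decomposition approach therefore appears necessary: partition $\ZZ_L^d$ into mesoscopic boxes of side length $\ell(L)$ with $1 \ll \ell(L) \ll L$, use exponential decay to render distant boxes asymptotically independent, and within each box compare the disagreement process to its stationary counterpart using a local mixing bound of order $\ell(L)^d \ln \ell(L) \ll \mu_\coupon$. A law-of-large-numbers (or subadditive ergodic) argument across blocks should then upgrade the upper bound to equivalence and identify $C(p,q,d)$ as an infinite-volume per-edge limit. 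Making the block decoupling rigorous in the supercritical regime, where a single infinite cluster typically spans across boxes, is the most delicate technical step.
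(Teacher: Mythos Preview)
The statement you are attempting to prove is a \emph{conjecture} in the paper, not a theorem: there is no proof to compare against. The paper explicitly labels it Conjecture~\ref{conj:off-critical mean} and supports it only with Monte Carlo data (Section~\ref{subsec:off-critical moments}, Fig.~\ref{fig:mean_ratio_off_critical}); the sole rigorous result in this direction is Theorem~\ref{thm: d=1} for $d=1$, whose proof relies on the very special fact that on a cycle a single unoccupied edge makes every other edge pivotal. No analogue of this mechanism is available for $d\ge 2$, and the paper treats the higher-dimensional statement as open.

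Turning to your outline as an attack on an open problem: the lower bound is indeed trivial, and your diagnosis that Theorem~\ref{thm:FK bounds on arbitrary graphs} plus the Blanca--Sinclair mixing bound loses a logarithm is correct. But the central step---modelling the residual $R$ by an edge-wise geometric coupon process with a single parameter $\alpha$---does not hold up. Disagreements in $\Delta_t$ are spatially correlated (a pivotal-mismatch at $e$ depends on the cluster structure of both $\top_{t-1}$ and $\bottom_{t-1}$ in a neighbourhood of $e$, not just at $e$), and an edge that has already coalesced can become a disagreement again later; there is no absorbing per-edge state. So the ``geometric number of refreshes'' picture, and with it the formula $C=(1-\alpha)^{-1}$, is heuristic at best. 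The block-decomposition idea is more promising in spirit, but you have not indicated what quantity is subadditive, and the supercritical case---where the unique giant cluster genuinely couples all blocks---is, as you acknowledge, not handled. In short, the proposal identifies the right obstacles (losing the extra $\ln L$, controlling disagreements far from equilibrium) but does not supply a mechanism that overcomes them; the conjecture remains open.
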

\noindent Numerical evidence in support of Conjecture~\ref{conj:off-critical mean} is presented in Section~\ref{subsec:off-critical moments}.

We note that, if correct, Conjecture~\ref{conj:off-critical mean} combined with~\eqref{coupon collector mean} and the recent mixing time
bound~\cite{BlancaSinclair16} implies that for $d=2$ we have $\mu_\coupling(L)\asymp \mix(L)$ whenever $p\neq\pc$.  It seems natural to expect
that this in fact holds in all dimensions. Given the difficulty of estimating $\mix$ numerically, however, we have no empirical evidence
to directly support the claim $\mu_\coupling(L)\asymp \mix(L)$, and we therefore do not state it formally as a conjecture.

\begin{conjecture}[Off-critical variance]
\label{conj:off-critical variance}
Consider the FK heat-bath coupling on $\ZZ_L^d$ with $d\geq 2$, $q \geq 1$ and $p\in(0,1)$ such that $p\neq\pc$. 
There exists $C(p,q,d)\ge1$ such that as $L\to\infty$
$$
\sigma_\coupling(L) \sim C(p,q,d)\,\sigma_\coupon(L).
$$
\end{conjecture}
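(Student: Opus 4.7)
The plan is to begin from the decomposition $\coupling_L = \coupon_L + \Delta_L$, where $\Delta_L := \coupling_L - \coupon_L \ge 0$ by~\eqref{coupon lower bounds coupling}, together with the variance identity
$$
\var(\coupling_L) \;=\; \var(\coupon_L) \;+\; 2\,\cov(\coupon_L,\Delta_L) \;+\; \var(\Delta_L).
$$
If one can show $\sqrt{\var(\Delta_L)} = o(\sigma_\coupon(L))$, then Cauchy--Schwarz forces the covariance to be $o(\sigma_\coupon(L)^2)$ as well, yielding the conjecture with $C(p,q,d)=1$. Should $\Delta_L$ instead contribute a fluctuation of the same order as $\sigma_\coupon(L)$, the same bookkeeping would identify $C(p,q,d)$ as the square root of an effective number of heat-bath updates each edge must undergo before its contribution to the disagreement process disappears; the bound $C(p,q,d)\ge 1$ is automatic from~\eqref{coupon lower bounds coupling}.

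The central ingredient is a disagreement-contraction estimate for $D_t := |\top_t \setminus \bottom_t|$, which by monotonicity is non-negative and satisfies $\coupling_L = \min\{t : D_t=0\}$. By the random mapping~\eqref{random mapping definition}, $D_t$ can grow only when $\sE_{t+1}$ is pivotal to exactly one of the two chains and $\tilde p < U_{t+1} \le p$; all other updates are non-increasing. I would combine a step-by-step drift bound of this form with an $L^2$-contraction argument driven by the spectral gap of $P$ to obtain an estimate of the form
$$
\EE\bigl[D_{t+s}\,\big|\,\top_t,\bottom_t\bigr] \;\le\; e^{-s/\expauto}\,D_t,
$$
up to corrections from freshly-introduced disagreements that themselves decay geometrically. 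Off criticality the relaxation time is expected to grow at most polylogarithmically in $L$ (consistent with the Blanca--Sinclair mixing bound in $d=2$ recalled in Remark~\ref{rem:d=2 bounds}, and analogous expectations in $d=3$), so iterating this estimate beyond time $\coupon_L$ would yield $\Delta_L = O(\rel \ln L)$ with failure probability $o(L^{-2d})$, hence $\sqrt{\var(\Delta_L)} = o(L^d) = o(\sigma_\coupon(L))$.

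The hard part will be establishing such a uniform exponential-decay estimate at general off-critical parameters, especially in the supercritical regime $p > \pc$ with $q > 1$ and $d \ge 3$, where the required spatial-mixing or log-Sobolev inputs for the FK measure are not currently available even in a weak form. Even in $d=2$, where the global mixing time is now well controlled, translating the Blanca--Sinclair bound into the pointwise tail control on $\Delta_L$ required here would need new coupling estimates that track disagreement components rather than merely total variation distance. A secondary, less fundamental, difficulty is that $\coupon_L$ and $\Delta_L$ are not independent -- $\Delta_L$ is determined by the heat-bath dynamics after time $\coupon_L$ conditioned on the random configuration $(\top_{\coupon_L},\bottom_{\coupon_L})$ -- so the covariance bound above must be supported by a conditional concentration statement for $\Delta_L$ that is uniform in the state reached at coupon-completion time.
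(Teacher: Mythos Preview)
The paper does not prove this statement: it is explicitly labelled a conjecture, and the only support offered is the Monte Carlo data in Section~\ref{subsec:off-critical moments} (Fig.~\ref{fig:std_ratio_off_critical}). There is therefore no paper proof to compare your proposal against; what follows is an assessment of your plan on its own terms.

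Your overall decomposition $\coupling_L=\coupon_L+\Delta_L$ and the variance identity are sound, and you correctly identify that the crux is controlling the fluctuations of $\Delta_L$. However, the quantitative step contains a significant error. You write that ``off criticality the relaxation time is expected to grow at most polylogarithmically in $L$'', citing the Blanca--Sinclair bound. This is false: Blanca--Sinclair gives $\mix\asymp L^2\ln L$ in $d=2$, which is polynomial, and the paper itself states (just after Conjecture~\ref{conj:off-critical variance}, and proves in Theorem~\ref{thm: d=1}\ref{thm:d=1 relaxation time} for $d=1$) that one expects $\rel\asymp L^d$ off criticality. With $\rel\asymp L^d$, your drift argument would yield $\Delta_L=O(L^d\ln L)$ with high probability; but since $\sigma_\coupon(L)\asymp L^d$ by~\eqref{coupon collector variance}, this does \emph{not} give $\sqrt{\var(\Delta_L)}=o(\sigma_\coupon(L))$, and the Cauchy--Schwarz step collapses. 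In fact $\rel$ and $\sigma_\coupon$ are of the same order off criticality---this is precisely why the paper can formulate Conjecture~\ref{conj:variance vs relaxation time}---so any argument that bounds $\Delta_L$ only in units of $\rel$ is too coarse to separate $\sigma_\coupling$ from $\sigma_\coupon$.

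A smaller point: the claim that $C(p,q,d)\ge 1$ is ``automatic from~\eqref{coupon lower bounds coupling}'' is not correct as stated. The almost-sure inequality $\coupling\ge\coupon$ does not by itself imply $\var(\coupling)\ge\var(\coupon)$; stochastic domination does not order variances. One needs something more, for instance that $\Delta_L$ is positively correlated with $\coupon_L$, or the full asymptotic $\var(\Delta_L)=o(\sigma_\coupon^2)$ that you were aiming for.
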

\noindent Numerical evidence in support of Conjecture~\ref{conj:off-critical variance} is presented in Section~\ref{subsec:off-critical moments}.

One consequence of Theorem~\ref{thm: d=1} is that $\sigma_\coupling(L)\asymp\rel(L)$ when $d=1$. While no precise asymptotics appear to be
known for $\rel$ when $d>1$, from a physical standpoint one expects that $\rel(L)\asymp L^d$ for $p\neq\pc$, in any dimension $d$. Under this
additional hypothesis, Conjecture~\ref{conj:off-critical variance} is equivalent to the conjecture that $\sigma_\coupling(L)\asymp \rel(L)$.
We shall return to this observation shortly.

Combining Conjectures~\ref{conj:off-critical mean} and~\ref{conj:off-critical variance} with~\eqref{coupon collector mean} and~\eqref{coupon
  collector variance} implies $\sigma_\coupling(L)/\mu_\coupling(L)$ goes to zero as $L\to\infty$. It then follows from Chebyshev's
inequality that for any $\epsilon>0$
$$
\PP[(1-\epsilon)\mu_\coupling(L) < \coupling_L < (1+\epsilon)\mu_\coupling(L)] \ge 1 - o(1), \qquad L\to\infty.
$$
While the moments of $\coupling_L$ do not behave like the corresponding moments of $\coupon_L$ at $\pc$, our numerical results
do suggest that $\mu_\coupling(L)$ remains the dominant time scale at criticality when $q<\qc$.
\begin{conjecture}
\label{conj:mean dominates standard deviation}
Consider the FK heat-bath coupling on $\ZZ_L^d$ with $d\geq 2$, $q \geq 1$ and $p\in(0,1)$ such that if $q\ge\qc$ then $p\neq\pc$. 
Then $\sigma_\coupling(L)/\mu_\coupling(L)\to0$ as $L\to\infty$.
\end{conjecture}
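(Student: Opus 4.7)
The plan is to split along the two regimes permitted by the hypothesis. In the off-critical case $p\neq\pc$, the statement is essentially a corollary of Conjectures~\ref{conj:off-critical mean} and~\ref{conj:off-critical variance}: those give $\mu_\coupling(L)\sim C(p,q,d)\,\mu_\coupon(L)$ and $\sigma_\coupling(L)\sim C(p,q,d)\,\sigma_\coupon(L)$, so by \eqref{coupon collector mean}--\eqref{coupon collector variance},
$$
\frac{\sigma_\coupling(L)}{\mu_\coupling(L)} \sim \frac{\pi/\sqrt{6}}{\ln m}\longrightarrow 0.
$$
Thus an unconditional proof in the off-critical regime reduces to the two earlier (still open) conjectures, which would be established by the programme outlined in Section~\ref{sec:fk moments}.

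The genuinely new content lies in the critical case $p=\pc$, $1\le q<\qc$, where the coupon-collector heuristic no longer dominates: at criticality there are many pivotal edges and, since $\tilde p<p$, the additional rejections on pivotal edges contribute non-trivially to $\coupling$. Here the strategy is to combine the deterministic lower bound $\mu_\coupling(L)\ge\mu_\coupon(L)\sim m\ln m$ from \eqref{coupon lower bounds coupling} with a sharp concentration bound of the form $\sigma_\coupling(L)=o(m\ln m)$. The tail bound \eqref{tail distribution is exp with scale expauto} on its own only yields $\sigma_\coupling(L)=O(m\,\expauto)$, which is much too weak since $\expauto$ is expected to grow polynomially in $L$ at criticality for $q<\qc$. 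To do better one would decompose $\coupling=\coupon+R$ and argue that in the regime $q<\qc$ the disagreement configuration $\top_t\setminus\bottom_t$ has subcritical-type tails once the chain has mixed locally, so that the residual $R$ concentrates on a scale $O(\expauto)=o(m\ln m)$.

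The main obstacle is proving this sharp concentration of the residual $R$ at criticality. A natural route is a Talagrand-type or martingale concentration inequality applied to $\coupling$ viewed as a monotone function of the iid input sequence $(\sE_t,U_t)_{t\ge 1}$, targeting sub-Gaussian fluctuations of order $\sqrt{\mu_\coupling\,\expauto}$; this would suffice provided $\expauto=o(\mu_\coupling)$, which is itself part of the asymptotic picture put forward in Section~\ref{sec:fk moments}. Any such inequality requires an effective bound on how much a single change in one coordinate $(\sE_t,U_t)$ can shift $\coupling$, and this in turn reduces to FK-geometric estimates on the propagation of local disagreements at criticality. Such estimates are presently available only in $d=2$ and only for certain functionals, so a complete proof of the critical case appears to demand substantial new input on the geometry of the critical FK model for $q<\qc$.
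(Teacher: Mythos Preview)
The statement is a \emph{conjecture}, and the paper does not prove it. The paper's support consists of two ingredients: (i) the observation, made in the text just before Conjecture~\ref{conj:mean dominates standard deviation}, that the off-critical case follows immediately from Conjectures~\ref{conj:off-critical mean} and~\ref{conj:off-critical variance} together with \eqref{coupon collector mean}--\eqref{coupon collector variance}; and (ii) purely numerical evidence at criticality (Section~\ref{subsec:critical moments}, Fig.~\ref{fig:ratios_mean_std}), where Monte Carlo estimates of $\mu_\coupling/\sigma_\coupling$ are seen to diverge roughly like a power of $\ln L$ for $q<\qc$.

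Your off-critical reduction is exactly the paper's own argument, so there is nothing to add there. Your critical-case discussion goes well beyond what the paper attempts: the paper offers no proof strategy at all at $p=\pc$, only simulations. The route you sketch --- writing $\coupling=\coupon+R$ and seeking concentration of the residual $R$ on scale $O(\expauto)=o(m\ln m)$ via a bounded-differences or Talagrand-type inequality --- is a plausible programme, and you are right that the bottleneck is controlling how a single coordinate change in the noise propagates through the monotone coupling at criticality. You correctly flag that this reduces to geometric input on critical FK disagreement percolation that is not currently available in the generality needed. One point to be careful about: the target $\sigma_\coupling=o(m\ln m)$ is not quite what you need, since at criticality $\mu_\coupling$ itself grows faster than $m\ln m$ (the paper observes $\mu_\coupling/\mu_\coupon\to\infty$); the relevant comparison is $\sigma_\coupling=o(\mu_\coupling)$, and your heuristic $\sigma_\coupling\asymp\expauto$ with $\expauto=o(\mu_\coupling)$ is consistent with the paper's Conjecture~\ref{conj:variance vs relaxation time} combined with Conjecture~\ref{conj:mean dominates standard deviation} itself, so there is some circularity to untangle before this becomes a genuine proof outline.
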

\noindent Numerical evidence in support of Conjecture~\ref{conj:mean dominates standard deviation} is presented in Section~\ref{subsec:critical moments}. 
Our numerical results suggest that Conjecture~\ref{conj:mean dominates standard deviation} does not hold at $p=\pc$ when $q\ge\qc$.

In light of Conjectures~\ref{conj:off-critical mean} and~\ref{conj:off-critical variance}, one is tempted to conjecture further that
Part~\ref{thm:d=1 coupling time distribution} of Theorem~\ref{thm: d=1}, the Gumbel limit law, also extends to the case $d>1$ in the
off-critical regime. Section~\ref{subsec:off-critical limiting distribution} provides strong numerical evidence to support this claim. What
is perhaps more surprising, however, is that the numerical results of Section~\ref{subsec:critical limiting distribution} strongly suggest
that the Gumbel limit law holds even \emph{at} the critical point, provided $q<\qc$. This is despite the fact that $\mu_\coupling(L)$ and
$\sigma_\coupling(L)$ certainly \emph{do not} behave like the analogous moments of $\coupon_L$ at $p=\pc$. In this sense, it seems
$\coupling_L$ displays a ``superuniversal'' central limit theorem, independent of $q$, for all $q<\qc$. Conjecture~\ref{conj:limiting
  distribution} formalizes this claim.
\begin{conjecture}[Limiting Distribution]
\label{conj:limiting distribution}
Consider the FK heat-bath coupling on $\ZZ_L^d$ with $d\geq 2$, $q \geq 1$ and $p\in(0,1)$ such that if $q\ge\qc$ then $p\neq\pc$. 
Then
$$
\lim_{L\rightarrow \infty} \PP[\coupling_L \leq \mu_\coupling(L) + x \sigma_\coupling(L)]= G(x), \qquad \text{ for each } x\in \RR.
$$
\end{conjecture}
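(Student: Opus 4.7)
The plan is to prove the stronger statement that, in the regimes covered by the conjecture, $\coupling_L$ is asymptotically equivalent (in distribution, after standardization) to a simple linear function of the coupon collector time $\coupon_L$, whose Gumbel limit is the classical result~\eqref{coupon collector CLT}. Specifically, in view of Conjectures~\ref{conj:off-critical mean}--\ref{conj:off-critical variance}, I would try to show that off-criticality there is a constant $C(p,q,d) \ge 1$ such that
\begin{equation*}
\coupling_L = C(p,q,d)\,\coupon_L + R_L, \qquad R_L = o_\PP(\sigma_\coupon(L)),
\end{equation*}
so that the Gumbel limit for $\coupling_L$ follows from Slutsky's theorem and~\eqref{coupon collector CLT}. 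At criticality the coupon collector must be replaced by a different edge-indexed extreme statistic, but the underlying idea of reducing to a maximum of weakly dependent, approximately exponentially distributed waiting times remains.

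For the off-critical case I would first exploit monotonicity: by the argument preceding~\eqref{coupon lower bounds coupling}, a disagreement is created on edge $e$ at step $t$ only when $\sE_t=e$, $e$ is pivotal to $\bottom_{t-1}$ but not to $\top_{t-1}$, and $\tilde{p} < U_t \le p$, and any such disagreement can be erased at every subsequent update of $e$. Writing $\lastvisit_e:=\max\{t : \sE_t = e\}$ so that $\coupon_L=\max_e \lastvisit_e$, let $\tau_e$ be the number of updates of $e$ after $\lastvisit_e - c\,\rel(L)$ that are required for the contribution of $e$ to $\top \symdif \bottom$ to vanish permanently. A careful analysis should yield the representation $\coupling_L \approx \max_e(\lastvisit_e + \tau_e\,\Delta_e)$ for an effective per-update cost $\Delta_e$, with $\EE(\tau_e\,\Delta_e)/\EE(\lastvisit_e) \to C(p,q,d)-1$. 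The key probabilistic ingredient is that, in the off-critical regime, exponential decay of FK connectivities controls the propagation of pivotality-induced disagreements between distant edges, so that the $\tau_e$ for well-separated $e$ are close to independent and have exponential tails on the scale of $\rel(L)$.

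The final step is an extreme value theorem: for $m=|E(\ZZ_L^d)|$ approximately independent random variables with exponential-type tails and means growing like $\log m$, the rescaled maximum converges to the Gumbel distribution, exactly as in~\eqref{coupon collector CLT}. This step parallels, and relies on, the techniques developed in Appendix~\ref{appendix:coupon collecting} for $\coupon_L$; the only additional input needed is a quantitative bound on the independence defect between $\tau_e$ and $\tau_{e'}$ for edges at lattice distance $\gg \log L$, which the exponential decay of correlations supplies, combined with Theorem~\ref{thm:FK bounds on arbitrary graphs} to rule out pathological contributions from the tail of $\coupling_L$.

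The main obstacle is the critical regime $p=\pc$ with $1 \le q < \qc$. Here FK connectivities decay only polynomially, so the quantitative independence argument above fails, and one cannot straightforwardly reduce to the coupon collector statistic; indeed $\mu_\coupling$ and $\sigma_\coupling$ no longer have the same scaling as $\mu_\coupon$ and $\sigma_\coupon$. Nevertheless, the numerical evidence in Section~\ref{subsec:critical limiting distribution} strongly suggests that the universal Gumbel law with the \emph{same} functional form $G$ persists throughout the continuous-transition window. A rigorous treatment would likely require combining the polylogarithmic mixing-time bound of~\cite{GheissariLubetzky16} in $d=2$ with a multiscale or renormalization-group analysis of the coupled dynamics, together with an extreme-value theorem valid under only polynomial mixing conditions; such tools appear to be beyond current rigorous reach for general $q \in [1,\qc)$, which is precisely why Conjecture~\ref{conj:limiting distribution} is stated here only as a conjecture.
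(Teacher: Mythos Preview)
The statement you are addressing is explicitly a \emph{conjecture} in the paper, not a theorem; the paper offers no proof. The only support the paper provides is numerical: Monte Carlo histograms of the standardized coupling time $S=(\coupling-\mu_\coupling)/\sigma_\coupling$ compared against the Gumbel density (Figs.~\ref{fig:gumbel_off} and~\ref{fig:gumbel_cri_con}), together with maximum-likelihood fits of the GEV family~\eqref{eq:GEV} whose estimated parameters are consistent with the values~\eqref{G(x) parameters} corresponding to $G$. There is thus no paper proof to compare against.

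Your proposal is a heuristic research outline rather than a proof, and you correctly acknowledge this yourself for the critical case. But the off-critical sketch also has genuine gaps. The central claim --- that one can write $\coupling_L = C\,\coupon_L + R_L$ with $R_L = o_\PP(\sigma_\coupon(L))$ --- is precisely what must be established, and the route you suggest (a representation $\coupling_L \approx \max_e(\lastvisit_e + \tau_e\Delta_e)$ with approximately independent, exponentially-tailed $\tau_e$) is not made rigorous: neither the existence of such a representation, nor the control of the dependence among the $\tau_e$, nor the passage from ``approximate independence at distances $\gg \log L$'' to a Gumbel limit with the \emph{correct} centering and scaling, is actually carried out. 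Even in the $d=1$ case the paper treats rigorously in Section~\ref{sec:FK one dimension}, the argument does not proceed via any such decomposition; it shows directly that $\PP(\coupling=\coupon)\to 1$ (Proposition~\ref{prop:coupon and coupling are asymptotic when d=1}), which forces $C=1$ and bypasses any extreme-value theorem under dependence. Extending that exact-equality argument to $d\ge 2$ fails because cycles abound and pivotality is no longer globally controlled by a single unoccupied edge, so a genuinely new idea is needed --- which is exactly why the paper leaves the statement as a conjecture.
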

\noindent Numerical evidence in support of Conjecture~\ref{conj:limiting distribution} is presented in Section~\ref{sec:fk limiting distribution}.
Our numerical results suggest the Gumbel limit law does not hold at $p=\pc$ when $q > \qc$. The special case $(p,q)=(\pc,\qc)$
appears to be rather subtle, and we are hesitant to make any predictions concerning it.

If we assume that Conjecture~\ref{conj:limiting distribution} is correct, then combining it with Theorem~\ref{thm:FK bounds on arbitrary
  graphs} suggests that $\sigma_\coupling(L)$ is asymptotic to $\expauto(L)$ as $L\to\infty$. Indeed, setting $t=\mu_L + x\,\sigma_L$
in~\eqref{tail distribution is exp with scale expauto} implies
$$
-\ln(2) - \frac{\mu_\coupling}{\expauto} - \frac{\sigma_\coupling}{\expauto}\,x \le 
\ln \PP(\coupling_L > \mu_\coupling + x \, \sigma_\coupling)
\le
d(\ln(\psi)+2)\,L^d - \frac{\mu_\coupling}{\expauto} - \frac{\sigma_\coupling}{\expauto}\,x.
$$
However, it is easily obtained from~\eqref{gumbel distribution function} that
$$
\ln[1-G(x)] \sim -\gamma -\frac{\pi}{\sqrt{6}}\,x, \qquad x\to\infty.
$$
Combining these facts with Conjecture~\ref{conj:limiting distribution} then motivates the following conjecture.
\begin{conjecture}[Variance]
\label{conj:variance vs relaxation time}
Consider the FK heat-bath coupling on $\ZZ_L^d$ with $d\geq 2$, $q \geq 1$ and $p\in(0,1)$ such that if $q\ge\qc$ then $p\neq\pc$. 
Then
$$
\sigma_\coupling(L) \sim \frac{\pi}{\sqrt{6}} \expauto(L), \qquad L\to\infty.
$$
\end{conjecture}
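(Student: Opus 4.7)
The plan is to derive the conjecture from Conjecture~\ref{conj:limiting distribution} (the Gumbel limit) by combining it with the exponential tail bounds of Theorem~\ref{thm:FK bounds on arbitrary graphs}. The guiding heuristic is that a Gumbel-distributed random variable with standard deviation $\sigma$ has upper tail decaying exponentially with rate $\pi/(\sqrt{6}\,\sigma)$, whereas Theorem~\ref{thm:FK bounds on arbitrary graphs} shows that $\PP(\coupling_L>t)$ decays, up to prefactors, at rate $1/\expauto$. Matching these two rates should yield $\sigma_\coupling \sim \pi\,\expauto/\sqrt{6}$.

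Operationally, evaluating the bounds of Theorem~\ref{thm:FK bounds on arbitrary graphs} at $t=\mu_\coupling + x\,\sigma_\coupling$ and taking logarithms gives
\[
-\ln 2 - \frac{\mu_\coupling}{\expauto} - \frac{\sigma_\coupling}{\expauto}\,x \le
\ln \PP(\coupling_L > \mu_\coupling + x\,\sigma_\coupling) \le
d(\ln\psi+2)\,L^d - \frac{\mu_\coupling}{\expauto} - \frac{\sigma_\coupling}{\expauto}\,x.
\]
For fixed $x$, Conjecture~\ref{conj:limiting distribution} implies the middle quantity tends to $\ln(1-G(x))$ as $L\to\infty$, which is itself asymptotic to $-\gamma - \pi x/\sqrt{6}$ as $x\to\infty$. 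Formally equating the coefficient of $x$ on each side then forces $\sigma_\coupling/\expauto \to \pi/\sqrt{6}$, which is exactly the content of the conjecture. Converting this slope-matching heuristic into rigorous two-sided bounds on $\sigma_\coupling/\expauto$ is the principal step.

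The main obstacle is that the ``constant'' terms in the display above are not bounded in $L$. Combining Conjectures~\ref{conj:off-critical mean}--\ref{conj:off-critical variance} with the physical expectation $\rel \asymp L^d$ off criticality, one expects $\mu_\coupling/\expauto \asymp \ln L \to \infty$, while the additive prefactor $d(\ln\psi+2)L^d$ in the upper bound diverges even faster. Consequently, at the natural scale $t=\mu_\coupling+x\,\sigma_\coupling$ with $x$ fixed, both bounds degenerate to trivialities, and the naive slope match cannot be executed as stated. The new input required is a substantial sharpening of the tail estimates in which the subexponential prefactors are controlled. A natural reformulation is to seek an enhanced limit law stating that $(\coupling_L - a_L)/\expauto(L)$ converges in distribution to a standard (unit-scale) Gumbel random variable for an appropriate centering $a_L$, from which Conjecture~\ref{conj:variance vs relaxation time} follows immediately by matching scale parameters; establishing such a law appears to require a spectral analysis of the coupled heat-bath kernel refined enough to isolate the contribution of the subdominant eigenvalue of $P$. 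This is where I expect the bulk of the difficulty to lie, with the asymptotic analysis of the coupon collector in Appendix~\ref{appendix:coupon collecting} (underpinning the proof of Theorem~\ref{thm: d=1}) serving as a useful one-dimensional template.
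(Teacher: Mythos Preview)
Your heuristic derivation is precisely the one the paper gives as \emph{motivation} for Conjecture~\ref{conj:variance vs relaxation time}: substitute $t=\mu_\coupling+x\,\sigma_\coupling$ into the tail bounds of Theorem~\ref{thm:FK bounds on arbitrary graphs}, compare with the large-$x$ asymptotics $\ln[1-G(x)]\sim -\gamma-\pi x/\sqrt{6}$ coming from Conjecture~\ref{conj:limiting distribution}, and match the $x$-slopes. You have also correctly diagnosed why this is only a heuristic and not a proof: the additive terms $\mu_\coupling/\expauto$ and $d(\ln\psi+2)L^d$ diverge with $L$, so for fixed $x$ the two-sided inequality gives no useful constraint in the limit, and the order of the $L\to\infty$ and $x\to\infty$ limits cannot be interchanged without additional input.

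The key point is that the paper does \emph{not} prove this statement either; it is presented as a conjecture. The slope-matching argument you reproduce is offered in the paper purely as motivation, and the actual support the paper provides is numerical (Section~\ref{subsec:fk relaxation time}): Monte Carlo estimates of $\rho_{\sN}(k\sigma_\coupling)$ collapse onto the line $-\pi k/\sqrt{6}$, and direct estimates of $\sigma_\coupling/\expauto$ converge to $\pi/\sqrt{6}$. So there is no proof in the paper for you to compare against, and your identification of the gap---that one would need a sharper limit theorem of the form $(\coupling_L-a_L)/\expauto(L)\Rightarrow$ Gumbel, or equivalently tail estimates with controlled prefactors---is a reasonable description of what a genuine proof would require.
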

\noindent Numerical evidence in support of Conjecture~\ref{conj:variance vs relaxation time} is presented in Section~\ref{subsec:fk relaxation time}.

\begin{remark}
\label{rem:cutoff}
Recall that a sequence of chains has a \emph{cutoff}~\cite{LevinPeresWilmer09} if, for all $\epsilon>0$, we have
$\mix(L,\epsilon)/\mix(L,1-\epsilon)\to1$ as $L\to\infty$.  A necessary condition~\cite[Proposition 18.4]{LevinPeresWilmer09} for cut-off is
that $\mix(L)/\rel(L)\to\infty$ as $L\to\infty$.  If one assumes the validity of Conjectures~\ref{conj:mean dominates standard deviation}
and~\ref{conj:variance vs relaxation time}, and also assumes that $\mix\asymp\mu_\coupling(L)$, then this necessary condition will be
satisfied for the FK heat-bath process on $\ZZ_L^d$ with $d\geq 1$, $q \geq 1$ and any $p\in(0,1)$ such that if $q\ge\qc$ then
$p\neq\pc$. It is therefore tempting to speculate that the FK heat-bath process exhibits cutoff for all such parameter choices.
\end{remark}

For comparison, in Section~\ref{sec:Ising} we consider analogous questions for the single-spin Ising heat-bath process.  Above the critical
temperature, our results suggest the behaviour is identical to that conjectured above for the FK heat-bath process. Specifically, the mean
and variance of the coupling time are asymptotic to a constant $C\ge1$ multiple of their coupon collector analogues, the standard deviation
is asymptotic to $(\pi/\sqrt{6})\,\expauto$, and the standardized quantity $(\coupling-\mu_\coupling)/\sigma_\coupling$ has limiting
distribution $G(x)$.  At the critical temperature, however, the behaviour is somewhat different. In that case, our evidence suggests
$\sigma_\coupling/\mu_\coupling$ tends to a positive constant, rather than zero.  Moreover, while we do still observe that
$(\coupling-\mu_\coupling)/\sigma_\coupling$ has a non-degenerate limiting distribution, this distribution does not appear to be $G(x)$. We
have not attempted to identify the form of the limiting distribution in this case. Finally, we again find strong evidence that
$\sigma_\coupling \sim C \,\expauto$, but now with $C\neq\pi/\sqrt{6}$.  We state our conjectured behaviour for the Ising heat-bath process
more formally in Conjecture~\ref{conj:ising heat-bath}, in Section~\ref{subsec:ising heat-bath results}.

The observation that $\sigma_\coupling/\mu_\coupling$ tends to zero for the critical FK heat-bath process, but not the critical Ising
heat-bath process, provides another perspective on the improved efficiency of the former compared with the latter, over and above the
empirical observation of critical speeding-up and smaller relaxation time~\cite{DengGaroniSokal07_sweeny}. Moreover, if one postulates
(admittedly, in the absence of any significant evidence) that $\mu_\coupling\asymp \mix$, and assumes the validity of
Conjecture~\ref{conj:variance vs relaxation time} and its analogue for the Ising heat-bath process, then one concludes that $\mix/\rel$
diverges for the critical FK heat-bath process, but not for the critical Ising heat-bath process. As noted in Remark~\ref{rem:cutoff}, this would
immediately rule out cutoff in the Ising heat-bath process, but still allow for its existence in the FK heat-bath process.

\section{Moments}
\label{sec:fk moments}
We now present numerical evidence in support of Conjectures~\ref{conj:off-critical mean}, \ref{conj:off-critical variance}, \ref{conj:mean
  dominates standard deviation} and~\ref{conj:variance vs relaxation time}.  As discussed in Section~\ref{subsec:definitions}, for $d=2$ the
exact value of $\pc(q)$ is known, and it is known that $\qc=4$. Neither $\pc(q)$ nor $\qc$ are known when $d=3$. However, numerical
studies~\cite{DengGaroniSokalZhou,Hartmann05,Gliozzi02} of the case $q=2.2$ have provided convincing evidence that the transition at $q=2.2$
is continuous, suggesting $\qc>2.2$. In our simulations for $d=3$ we relied on the following estimated critical points:
$\pc(1.5)=0.31157497$, $\pc(1.8)=0.34096070$, $\pc(2)=0.35809124$ and $\pc(2.2)=0.37361401$. The values for $q=1.5, 1.8, 2.2$ are taken
from~\cite{DengGaroniSokalZhou}, while the value for $q=2$ is taken from~\cite{DengBlote03}.

\subsection{Off criticality}
\label{subsec:off-critical moments}
We begin by considering the off-critical mean. In order to test Conjecture~\ref{conj:off-critical mean},
Fig.~\ref{fig:mean_ratio_off_critical} plots Monte Carlo estimates of $\mu_\coupling$, scaled by the exact form of $\mu_\coupon$
from~\eqref{coupon collector mean}, on a linear-log scale, for $d=2,3$, with a variety of $q$ values, and off-critical $p$ values.  The
agreement is excellent.  The data are clearly converging to a constant $C(p,q,d)\ge1$.  The solid black line in
Fig.~\ref{fig:mean_ratio_off_critical} corresponds to the case $q=1$, for which $C(p,q,d)=1$ identically.  It
is conceivable, from the data at hand, that $C(p,q,d) = 1$ for all off-critical parameter choices $(p,q,d)$, however the current evidence
does not seem strong enough for us to actually conjecture that this is the case.
\begin{figure}
\centering
\subfigure[\label{fig:mean_ratio_off_critical}]{\includegraphics[width=0.49\columnwidth]{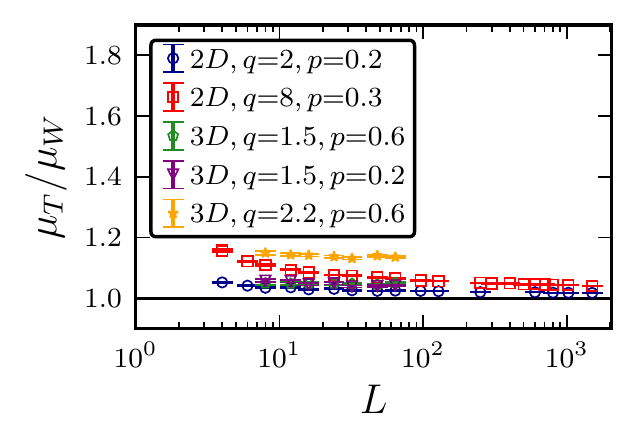}}
\subfigure[\label{fig:std_ratio_off_critical}]{\includegraphics[width=0.49\columnwidth]{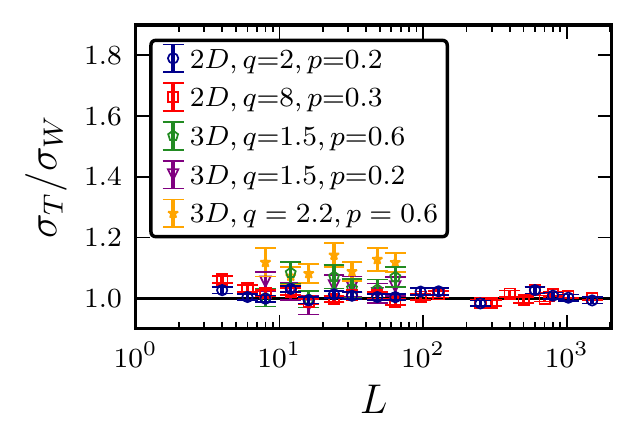}}
\caption{
Monte Carlo estimates of $\mu_\coupling/\mu_\coupon$ (left) and $\sigma_\coupling/\sigma_\coupon$ (right) for the off-critical
random-cluster model with $d=2,3$, with various cluster fugacities $q$ and bond densities $p$. Error bars corresponding to one standard error are shown.
}
\end{figure}
 
Analogously, in order to test Conjecture~\ref{conj:off-critical variance}, Fig.~\ref{fig:std_ratio_off_critical} plots Monte Carlo
estimates of $\sigma_\coupling/\sigma_\coupon$ for $d=2,3$, with a variety of $q$ values, and off-critical $p$ values, with $\sigma_\coupon$
given by~\eqref{coupon collector variance}.  The agreement is again excellent. The solid black line in
Fig.~\ref{fig:std_ratio_off_critical} again corresponds to the case $q=1$, for which $C(p,q,d)=1$
identically. It is again conceivable, based on Fig.~\ref{fig:std_ratio_off_critical}, that $C(p,q,d) = 1$ for all off-critical
parameter choices $(p,q,d)$. 

\subsection{Criticality}
\label{subsec:critical moments}
In this section, we consider $\mu_\coupling$ and $\sigma_\coupling$ at criticality when $q\le\qc$.  We begin by providing numerical evidence
in support of Conjecture~\ref{conj:mean dominates standard deviation}.  Recall that for $q=1$, we have from~\eqref{coupon collector mean}
and~\eqref{coupon collector variance} that $\mu_\coupling/\sigma_\coupling\sim C \ln(L)$ as $L\to\infty$, with $C>0$. It is therefore
natural to ask whether the ratio $\mu_\coupling/\sigma_\coupling$ continues to behave as a simple function of $\ln(L)$ when $q>1$. We
therefore present in Fig.~\ref{fig:ratios_mean_std} a log-log plot of the ratio $\mu_\coupling/\sigma_\coupling$ vs $\ln(L)$, for various
critical random-cluster model instances in two and three dimensions.  Except for $q=\qc=4$ in two dimensions, we observe that
$\mu_\coupling/\sigma_\coupling$ appears to become asymptotic to a straight line with positive slope, on a log-log scale. It appears that
the ratio approaches either a constant or weakly increases with $L$ at $d=2$ and $q=\qc=4$.  Similarly, in three dimensions, we observe that
$\mu_\coupling/\sigma_\coupling$ appears to increase more slowly in $L$ as $q$ approaches $\qc$.  These observations are consistent with the
following possible scenario: $\mu_\coupling/\sigma_\coupling \sim \ln(L)^w$ as $L\to\infty$ with an exponent $w$ that equals 1 at $q=1$ and
which decreases monotonically with $q$ before finally vanishing at $q=\qc$.  Regardless, we conclude that $\mu_\coupling/\sigma_\coupling$
diverges with $L$ at criticality when $q<\qc$, which supports Conjecture~\ref{conj:mean dominates standard deviation}.
\begin{figure}
  \includegraphics[width=0.49\columnwidth]{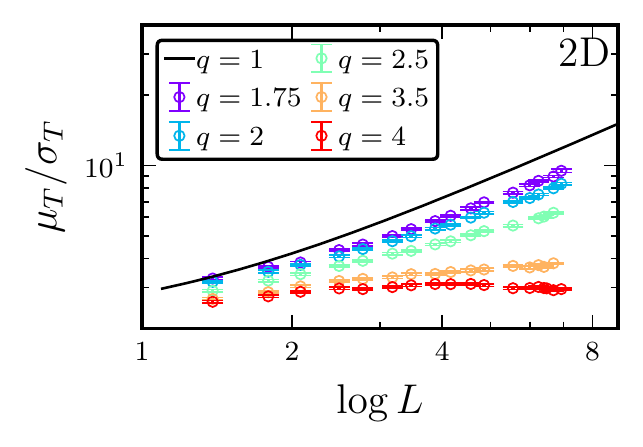}
  \includegraphics[width=0.49\columnwidth]{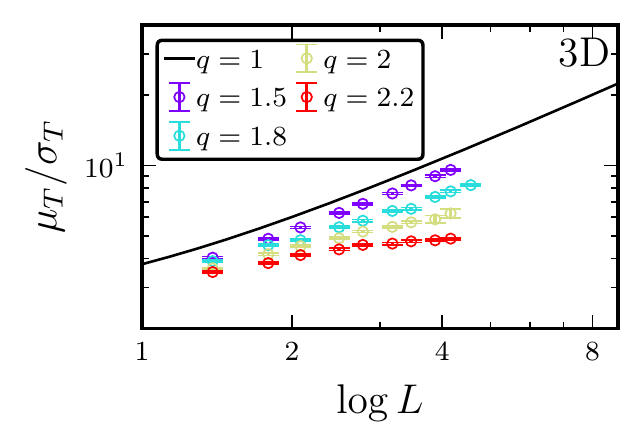}
\centering
\caption{\label{fig:ratios_mean_std}
Monte Carlo estimates of $\mu_\coupling/\sigma_\coupling$ for the critical random-cluster model with $d=2$ (left) and $d=3$ (right), 
with various cluster fugacities $q\le\qc$. 
Error bars corresponding to one standard error are shown.
}
\end{figure}

We next consider $\sigma_\coupling/\sigma_\coupon$. Fig.~\ref{fig:variance_ratio_critical} plots $\sigma_\coupling/\sigma_\coupon$ for
$d=2,3$ with various values of $q\le\qc$. It is clear that $\sigma_\coupling/\sigma_\coupon\to\infty$, which strongly suggests that
Conjecture~\ref{conj:off-critical variance} cannot be extended to $p=\pc$.  As we discuss in more detail in Section~\ref{subsec:dynamic
  critical exponents}, the ratio $\sigma_\coupling/\sigma_\coupon$ appears to grow at least as fast as $\ln(L)$.  Combining this
observation, together with~\eqref{coupon collector mean} and~\eqref{coupon collector variance}, with the above observation that
$\mu_\coupling/\sigma_\coupling$ diverges, implies that $\mu_\coupling/\mu_\coupon$ also diverges, which also rules out the possibility that
Conjecture~\ref{conj:off-critical mean} extends to $p=\pc$. Direct numerical data for the ratio $\mu_\coupling/\mu_\coupon$ support this conclusion.
 \begin{figure}
 \centering
\includegraphics[width=0.49\columnwidth]{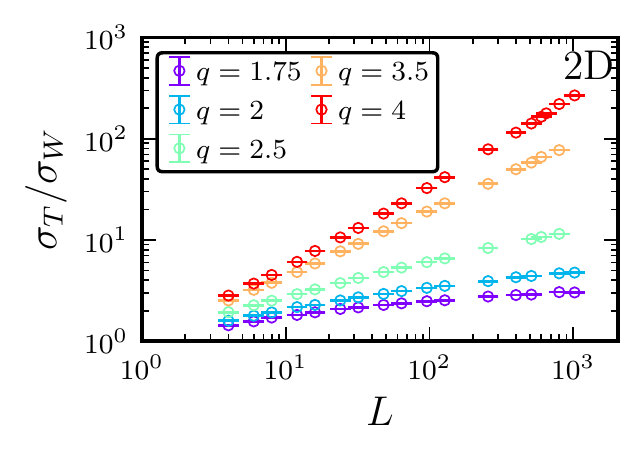}
\includegraphics[width=0.49\columnwidth]{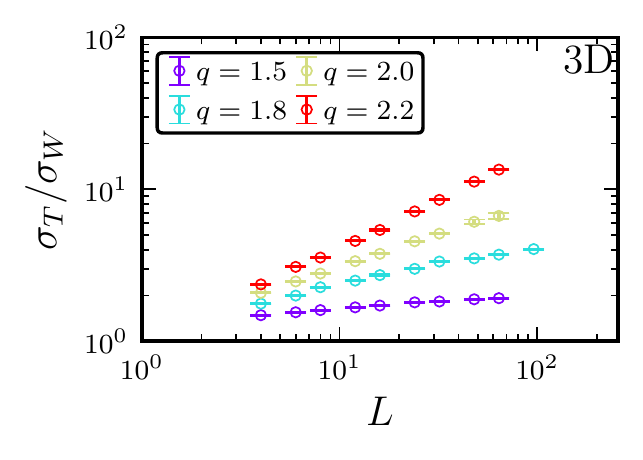}
  \caption{\label{fig:variance_ratio_critical}
 Monte Carlo estimates of $\sigma_\coupling/\sigma_\coupon$ for the critical random-cluster model with $d=2$ (left) and $d=3$ (right), with
 various cluster fugacities $q\le\qc$.
Error bars corresponding to one standard error are shown.
 }
 \end{figure}

\subsection{Variance and relaxation time}
\label{subsec:fk relaxation time}
We now provide evidence in support of Conjecture~\ref{conj:variance vs relaxation time}, in both the critical and off-critical cases.  Let
$(X_t)_{t\in\naturals}$ be a stationary FK heat-bath process, and define $(\sN_t)_{t\in\naturals}$ via $\sN_t=\sN(X_t)$, where $\sN(A)=|A|$
is the number of occupied edges. Since $\sN$ is a strictly increasing function, Proposition~\ref{prop:decay of rho_N} in
Appendix~\ref{appendix:autocorrelation functions of increasing observables} implies that
\begin{equation}
\rho_{\sN}(t) := \frac{\cov(\sN_0,\sN_t)}{\var(\sN_0)} \sim C e^{-t/\expauto}, \qquad t\to\infty,
\label{eq:asymptotics of rho_N}
\end{equation}
for some (parameter-dependent) constant $C>0$. Assuming the validity of Conjecture~\ref{conj:variance vs relaxation time}, it follows
from~\eqref{eq:asymptotics of rho_N} that
\begin{equation}
\ln\rho_{\sN}(k\,\sigma_\coupling) \sim -\frac{\pi}{\sqrt{6}} k 
\label{scaled rho ansatz}
\end{equation}
as $k$ and $L$ tend to infinity. 

For a given time lag $t$, we estimated $\rho_{\sN}(t)$ by performing around 100 independent simulations, estimating $\rho_{\sN}(t)$ from
each simulation using the standard time series estimator (see e.g.~\cite[Equation (3.9)]{Sokal97}), and then calculating the sample mean over
independent runs to obtain our final estimate of $\rho_{\sN}(t)$.  Fig.~\ref{fig:rho_off} plots the resulting estimates of $\rho_{\sN}(k\,
\sigma_\coupling)$ versus $k$, for a variety of values of $q$, $p$ and $L$.  The data collapse evident from the figure clearly supports the
expectation~\eqref{scaled rho ansatz}, and therefore provides direct evidence to support Conjecture~\ref{conj:variance vs relaxation time}.
 \begin{figure}
 \centering
 \includegraphics[width=0.75\columnwidth]{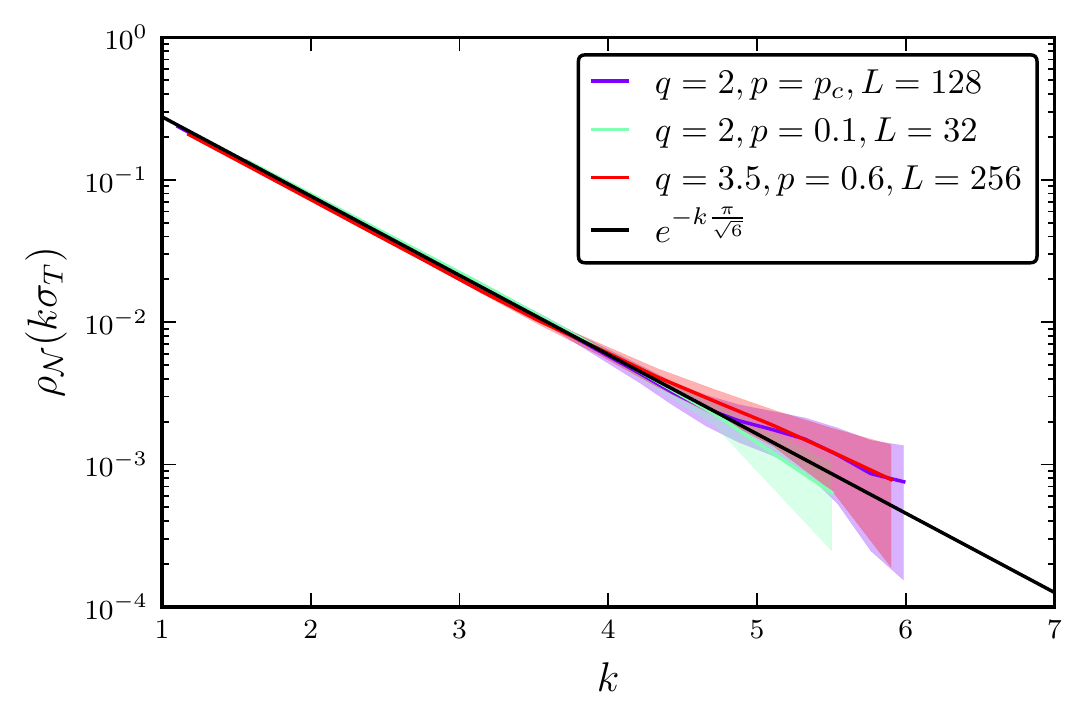}
 \caption{\label{fig:rho_off}
 Monte Carlo estimates of $\ln \rho_{\sN}(\sigma_\coupling\,k)$ for the random-cluster model with $d=2$, and various values of
 $q<\qc$, $p$ and $L$. The pairs $(q,p)=(2,0.1)$ and $(q,p)=(3.5,0.6)$ are off-critical.
The filled area enclosing the curves correspond to one standard error.
 }
 \end{figure}

To further test Conjecture~\ref{conj:variance vs relaxation time}, we used~\eqref{eq:asymptotics of rho_N} to directly estimate $\expauto$,
and then compared these estimates with our estimates of $\sigma_\coupling$. To estimate $\expauto$ from an estimate of $\rho_{\sN}(t)$, we
fitted a linear function $a - t/b$ to the data for $(t,\ln\rho_{\sN}(t))$, with appropriate cutoffs imposed at both small $t$ (to avoid the
pre-asymptotic regime) and large $t$ (to reduce statistical noise). Using these estimates, Fig.~\ref{fig:sigma over tau} shows the $L$ dependence of the ratio
$\sigma_\coupling/\expauto$ for a variety of critical and off-critical $(p,q)$ pairs, with $q<\qc$. The solid black line corresponds to the
asymptote $\pi/\sqrt{6}$ predicted by Conjecture~\ref{conj:variance vs relaxation time}. The data collapse is clearly excellent, lending further
strong support to the conjecture.
 \begin{figure}
 \centering
 \includegraphics[width=0.75\columnwidth]{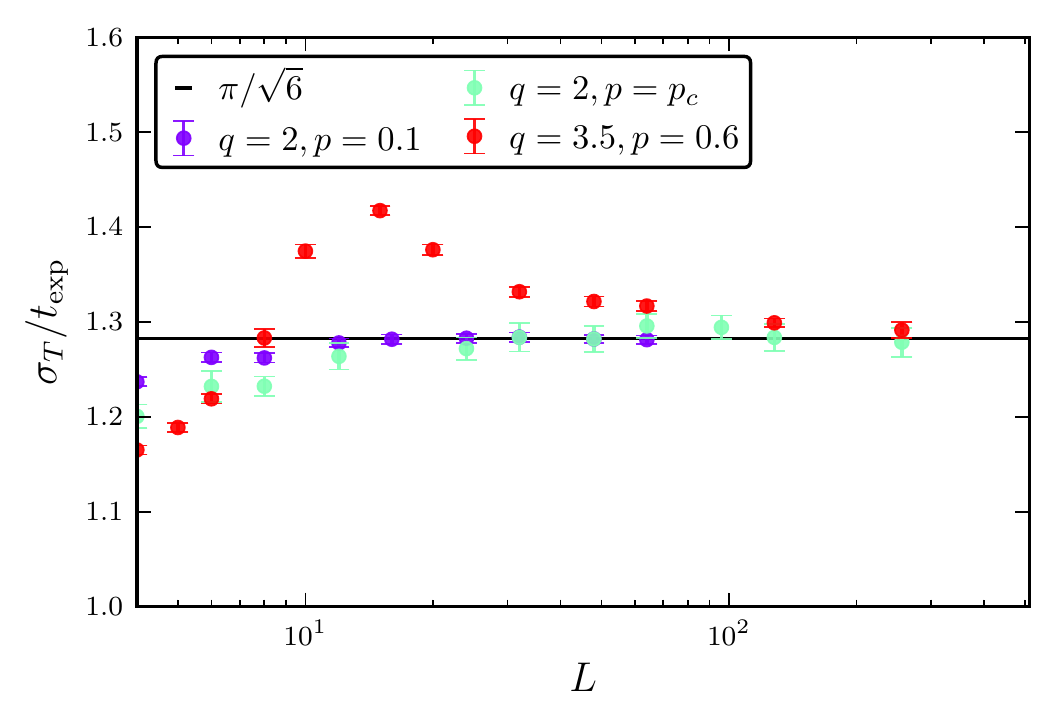}
 \caption{\label{fig:sigma over tau}
 Monte Carlo estimates of $\sigma_\coupling/\expauto$ for the random-cluster model with $d=2$, and various values of $q<\qc$, $p$ and
 $L$. The pairs $(q,p)=(2,0.1)$ and $(q,p)=(3.5,0.6)$ are off-critical.  The solid black line corresponds to the horizontal line
 $\pi/\sqrt{6}$.
Error bars corresponding to one standard error are shown.
 }
 \end{figure}

\subsection{Dynamic critical exponents}
\label{subsec:dynamic critical exponents}
We now briefly discuss a practical application of Conjecture~\ref{conj:variance vs relaxation time}.  Assuming the validity of
Conjectures~\ref{conj:off-critical variance} and~\ref{conj:variance vs relaxation time}, and combining them with~\eqref{coupon collector
  variance}, confirms the intuition mentioned in Section~\ref{subsec:results} that $\expauto\sim L^d$ off criticality. Moreover, a closer
inspection of the data in Fig.~\ref{fig:variance_ratio_critical} suggests that, at least for sufficiently large $q <\qc$, we have
$\sigma_\coupling/\sigma_\coupon\sim L^{z}$ for some exponent $z=z(q,d)>0$. Under the assumption of Conjecture~\ref{conj:variance vs
  relaxation time}, this is then equivalent to $\expauto\sim L^{d+z}$. This behaviour, which is precisely the phenomenon of
critical slowing-down, is expected on physical grounds~\cite{Sokal97} to occur generically at $p=\pc$ when $q < \qc$. The exponent $z$,
controlling the divergence of $\expauto/L^d$ at continuous phase transitions, is an example of a \emph{dynamic critical exponent}. It is
often denoted $z_{\exp}$ in the literature~\cite{Sokal97}.  While being of considerable physical and practical significance, the precise
estimation of $z_{\exp}$, even via simulation, is a highly non-trivial task. However, if Conjecture~\ref{conj:variance vs relaxation time}
holds, then $z_{\exp}$ for the FK heat-bath process can be estimated efficiently and reliably by considering the more tractable problem of
the asymptotics of $\sigma_\coupling$. For clarity, we denote the exponent governing the critical asymptotics of $\sigma_\coupling/L^d$ by
$z_\coupling$.

So motivated, we considered a number of $d$ and $q < \qc$ values, and fitted $\sigma_\coupling/L^d$ to both power-law and logarithmic
finite-size scaling ans\"atze, $aL^z + b$ and $a\ln(L) +b$, both with $b$ free and fixed to zero. For a given ansatz, the quality of the fit
was studied as we varied the lower cutoff on the $L$ values included in the fits.  Table \ref{tab:z estimates} summarises our best estimates
for $z_\coupling$, chosen to be the estimate resulting from the ansatz that yielded the highest confidence level, and stable estimates with
respect to a variation of the lower cutoff. For comparison, we also present corresponding values of $\alpha/\nu$, since a Li-Sokal type
bound~\cite{DengGaroniSokal07_sweeny} implies\footnote{Assuming the relevant exponents exist.} that $z_{\exp}\ge \alpha/\nu$. Here $\alpha$
and $\nu$ are the standard static critical exponents governing the specific heat and correlation length, respectively.  For $d=2$,
conjectured exact expressions for $\alpha$ and $\nu$ follow from the hyperscaling relation $d\nu=2-\alpha$, the identification of $1/\nu$
with the renormalization group thermal exponent, and~\cite[Equation (3.37)]{NienhuisJSP84}. For $d=3$, the reported values of $\alpha/\nu$
correspond to the estimates presented in~\cite{DengGaroniMachtaOssolaPolinSokal07}.
\begin{table}[htb]
\begin{center}
\begin{tabular}{cccc}
$d$ & $q$    & $\alpha/\nu$    & $z_{\coupling}$     \\
\hline \hline
$2$ & $1.75$ & $-0.1093$       & $0(\ln)$            \\
$2$ & $2$    & $0(\ln)$        & $0(\ln)$            \\
$2$ & $2.5$  & $0.2036$        & $0.315(3)$          \\
$2$ & $3$    & $0.4000$        & $0.491(4)$          \\
$2$ & $3.5$  & $0.6101$        & $0.662(2)$          \\
$3$ & $1.5$  & $-0.32(4)$      & $0.090(6)$          \\
$3$ & $1.8$  & $-0.15(5)$      & $0.233(4)$          \\
$3$ & $2$    & $0.174(1)$      & $0.435(9)$          \\
$3$ & $2.2$  & $0.50(4)$       & $0.646(5)$          \\
\end{tabular}
\end{center}
\caption{\label{tab:z estimates}
Estimated critical exponent $z_{\coupling}$ for a variety of values of $d$ and $q<\qc$. If Conjecture~\ref{conj:variance vs relaxation time}
holds, then $z_\coupling=z_{\exp}$. 
}
\end{table}

\section{Limiting Distribution}
\label{sec:fk limiting distribution}
We now turn our attention to the limiting distribution of the coupling time, and provide numerical evidence in support of
Conjecture~\ref{conj:limiting distribution}. To ease notation, in this section we introduce the standardized variable
\begin{equation}
\standard:=(\coupling-\mu_\coupling)/\sigma_\coupling.
\end{equation}

\subsection{Off criticality}
\label{subsec:off-critical limiting distribution}
In this section we present evidence supporting Conjecture~\ref{conj:limiting distribution} in the off-critical case. We defer discussion of the 
critical case until Section~\ref{subsec:critical limiting distribution}.

Fig.~\ref{fig:gumbel_off} compares histograms of $\standard$ with the probability density function corresponding to~\eqref{gumbel distribution function}.
The left panel corresponds to $d=2$ and $q=8$ at $p=0.3 < \pc(8,2)$.
The right panel corresponds to $d=3$ and $q=1.5$ at $p=0.2$;
for reference, it is estimated~\cite{DengGaroniSokalZhou} that $\pc(1.5,3)=0.31157497(59)$.
The agreement is clearly excellent, providing strong support for Conjecture~\ref{conj:limiting distribution}.

\begin{figure}
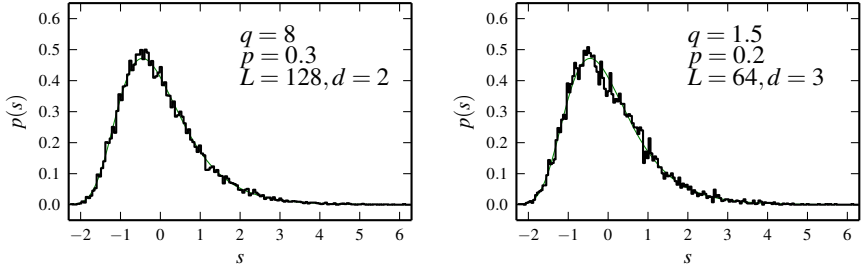

    \centering
\resizebox{0.49\columnwidth}{!}{\input{gumbel_off_q8_2d.pgf}}
\resizebox{0.49\columnwidth}{!}{\input{gumbel_off_q1p5_3d.pgf}}
  \caption{Histograms of off-critical $\standard$, with
  parameters as specified in the figure. The histograms are based on $19,000$ independent samples for $d=2$ and $11,000$
  for $d=3$. Here $p(s)$ denotes the probability density function of $\standard$. For comparison, the solid
  green line shows the probability density function corresponding to \eqref{gumbel distribution function}.\label{fig:gumbel_off}}
\end{figure}

We emphasize that the theoretical curve shown in Fig.~\ref{fig:gumbel_off} does not correspond to a fit to the data; the distribution $G(x)$ 
does not possess any free parameters. In order to obtain a \emph{quantitative} measure of how well the limiting distribution of $\standard$ is described by
$G(x)$, we therefore considered the three-parameter family of distributions known as the Generalized Extreme Value distribution (GEV), defined by the
distribution function
\begin{equation}
    F_{\rm GEV}\big(x; \xi,\eta,\theta\big) := 
\begin{cases} 
      e^{-e^{-(x-\eta)/\theta}} & \xi = 0, \\
      e^{-\big(1+ \xi (x-\eta)/\theta  \big)^{-1/\xi}} &  \xi \neq 0,
   \end{cases}
\label{eq:GEV}
\end{equation}
where $\xi,\eta\in\RR$ and $\theta>0$.  The support of $F_{\mathrm{GEV}}$ is $\RR$ for $\xi=0$, $[\eta-\theta/\xi,\infty)$ for $\xi > 0$, and
$(-\infty,\eta-\theta/\xi]$ for $\xi < 0$.  The case $\xi=0$ corresponds to the Gumbel family of distributions, and the specific values
\begin{equation}
   \xi = 0, \quad \eta = -\frac{\gamma\sqrt{6}}{ \pi} = -0.45005320754\ldots, \quad \theta = \frac{\sqrt{6}}{\pi} = 0.77969780123\ldots
\label{G(x) parameters}
\end{equation}
correspond to $G(x)$ as given in~\eqref{gumbel distribution function}.

Our consideration of $F_{\mathrm{GEV}}$ can be motivated as follows. Consider an iid sequence of random variables $X_1,X_2,\ldots$ and let
$M_n:=\max\{X_1,\ldots,X_n\}$. The extremal types theorem (see e.g.~\cite[Theorem 1.4.2]{LeadbetterLindgrenRootzen83}) states that if the
sequence $M_n$, appropriately standardized\footnote{I.e. $M_n\mapsto (M_n-b_n)/a_n$ for some deterministic sequences $a_n>0$ and $b_n$.}, has a
non-degenerate limit, then the limit must be a GEV distribution. To relate this observation to the coupling time, we can envision coarse-graining the
lattice into regions of size much larger than the spatial correlation length, which is finite off criticality.  To each such region we can
assign a local coupling time, defined to be the last time before $\coupling$ that the state (occupied or unoccupied) of each edge in that region is
the same in the top and bottom chains. Since the correlations between regions are weak, as a first approximation one can envision the local
coupling times as independent. Moreover, the coupling time of the system, $\coupling$, is the maximum of these local coupling times. It is therefore
natural to expect that if an appropriate standardization of $\coupling$ converges to a non-degenerate limit as $L\to\infty$, then the limit should
be of the form~\eqref{eq:GEV}.

We therefore fitted the ansatz~\eqref{eq:GEV} to our data for $\standard$, and computed maximum likelihood-estimates of the parameters
$\xi,\eta,\theta$.  For $d=2$, $q=8$, $p=0.3$ and $L=128$, (left panel of Fig.~\ref{fig:gumbel_off}) we obtain
\begin{equation}
\xi = 0.01(1) \quad  \eta =  -0.45(1) \quad \theta =  0.77(1),
\end{equation}
based on $19000$ independent samples, and with error bars are computed via bootstrap re-sampling~\cite{Young15}. These estimates are in perfect agreement 
with the parameter values corresponding to $G(x)$.
Similarly, for $d=3$, $q=1.5$, $p=0.2$ and $L=64$ (right panel in Fig.~\ref{fig:gumbel_off}), we obtain
\begin{equation}
    \xi = 0.01(1) \quad  \eta =  -0.45(2) \quad \theta =  0.78(1)
\end{equation}
based on $11000$ samples. Finally, we also considered $d=3$, $q=2.2$ and $p=0.6$, which is expected to be in the supercritical
regime~\cite{DengGaroniSokalZhou}, and obtained
\begin{equation}
    \xi = 0.00(1) \quad  \eta = -0.45(2)  \quad \theta =  0.79(1),
\end{equation}
based on $11000$ samples. In each case, the estimates of the GEV parameters are entirely consistent with the parameter values in~\eqref{G(x)
  parameters} corresponding to $G(x)$, as predicted in Conjecture~\ref{conj:limiting distribution}.

\subsection{Criticality}
\label{subsec:critical limiting distribution}
Although we have observed that $\mu_\coupling$ and $\sigma_\coupling$ display non-trivial $L$ dependencies when $p=\pc$,
we now present evidence that Conjecture~\ref{conj:limiting distribution} is valid at $p=\pc$ when $q<\qc$.

Fig.~\ref{fig:gumbel_cri_con} compares histograms of $\standard$ with the probability density function corresponding to~\eqref{gumbel
  distribution function}.
The left panel corresponds to $d=2$ and $q=3.5$, while the right panel corresponds to $d=3$ and $q=1.5$.
The agreement is clearly excellent, providing strong support for Conjecture~\ref{conj:limiting distribution}.

\begin{figure}
    \centering

\resizebox{0.49\columnwidth}{!}{\input{gumbel_critical_q3p5_2d.pgf}}
\resizebox{0.49\columnwidth}{!}{\input{gumbel_critical_q1p5_3d.pgf}}
  \caption{Histograms of $\standard$ at criticality, with parameters as specified in the figure.  The histograms are based on $15,000$
    independent samples for $d=2$ and $10,000$ for $d=3$.  Here $p(s)$ denotes the probability density function of $\standard$.  For
    comparison, the solid green line shows the probability density function corresponding to \eqref{gumbel distribution function}.
  \label{fig:gumbel_cri_con}}
\end{figure}

Analogous to our discussion of the off-critical case in Section~\ref{subsec:off-critical limiting distribution}, we can obtain a more
quantitative test of the agreement between the limiting distribution of $S$ and~\eqref{gumbel distribution function} by fitting the GEV
distribution, \eqref{eq:GEV}.  We considered a number of values of $q<\qc$ with $d=2,3$, and our results are summarized in
Table~\ref{tab:cri_genextreme}. The estimates of the GEV parameters $\eta, \theta, \xi$ are in good agreement
with the parameter values~\eqref{G(x) parameters} corresponding to $G(x)$. The combination of these numerical results strongly support the validity of 
Conjecture~\ref{conj:limiting distribution}.
\begin{table}[htb]
\begin{center}
\begin{tabular}{ccccccc}
\hline
 $d$   &   $q$	   &   $L$        & $\eta$       &  $\theta$   & $\xi$       &  $N_{\mathrm{s}}$   \\
\hline \hline
2      &    $1.75$ &   $ 1024$    &  $ -0.45(3)$ &  $ 0.80(2)$ & $-0.02(2) $ &  $3360$  \\   
2      &    $1.75$ &   $ 512 $    &  $ -0.45(2)$ &  $ 0.79(1)$ & $-0.01(2) $ &  $9800$  \\ 
2      &    $2.00$ &   $ 512 $    &  $ -0.45(2)$ &  $ 0.78(1)$ & $0.00(1) $  &  $9000$  \\ 
2      &    $2.00$ &   $ 800 $    &  $ -0.45(2)$ &  $ 0.80(1)$ & $-0.02(1) $ &  $9000$  \\ 
2      &    $2.50$ &   $ 512 $    &  $ -0.45(2)$ &  $ 0.79(2)$ & $-0.01(1) $ &  $9000$  \\ 
2      &    $3.00$ &   $ 512 $    &  $ -0.45(2)$ &  $ 0.78(1)$ & $0.01(1) $  &  $9000$  \\ 
2      &    $3.50$ &   $ 512 $    &  $ -0.46(2)$ &  $ 0.77(1)$ & $0.02(2)$   &  $8990$  \\ 
2      &    $3.50$ &   $ 800 $    &  $ -0.46(1)$ &  $ 0.77(1)$ & $0.01(1)$   &  $15730$ \\
3      &    $1.5$  &   $ 64 $     &  $ -0.45(1)$ &  $ 0.78(1)$ & $0.01(2) $  &  $10000$ \\
3      &    $1.8$  &   $ 64 $     &  $ -0.44(1)$ &  $ 0.81(1)$ & $0.03(1) $  &  $10000$ \\
3      &    $2.0$  &   $ 64 $     &  $ -0.45(2)$ &  $ 0.80(3)$ & $0.03(3) $  &  $700$   \\
3      &    $2.2$  &   $ 64 $     &  $ -0.44(1)$ &  $ 0.80(1)$ & $0.03(1)$   &  $10000$ \\
\hline \hline 
\end{tabular}
\end{center}
\caption{Parameter estimates obtained by fitting the GEV distribution~\eqref{eq:GEV} to the empirical distribution for $S$
  at $p=\pc$, for various choices of $q<\qc$. Here $N_{\mathrm{s}}$ denotes the number of samples.
  Error bars correspond to one standard error.
\label{tab:cri_genextreme}
}
\end{table}

We conclude this section with some comments on the case of $p=\pc$ and $q\ge\qc$, which is excluded from our statement of
Conjecture~\ref{conj:limiting distribution}.  Because of the slow mixing inherent at discontinuous phase transitions, it is much more
difficult to obtain accurate simulation results at $p=\pc$ when $q>\qc$. We did however perform a simulation study for $d=2$ at $\pc$ for
$q=5>\qc$. While it does appear that the standardized variable $S$ again converges to a non-degenerate limit, it appears that this limit is
not $G(x)$.  To illustrate this, we generated 10,000 samples of $\coupling$ with $L=256$, and obtained the following GEV parameter
estimates: $\xi = 0.19(2)$, $\eta = -0.49(2)$, $\theta= 0.62(1)$. The deviation of $\xi$ away from the Gumbel value $\xi=0$ seems to provide
strong evidence that Conjecture~\ref{conj:limiting distribution} cannot be extended to the case of $p=\pc$ when $q>\qc$.

The case $q=\qc$ is more subtle. In this case, it is not slow mixing that constitutes an impediment, but rather the notorious issue of
multiplicative logarithms arising in finite-size scaling ans\"atze.  We simulated the case $d=2$ and $q=\qc=4$ at $p=\pc$, at a variety of
different $L$ values. We again observe that the distribution of $S$ appears to converge to a non-degenerate limit.  The GEV distribution was
fitted to the data for $S$, and the corresponding estimates for $\eta, \theta, \xi$ are reported in Table~\ref{tab:gumbel_cri_q4}. 
The resulting estimates of $\theta$ and $\xi$ are not consistent with the values corresponding to $G(x)$. In particular, the estimates
suggest $\xi$ is strictly positive, which would rule out a Gumbel limit law. Therefore, based on these estimates, there does not appear to
be any evidence suggesting Conjecture~\ref{conj:limiting distribution} can be extended to the case of $(p,q)=(\pc,\qc)$. However, the
discrepancies of these parameter estimates with the values corresponding to $G(x)$ are relatively small. Therefore, we also believe that
there is insufficient evidence to conclude that the distribution of $S$ at $(p,q)=(\pc,\qc)$ is actually different to $G(x)$.
Determining the limiting distribution of $S$ at $(p,q)=(\pc,\qc)$ therefore remains an open problem. 
\begin{table}[htb]
\begin{center}
\begin{tabular}{ccccc}
$L$  & $\eta$ & $\theta$ & $\xi$ & $N_{\mathrm{s}}$ \\
\hline \hline
$ 128$   &  $-0.46(2)$ &  $0.73(1)$ & $0.05(2) $ &  $8990$ \\
$ 256$   &  $-0.47(2)$ &  $0.71(1)$ & $0.08(2) $ &  $9000$ \\
$ 512$   &  $-0.47(1)$ &  $0.73(1)$ & $0.06(1)$ &  $18940$ \\
$ 800$   &  $-0.47(1)$ &  $0.72(1)$ & $0.07(1)$ & $20000$ \\
$ 1024$  &  $-0.47(2)$ &  $0.72(1)$ & $0.07(1)$ & $9850$ \\
\hline \hline 
\end{tabular}
\end{center}
\caption{Parameter estimates obtained by fitting the GEV distribution~\eqref{eq:GEV} to the empirical distribution for $S$
  at $(p,q)=(\pc,\qc)$ and $d=2$, for various choices of $L$. Here $N_{\mathrm{s}}$ denotes the number of samples.
  Error bars correspond to one standard error.
\label{tab:gumbel_cri_q4}
}
\end{table}

\section{Arbitrary graphs}
\label{sec:FK arbitrary graphs}
In this section, we consider the FK heat-bath process on arbitrary graphs, and prove Theorem~\ref{thm:FK bounds on arbitrary graphs}.
\begin{proof}[Proof of Theorem~\ref{thm:FK bounds on arbitrary graphs}]
Consider the FK heat-bath coupling on a finite connected graph $G=(V,E)$ with $|E|=m\ge1$, and let
\begin{equation}
d(t):=\max_{A\subseteq E} \|P^t(A,\cdot) - \fk\|_{\TV}.
\end{equation}
It follows from~\cite[Theorem 5]{ProppWilson96} and~\cite[Equation (4.24)]{LevinPeresWilmer09} that
\begin{equation}
d(t) \le \PP(\coupling >t) \le 2\,(m+1)\, d(t),
\label{Propp Wilson bound on d bar}
\end{equation}
for any $t\in\naturals$.
Combining the lower bound in~\eqref{Propp Wilson bound on d bar} with~\cite[Equation (12.13)]{LevinPeresWilmer09} yields the stated lower bound for the 
tail distribution:
$$
\PP(\coupling >t) \ge d(t) \ge \frac{e^{-t/\expauto}}{2}.
$$
Similarly, combining~\eqref{Propp Wilson bound on d bar} with Markov's inequality implies
$$
\EE(\coupling) \ge (\mix-1)\,\PP(\coupling > \mix-1) \ge (\mix-1) d(\mix-1) \ge \frac{(\mix-1)}{4}.
$$
This establishes the stated lower bounds.

We now consider the upper bounds. Let $M\in(1,\infty)$ be such that 
\begin{equation}
\PP(\coupling>l\,M)\le 2^{-l}, \qquad \text{for all } l\in\naturals.
\label{M bound}
\end{equation}
Then for $k\in\{0,1\}$ we have
\begin{align*}
\sum_{t=0}^\infty t^k\,\PP(\coupling>t) 
& \le \sum_{l=0}^\infty \sum_{t=l \lceil M\rceil}^{(l+1)\lceil M\rceil-1} t^k\, \PP(\coupling>l\,M)\\
& \le \sum_{l=0}^\infty \sum_{t=l \lceil M\rceil}^{(l+1)\lceil M\rceil-1} t^k\, 2^{-l}\\
& = (k+2) \lceil M\rceil^{k+1} - k\,\lceil M\rceil.
\end{align*}
The $k=0$ case then immediately yields an upper bound for $\EE(\coupling)$, via
\begin{equation}
\EE(\coupling) = \sum_{t=0}^\infty \PP(\coupling >t) \le 2\,\lceil M\rceil  \le 4\,M.
\end{equation}
Similarly, standard manipulations of probability generating functions (see e.g.~\cite[Chapter XI]{Feller68}) show that 
\begin{equation}
\EE\,[\coupling(\coupling-1)] = 2 \sum_{t=0}^\infty t \,\PP(\coupling>t),
\label{M bound on E(T)}
\end{equation}
and so the $k=1$ case yields
\begin{equation}
\var(\coupling) \le \EE[\coupling(\coupling -1)] + \EE(\coupling) \le 6\lceil M\rceil^2,
\end{equation}
which implies
\begin{equation}
\sqrt{\var(\coupling)}\le \sqrt{6}(M+1) \le 2\sqrt{6}\,M\le 5\,M.
\label{M bound on var(T)}
\end{equation}

We now determine suitable choices of $M$ for which~\eqref{M bound} holds. Since the bound is trivial for $l=0$, we assume $l\ge1$. 
We begin by considering bounds in terms of $\expauto$.
Letting $\fk_{\min}:=\min\{\fk(A) : A\subseteq E\}$, 
and combining~\cite[Lemma 6.13]{LevinPeresWilmer09} and~\cite[Equation (12.11)]{LevinPeresWilmer09} with~\eqref{Propp Wilson bound on d bar} yields
\begin{equation}
\PP(\coupling > t) \le 2(m+1)\,d(t) \le \frac{2(m+1)}{\fk_{\min}}e^{-t/\expauto} \le 2\,(m+1)\,\psi^m\,e^{-t/\expauto},
\label{expauto upper tail bound}
\end{equation}
where in the last step we applied Lemma~\ref{lem: mu min}. Since $\ln[2(m+1)]\le 2m$, this immediately yields the stated upper bound for $\PP(\coupling>t)$.
Likewise, since ${\blog[2(m+1)])\le2m}$, if we set $M=[\blog(\psi)+3]m\,\expauto$ then it follows from~\eqref{expauto upper tail bound} that
\begin{align*}
\blog \PP(\coupling>l\,M) 
&\le [\blog(\psi)+2]m - l[\blog(\psi)+3]m \\
&= -(l-1)[2+\blog(\psi)]m - m\,l\\
&\le -m\,l\\
&\le -l,
\end{align*}
and so~\eqref{M bound} holds. Inserting this choice of $M$ into~\eqref{M bound on E(T)} and~\eqref{M bound on var(T)} then yields the stated
upper bounds for $\EE(\coupling)$ and $\sqrt{\var(\coupling)}$ in terms of $\expauto$.

Finally, we consider bounds in terms of $\mix$.
Combining~\eqref{Propp Wilson bound on d bar} with~\cite[Equation (4.35)]{LevinPeresWilmer09} we obtain
$$
\PP(\coupling>t) \le 2(m+1)\,d(t) \le 4\,m\,d(t) \le 2^{\blog(4m) - \lfloor t/\mix\rfloor}
$$
Setting $M=3\blog(4m)\,\mix$, it follows that
\begin{align*}
\blog \PP(\coupling>l\,M) 
&\le \blog(4m) -3\blog(4m)l+1\\
&= [1-(2l-1)\blog(4m)] - \blog(4m)\,l\\
&\le -\blog(4m)\,l\\
&\le -l,
\end{align*}
which implies that~\eqref{M bound} holds. Inserting this choice of $M$ into~\eqref{M bound on E(T)} and~\eqref{M
  bound on var(T)} then yields the stated upper bounds for $\EE(\coupling)$ and $\sqrt{\var(\coupling)}$ in terms of $\mix$.

\end{proof}

\begin{lemma}
\label{lem: mu min}
Consider the FK model with $q\ge1$ and $p\in(0,1)$, on a finite connected graph $G=(V,E)$ with $m$ edges, and let
$\fk_{\min}:=\min\{\fk(A):A\subseteq E\}$. Then
$$
\fk_{\min} \ge \left(\frac{p(1-p)}{q^2}\right)^m. 
$$
\begin{proof}
Since
$$
\sum_{A\subseteq E} p^{|A|} (1-p)^{m-|A|} q^{k(A)} \le q^n \sum_{A\subseteq E} p^{|A|} (1-p)^{m-|A|} = q^n,
$$
for any $A\subseteq E$ we have
$$
\fk(A) \ge p^{|A|} (1-p)^{m-|A|} q^{k(A)-n} \ge p^m (1-p)^m q^{-n}.
$$
The stated result then follows since $G$ being connected implies $n\le m+1 \le 2m$.
\end{proof}
\end{lemma}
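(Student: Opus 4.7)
The plan is to bound $\fk(A)$ from below for an arbitrary $A\subseteq E$ by (i) bounding the partition function $Z_G(p,q)$ from above, and (ii) bounding the unnormalized weight $q^{k(A)}p^{|A|}(1-p)^{m-|A|}$ from below. Since the weights depend multiplicatively on three factors, and the partition function is a sum of such weights, the only delicate point will be choosing the right crude estimates so that the residual factors combine cleanly into $\left(p(1-p)/q^2\right)^m$.

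For the upper bound on $Z_G(p,q)$, I would use the trivial cluster bound $k(A)\le n$ (valid since each vertex lies in at most one cluster, and at least one is nonempty), which gives $q^{k(A)}\le q^n$ for $q\ge1$. Pulling this out of the sum and using $\sum_{A\subseteq E} p^{|A|}(1-p)^{m-|A|}=1$, I obtain $Z_G(p,q)\le q^n$. For the lower bound on the numerator, I would use that $p\in(0,1)$ implies $p^{|A|}\ge p^m$ and $(1-p)^{m-|A|}\ge (1-p)^m$ for every $A\subseteq E$, and that $q\ge 1$ implies $q^{k(A)-n}\ge q^{-n}$ (since $k(A)\ge 0$). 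Combining these yields
\[
\fk(A)\;\ge\; p^m(1-p)^m\,q^{-n}.
\]

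The final step converts $q^{-n}$ into the desired $q^{-2m}$ by using the connectedness hypothesis. Connectedness of $G$ implies $n\le m+1$, and since we are assuming $m\ge 1$ this in turn gives $n\le 2m$; as $q\ge 1$, we conclude $q^{-n}\ge q^{-2m}$, and the stated inequality $\fk_{\min}\ge\left(p(1-p)/q^2\right)^m$ follows immediately. I do not foresee any real obstacle: each step is a one-line estimate, and the only thing to verify is that the inequalities go in the right direction under the assumptions $q\ge 1$, $p\in(0,1)$, and $G$ connected with $m\ge 1$.
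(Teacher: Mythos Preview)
Your proposal is correct and follows essentially the same approach as the paper: bound $Z_G(p,q)\le q^n$ via $k(A)\le n$ and the binomial identity, then lower-bound the numerator by $p^m(1-p)^m q^{-n}$, and finally use connectedness to get $n\le m+1\le 2m$. The argument and even the individual estimates match the paper's proof line by line.
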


\section{The cycle}
\label{sec:FK one dimension}
In this section, we consider the FK heat-bath coupling, with parameters $p\in(0,1)$ and $q\ge1$, on the graph $\ZZ_L$, and prove
Theorem~\ref{thm: d=1}.  We begin, in Section~\ref{sec:FK one dimension coupong collector behaviour}, by showing that $\coupling$ equals
$\coupon$ with high probability, as $L\to\infty$.  This observation is then used in Section~\ref{sec:FK one dimension mean and variance} to
prove Parts~\ref{thm:d=1 coupling time mean} and~\ref{thm:d=1 coupling time variance}, of Theorem~\ref{thm: d=1}, and again in
Section~\ref{sec:FK one dimension distribution} to prove Part~\ref{thm:d=1 coupling time distribution}.  Finally, in Section~\ref{sec:FK one
  dimension relaxation time}, we prove Part~\ref{thm:d=1 relaxation time}.

\subsection{Asymptotic Coupon Collector Behaviour}
\label{sec:FK one dimension coupong collector behaviour}
For each $e\in E$, define
$$
\lastvisit(e) = \sup\{t\le \coupon: \sE_t=e\}.
$$
We refer to the time $\lastvisit(e)$ as the \emph{last visit} to $e$.  Let $(\lastvisit_i)_{i=1}^m$ denote the sequence of the
$\lastvisit(e)$, arranged in increasing order. In particular, $\lastvisit_1$ is the first time that a last visit occurs. And likewise,
$\lastvisit_i$ is the time that the $i$th last visit occurs.  

\begin{proposition}
\label{prop:coupon and coupling are asymptotic when d=1}
Consider the FK heat-bath coupling on $\ZZ_L$ with $p\in(0,1)$ and $q\ge1$. There exists $\epsilon>0$ such that $\PP(\coupling=\coupon)= 1 -
O(L^{-\epsilon})$.
\end{proposition}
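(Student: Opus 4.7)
The plan is a direct first-moment bound on $\PP(\coupling>\coupon)$.  The state of an edge $e$ at time $\coupon$ is determined by the update at its last-visit time $H(e)$, so $\coupling=\coupon$ fails only when at least one such last-visit update produces a disagreement between $\top$ and $\bottom$.  On the cycle $\ZZ_L$, an edge $e$ is non-pivotal to a configuration $A$ iff $A\supseteq E\setminus\{e\}$; combined with $\bottom\subseteq\top$, a disagreement at the update of $\sE_t=e$ can occur only when $\top_{t-1}\in\{E,E\setminus\{e\}\}$ and $U_t\in(\tilde p,p]$.  Dropping the condition on $\bottom$ and taking a union bound gives
\begin{equation*}
\PP(\coupling>\coupon)\le(p-\tilde p)\sum_{t\ge1}\EE\bigl[\mathbf 1\{\top_{t-1}\in\{E,E\setminus\{\sE_t\}\},\ t=H(\sE_t)\}\bigr].
\end{equation*}
The problem is therefore reduced to showing that $\top$ is rarely near-full at a time that happens to be a last visit.

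The second ingredient is a clean formula for the conditional last-visit probability.  Writing $A_{t-1}:=\{\sE_1,\ldots,\sE_{t-1}\}$ and $k:=|A_{t-1}|$, conditioning on the past edge sequence and on $\sE_t=e$, the event $\{t=H(\sE_t)\}$ says that $e$ is not redrawn during the coupon-style completion of $E\setminus A_t$.  Restricting the IID uniform future draws to the $(m-k+1)$ ``relevant'' labels $\{e\}\cup(E\setminus A_{t-1})$ and invoking exchangeability shows that $e$ is the last of these to appear with probability $1/(m-k+1)$ when $e\in A_{t-1}$, and $1/(m-k)$ when $e\not\in A_{t-1}$.  Averaging over uniform $\sE_t$ yields
\begin{equation*}
\PP\bigl(t=H(\sE_t)\,\big|\,\sE_1,\ldots,\sE_{t-1}\bigr)=\frac{m+1}{m(m-|A_{t-1}|+1)},
\end{equation*}
independent of $U_1,\ldots,U_{t-1}$, and hence of $\top_{t-1}$.

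Substituting this identity and setting $S:=\{E\}\cup\{E\setminus\{f\}:f\in E\}$ reduces the task to estimating $\sum_t\EE\bigl[\mathbf 1\{\top_{t-1}\in S\}/(m-|A_{t-1}|+1)\bigr]$.  I plan to split this sum at $t^\ast=m/2$.  For $t\le t^\ast$ the deterministic inequality $|A_{t-1}|\le t-1\le m/2$ bounds the denominator below by $m/2$, leaving $\sum_t\PP(\top_{t-1}\in S)$, which I claim is $O(1)$; this contributes $O(1/L)$ overall.  For $t>t^\ast$ I drop the denominator and use a pointwise tail $\PP(\top_{t-1}\in S)\le c^m$ with $c<1$; summed over the $O(L\log L)$ values of $t\le\coupon$ in this range, the contribution is $o(L^{-k})$ for every $k$.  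The parallel estimate for the contribution of $\top_{t-1}=E\setminus\{\sE_t\}$ gains an additional factor $1/m$ from matching $\sE_t$ to the missing edge and yields only an $O(1/L^2)$ term.  Combining gives $\PP(\coupling>\coupon)=O(1/L)$, which proves the proposition with $\epsilon=1$.

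The main technical obstacle is the pointwise tail $\PP(\top_t\in S)\le c^m$ for $t\ge m/2$ (together with summability $\sum_t\PP(\top_t\in S)=O(1)$ in the small-$t$ regime).  These would be obtained by a monotone coupling of $\top=\top^{(q)}$ with the $q=1$ (independent percolation) heat-bath chain $\top^{(1)}$: under common randomness $(\sE_t,U_t)$ and both started at $E$, the inequality $p(\top^{(q)}_{t-1},\sE_t)\le p=p(\top^{(1)}_{t-1},\sE_t)$ inductively gives $\top^{(q)}_t\subseteq\top^{(1)}_t$, so $\PP(\top^{(q)}_t\in S)\le\PP(\top^{(1)}_t\in S)$.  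For $q=1$ the edges are independent Bernoulli$(p_t)$ with $p_t=p+(1-p)(1-1/m)^t$, giving the elementary bound $\PP(\top^{(1)}_t\in S)\le m\,p_t^{m-1}$.  This tail is summable in $t$ and drops below $c^m$ for some $c<1$ once $p_t\le p+(1-p)e^{-1/2}<1$, i.e.\ once $t\ge m/2$, closing the argument.
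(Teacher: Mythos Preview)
Your route is genuinely different from the paper's. The paper shows that with high probability the first $\lfloor\ln L\rfloor$ last visits are all pivotal in the top chain (via a separate lemma controlling unoccupied edges near last-visit times), and then observes that on this event $\coupling>\coupon$ forces $U_{H_i}\le\tilde p$ at each of them, giving a bound $\tilde p^{\lfloor\ln L\rfloor}=O(L^{\ln\tilde p})$. Your approach---union bound over last-visit times, the clean exchangeability identity for $\PP(t=H(\sE_t)\mid\sE_{\le t-1})$, and monotone comparison $\top^{(q)}\subseteq\top^{(1)}$---is more direct and, once repaired, yields the sharper $\epsilon=1$.

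There is, however, a real gap in the final paragraph. First, the edge states in $\top_t^{(1)}$ are \emph{not} independent Bernoulli$(p_t)$: they share the edge-selection randomness $(\sE_s)$ (e.g.\ for $m=2$, $t=1$ one has $\PP(\text{both occupied})=p\neq p_1^2$). The bound $\PP(\top_t^{(1)}\in S)\le m\,p_t^{m-1}$ can be salvaged via negative association of the occupancy indicators, but even so it is too crude for your small-$t$ claim: since $p_t^{m-1}\approx e^{-(1-p)t}$ for $t=o(m)$, one gets $\sum_{t\le m/2} m\,p_t^{m-1}=\Theta(m)$, not $O(1)$, and the small-$t$ contribution becomes $O(1)$ rather than $O(1/L)$. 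The bound loses exactly a factor $m$ at small $t$ (at $t=0$ the true probability is $1$ but the bound is $m$). The fix is to work instead with the exact conditional identity $\PP(\top_t^{(1)}\in S\mid\sE_{\le t})=g(|A_t|)$ where $g(k)=p^k+kp^{k-1}(1-p)$; summing by the value of $|A_{t-1}|$ gives $\sum_{k=0}^{m-1} g(k)\,\tfrac{m}{(m-k+1)(m-k)}=O(1/m)$ directly, since $\sum_k g(k)<\infty$. The same device also handles the large-$t$ tail cleanly, avoiding the heuristic ``$O(L\log L)$ values of $t\le\coupon$'' (which as written sums a uniform bound $c^m$ over infinitely many $t$).
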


\begin{proof}
Fix $p\in(0,1)$ and $q\ge1$, and let $a_L:=\floor{\ln L}$. Let $\sP_j$ be the event that the edge $\sE_{\lastvisit_j}$ is pivotal in the
top process at time $\lastvisit_{j}-1$. By monotonicity, whenever an edge is pivotal to the top chain, it is also pivotal to the bottom chain.
Lemma~\ref{lem:last visits are pivotal whp} implies that there exists $\omega>0$ such that
\begin{equation}
\PP(\coupling > \coupon) = \PP\left(\coupling > \coupon \bigg| \bigcap_{j=1}^{a_L}\sP_j \right) + O(L^{-\omega}).
\label{coupling exceeds coupon effect of conditioning on pivotality}
\end{equation}

Now suppose $\cap_{j=1}^{a_L}\sP_j$ occurs, and let $1\le i \le a_L$.  If $U_{\lastvisit_i}\le\tilde{p}$, then
$\sE_{\lastvisit_i}\in\bottom_t$ and $\sE_{\lastvisit_i}\in\top_t$ for all $t\in[\lastvisit_i,\coupon]$, while if $U_{\lastvisit_i}>\tilde{p}$
then $\sE_{\lastvisit_i}\not\in\bottom_t$ and $\sE_{\lastvisit_i}\not\in\top_t$ for all $t\in[\lastvisit_i,\coupon]$.  Consequently, on $\ZZ_L$, if
$U_{\lastvisit_i}>\tilde{p}$, then every edge $e\neq \sE_{\lastvisit_i}$ is pivotal to $\top_t$ and $\bottom_t$, for all
$t\in(\lastvisit_i,\coupon]$, so that after any update of such an edge in this time window, its state (occupied or unoccupied) in the top
and bottom chain will agree.  Since each $e\in E\setminus\{\sE_{\lastvisit_1},\ldots,\sE_{\lastvisit_i}\}$ must be updated in
$(\lastvisit_i,\coupon]$, this implies that the top and bottom chains agree at time $\coupon$, and so $\coupling\le \coupon$. It follows
that $\coupling > \coupon$ can occur only if $U_{\lastvisit_i}\le \tilde{p}$ for each $1\le i \le a_L$, and so
\begin{equation}
\PP\left(\coupling > \coupon \bigg| \bigcap_{j=1}^{a_L}\sP_j \right) \le \tilde{p}^{a_L} \le \frac{L^{\ln\tilde{p}}}{\tilde{p}}.
\label{coupling exceeds coupon conditioned on pivotality}
\end{equation}
Since $p\in(0,1)$, we have $\tilde{p}\in(0,1)$, and so $\ln(\tilde{p})<0$. Choosing $\epsilon=\min\{\omega,-\ln\tilde{p}\}$, and combining
\eqref{coupling exceeds coupon effect of conditioning on pivotality} and \eqref{coupling exceeds coupon conditioned on pivotality}, we
therefore obtain
$$
\PP(\coupling > \coupon) = O(L^{-\epsilon}).
$$
Combining this observation with~\eqref{coupon lower bounds coupling} yields the stated result.
\end{proof}

\begin{lemma}\label{lem:last visits are pivotal whp}
Consider the top process on $\ZZ_L$, with fixed $p\in(0,1)$ and $q\ge1$. Let $\sP_j$ be the event that the edge $\sE_{\lastvisit_j}$ is pivotal at
time $\lastvisit_j-1$, and let $a_L=\floor{\ln L}$.  Then there exists $\omega>0$ such that
$$
\PP\left(\bigcup_{j=1}^{a_L} \sP_j^c\right) = O(L^{-\omega}), \qquad L\to\infty.
$$
\begin{proof}
Let $R_t:=\{e\in E: \sE_s = e \textrm{ for some } s\leq t\}$, the set of distinct edges visited up to time $t$.
Let $\sD=\{|R_{\lastvisit_1}|>a_L\}$ be the event that more than $a_L$ distinct edges have been visited by time $\lastvisit_1$. If $\sD$
holds, then it also holds that more than $a_L$ distinct edges have been visited by time $\lastvisit_j$, for any $j\ge1$. Fix $1\le j \le L$
and $p,\xi\in(0,1)$, and define $\sA_j$ to be the event that at least $\xi(1-p)a_L$ edges are unoccupied at time $\lastvisit_j-1$. For any
choice of $\xi,p\in(0,1)$, we have $\xi(1-p)a_L\ge2$ for all sufficiently large $L$; let $L$ be so chosen in what follows. Then, if $\sA_j$
occurs, there are at least 2 unoccupied edges at time $\lastvisit_j-1$, which in turn means that all edges are pivotal at time
$\lastvisit_j-1$.  Therefore, $\sA_j\subseteq\sP_j$.

Let $\sW$ denote the set of the first $a_L$ distinct edges visited by $(\sE_t)_{t\in\posint}$.  On $\sD$, the edges in $\sW$ are all visited
prior to $\lastvisit_j$.  Denote the times of last visit to the edges in $\sW$, prior to $\lastvisit_j$, by $M_1 < M_2 < \ldots < M_{a_L}$,
and let $1\le i\le a_L$. Since $p\ge\tilde{p}$, if $U_{M_i}> p$, then regardless of the structure of $\top_{M_i-1}$, we have
$\sE_{M_i}\not\in\top_{M_i}$. It follows that if $X_i=\indicator(U_{M_i} >p)$, then
$\PP(\sE_{M_i}\not\in\top_{M_i})\ge\PP(X_i=1)=(1-p)$. Therefore, Chernoff's bound~\cite[Equation (4.5)]{MitzenmacherUpfal05} implies
$$
\PP\left(\sP_j^c | \sD \right) \le
\PP\left(\sA_j^c | \sD \right) \le
\PP\left(\sum_{i=1}^{a_L}X_i < \xi(1-p)a_L \right) \le e^{-\gamma a_L}
$$
with $\gamma=(1-\xi)^2(1-p)/2>0$.

Combining this with the union bound gives
\begin{align*}
\PP\left(\bigcup_{j=1}^{a_L}\sP_j^c \bigg|\sD \right) & \le \sum_{j=1}^{a_L}\PP(\sP_j^c|\sD)\\
&\le   a_L e^{-\gamma a_L}\\
&\le   e^{\gamma}\,\ln(L) \, L^{-\gamma}.
\end{align*}
It follows that for any $0 < \delta < \gamma$ we have
$$
\PP\left(\bigcup_{j=1}^{a_L}\sP_j^c \bigg|\sD \right) = O(L^{-\delta}).
$$
Finally, Lemma~\ref{lem:the first visited edges are revisited often} in Appendix~\ref{appendix:coupon collecting} implies that there
exists $\varphi>0$ such that
$$
\PP(\sD^c) = O(L^{-\varphi}).
$$
Choosing $\omega=\min\{\delta,\varphi\}$ then implies 
$$
\PP\left(\bigcup_{j=1}^{a_L}\sP_j^c\right) = O(L^{-\omega}).
$$
Since $\omega>0$, the stated result follows.
\end{proof}
\end{lemma}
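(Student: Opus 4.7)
The plan is to show that with overwhelming probability the configuration $\top_{\lastvisit_j-1}$ contains at least two unoccupied edges, which on the cycle $\ZZ_L$ forces \emph{every} edge to be pivotal, and hence in particular $\sE_{\lastvisit_j}$. The pivotality claim is a structural fact about $\ZZ_L$: if $|E\setminus A|\ge 2$, then the spanning subgraph $(V,A)$ is a disjoint union of at least two paths, so every occupied edge is a bridge and every unoccupied edge joins two distinct components; consequently every $e\in E$ is pivotal to $A$. This reduces the problem from an edge-specific pivotality question to a global configurational one.

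To produce the required unoccupied edges I would exploit the random mapping representation~\eqref{random mapping definition} together with the inequality $\tilde p\le p$ (valid for $q\ge 1$): whenever an edge $e$ is updated at a time $t$ with $U_t>p$, it ends the step unoccupied, regardless of its prior state or pivotality status. Each visit to $e$ therefore offers an independent chance, of at least $1-p$, of leaving $e$ unoccupied. Combined with a coupon-collector type estimate from Appendix~\ref{appendix:coupon collecting}, one first shows there is an event $\sD$ of probability $1-O(L^{-\varphi})$, for some $\varphi>0$, on which more than $a_L=\floor{\ln L}$ distinct edges are visited by time $\lastvisit_1$, hence also by $\lastvisit_j$ for every $j\ge 1$.

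On $\sD$ fix $a_L$ distinct edges visited before $\lastvisit_j$, and let $M_1<\cdots<M_{a_L}$ be their last visit times prior to $\lastvisit_j$. By definition of last visit, the state of each such edge at time $\lastvisit_j-1$ is frozen after step $M_i$ and is unoccupied precisely when $U_{M_i}>p$. Since the sequences $(U_t)$ and $(\sE_t)$ are independent and the $M_i$ are $\sigma(\sE_1,\sE_2,\ldots)$-measurable, the indicators $\indicator(U_{M_i}>p)$ are iid Bernoulli$(1-p)$ even after conditioning on $\sD$. A Chernoff bound then gives that at least $\xi(1-p)a_L\ge 2$ of the chosen edges are unoccupied at time $\lastvisit_j-1$ except with probability $e^{-\gamma a_L}$, where $\gamma=(1-\xi)^2(1-p)/2>0$ for any fixed $\xi\in(0,1)$. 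Taking a union bound over $j\in\{1,\dots,a_L\}$ and absorbing $\PP(\sD^c)$ yields the claimed $O(L^{-\omega})$ bound for any $\omega<\min(\varphi,\gamma)$.

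The main obstacle I anticipate is the independence step just mentioned: because the $M_i$ are random and defined through $(\sE_t)$, one has to verify that the $U_{M_i}$'s remain iid after conditioning on the identities and times of the visited edges, so that Chernoff applies cleanly. This is ultimately a routine consequence of the independence of the sequences $(U_t)$ and $(\sE_t)$, but it is the one point in the argument where sloppy bookkeeping could create a spurious correlation. Apart from this, the estimate reduces to a coupon-collector control on $|R_{\lastvisit_1}|$ (supplied by the appendix) and a standard concentration bound.
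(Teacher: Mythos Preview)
Your proposal is correct and follows essentially the same route as the paper: reduce pivotality on the cycle to having at least two unoccupied edges, condition on the event $\sD$ that $|R_{\lastvisit_1}|>a_L$ (controlled by Lemma~\ref{lem:the first visited edges are revisited often}), track the last pre-$\lastvisit_j$ visit times $M_i$ of $a_L$ early edges, use $U_{M_i}>p\Rightarrow$ unoccupied, apply Chernoff, and union bound over $j$. One small slip: the edge is unoccupied \emph{whenever} $U_{M_i}>p$, not ``precisely when'' (if the edge was pivotal it is already unoccupied for $U_{M_i}>\tilde p$), but only the one-sided implication is used so this is harmless; your explicit justification that the $M_i$ are $\sigma(\sE_1,\sE_2,\dots)$-measurable, so that the $U_{M_i}$ remain iid Bernoulli$(1-p)$ under conditioning on $\sD$, is in fact more careful than the paper's own write-up.
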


\subsection{Mean and Variance}
\label{sec:FK one dimension mean and variance}
Define the sequence of random times $W_j$ such that $W_0=0$ and for $j\in\posint$
$$
W_j:= \min\{ t > W_{j-1} \,:\, \{\sE_{W_{j-1}+1},\ldots,\sE_t\} = E\}. 
$$
We define new processes $(\tilde{\sT}_t)_{t\in\naturals}$ and $(\tilde{\sB}_t)_{t\in\naturals}$ as follows, 
which proceed analogously to the top and bottom chains, except that they are restarted at times $W_j<\coupling$. 
More precisely, let $\tilde{\sT}_0=E$, and for $t\in\naturals$ set
\begin{equation}
\label{tilde top chain}
\tilde{\sT}_{t+1} =
\begin{cases}
f(E, \sE_{t+1},U_{t+1}), & t = W_j \text{ and } \coupling > W_j,\\
f(\tilde{\sT}_{t}, \sE_{t+1},U_{t+1}), & \text{otherwise.}
\end{cases}
\end{equation}
Similarly, $\tilde{\sB}_0=\emptyset$, and for $t\in\naturals$ we set 
\begin{equation}
\label{tilde bottom chain}
\tilde{\sB}_{t+1} = 
\begin{cases}
f(\emptyset, \sE_{t+1},U_{t+1}), & t = W_j \text{ and } \coupling > W_j,\\
f(\tilde{\sB}_{t}, \sE_{t+1},U_{t+1}), & \text{otherwise.}
\end{cases}
\end{equation}
By monotonicity, it is clear that for all $t\in\naturals$ we have
\begin{equation}
\tilde{\sB}_t \le \sB_t \le \sT_t \le \tilde{\sT}_t.
\label{bounding tilded and untilded bottom and top chains}
\end{equation}
We can now consider the coupling time corresponding to $\tilde{\sT}_t$ and $\tilde{\sB}_t$, 
\begin{equation}
\tilde{\coupling}:= \min\{t\in\naturals \,:\, \tilde{\sT}_t = \tilde{\sB}_t \}.
\label{tilde coupling time}
\end{equation}
It follows from~\eqref{bounding tilded and untilded bottom and top chains} that $\coupling\le\tilde{\coupling}$. 
Combining this with~\eqref{coupon lower bounds coupling} implies
\begin{equation}
\coupon \le \coupling \le \tilde{\coupling}. 
\label{eq:sandwiching coupling between coupon and tilde coupling}
\end{equation}

\begin{proof}[Proof of Theorem~\ref{thm: d=1}, Parts~\ref{thm:d=1 coupling time mean} and~\ref{thm:d=1 coupling time variance}]
Combining~\eqref{eq:sandwiching coupling between coupon and tilde coupling} and Lemma~\ref{lem:T tilde moments} immediately yields 
$\EE(\coupling)\sim\EE(\coupon)$, which establishes Part~\ref{thm:d=1 coupling time mean}.

To establish Part~\ref{thm:d=1 coupling time variance} we note that~\eqref{eq:sandwiching coupling between coupon and tilde coupling} implies
$$
\EE(\coupon^2) - (\EE\tilde{\coupling})^2 \le \var(\coupling) \le \EE(\tilde{\coupling}^2) - (\EE\coupon)^2,
$$
and rearranging, we obtain
\begin{equation}
1 - \frac{(\EE\coupon)^2}{\var(\coupon)}\,\left[\left(\frac{\EE\tilde{\coupling}}{\EE\coupon}\right)^2-1\right] \le 
\frac{\var(\coupling)}{\var(\coupon)} \le 
\frac{\var(\tilde{\coupling})}{\var(\coupon)} 
+ \frac{(\EE\coupon)^2}{\var(\coupon)}\,\left[\left(\frac{\EE\tilde{\coupling}}{\EE\coupon}\right)^2-1\right].
\label{variance of coupling on variance of coupon inequality}
\end{equation}
Lemma~\ref{lem:T tilde moments} implies 
\begin{equation}
\left[\left(\frac{\EE\tilde{\coupling}}{\EE\coupon}\right)^2-1\right] = O(L^{-\epsilon}),
\label{mean of tilde coupling on mean of coupon asymptotics}
\end{equation}
while \eqref{coupon collector mean} and~\eqref{coupon collector variance} imply
\begin{equation}
\frac{(\EE\coupon)^2}{\var(\coupon)} = O(\ln^2(L)),
\label{ratio of coupon squared mean and variance}
\end{equation}
and so~\eqref{variance of coupling on variance of coupon inequality} yields
$$
1 \le \liminf_{L\to\infty} \frac{\var(\coupling)}{\var(\coupon)} \le 
\limsup_{L\to\infty} \frac{\var(\coupling)}{\var(\coupon)} \le 
\limsup_{L\to\infty} \frac{\var(\tilde{\coupling})}{\var(\coupon)}.
$$
Combining this with Part~\ref{tilde variance} of Lemma~\ref{lem:T tilde moments} then implies that $\var(\coupling)\sim\var(\coupon)$.
\end{proof}

\begin{lemma}
\label{lem:T tilde moments}
Let $\tilde{\coupling}$ be as defined in~\eqref{tilde coupling time}. Then:
\begin{enumerate}[label=(\roman*)]
\item\label{tilde mean} $\EE(\tilde{\coupling})=\EE(\coupon)[1+O(L^{-\epsilon})]$.
\item\label{tilde variance} $\limsup_{L\to\infty}\var(\tilde{\coupling})/\var(\coupon)\le1$.
\end{enumerate}
\end{lemma}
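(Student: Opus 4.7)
The plan is to dominate $\tilde\coupling$ pointwise by a sum of iid coupon-collector epoch lengths stopped at a geometric time. Let $X_j := W_j - W_{j-1}$, and let $G_j$ be the event that fresh monotone coupled chains $(\hat\sB, \hat\sT)$ started from $(\emptyset, E)$ at time $W_{j-1}$ and driven by the randomness $(\sE_t, U_t)_{t \in (W_{j-1}, W_j]}$ coalesce by $W_j$. Since distinct epochs use disjoint blocks of randomness, the pairs $(X_j, \indicator(G_j))_{j \ge 1}$ are iid; in particular the $G_j$ are iid Bernoulli($p_L$) with $p_L := \PP(\coupling = \coupon) = 1 - O(L^{-\epsilon})$ by Proposition~\ref{prop:coupon and coupling are asymptotic when d=1}. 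Set $M^* := \min\{j \ge 1 : G_j\}$, a geometric random variable.

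The central deterministic bound is $\tilde\coupling \le W_{M^*}$, established by monotonicity. At time $W_{M^*-1}$ the tilde chains satisfy $\emptyset \le \tilde\sB_{W_{M^*-1}} \le \tilde\sT_{W_{M^*-1}} \le E$ trivially. Within epoch $M^*$ no restart of the tilde chains is triggered after $W_{M^*-1}$, so the monotonicity of the update rule $f$ keeps $(\tilde\sB_t, \tilde\sT_t)$ sandwiched between $(\hat\sB_t, \hat\sT_t)$ for $t \in (W_{M^*-1}, W_{M^*}]$; a restart at $W_{M^*-1}$ itself only makes the tilde chains coincide exactly with the fresh ones. Since $G_{M^*}$ holds, the fresh chains coalesce at some time $\tau \in (W_{M^*-1}, W_{M^*}]$, and the sandwich then forces $\tilde\sB_\tau = \tilde\sT_\tau$, so $\tilde\coupling \le \tau \le W_{M^*}$.

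Because $\{j \le M^*\} = \{G_1^c, \ldots, G_{j-1}^c\}$ depends only on epochs $1, \ldots, j-1$ and is independent of $(X_j, G_j)$, direct expansion gives
\[
\EE(W_{M^*}) = \EE(\coupon)/p_L, \qquad \EE(W_{M^*}^2) = \EE(\coupon^2)/p_L + 2\alpha \EE(\coupon)/p_L^2,
\]
with $\alpha := \EE(X_1 \indicator(G_1^c))$. Part~\ref{tilde mean} then follows by the squeeze $\EE(\coupon) \le \EE(\coupling) \le \EE(\tilde\coupling) \le \EE(W_{M^*}) = \EE(\coupon)/p_L$. For Part~\ref{tilde variance}, one has $\var(\tilde\coupling) \le \EE(W_{M^*}^2) - \EE(\coupon)^2 = \var(\coupon)/p_L + (1-p_L)\EE(\coupon)^2/p_L + 2\alpha \EE(\coupon)/p_L^2$. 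Cauchy-Schwarz bounds $\alpha \le \sqrt{\EE(\coupon^2)(1-p_L)}$, and inserting the coupon-collector asymptotics~\eqref{coupon collector mean}--\eqref{coupon collector variance} together with $1-p_L = O(L^{-\epsilon})$ shows each correction term is of order $L^{-\epsilon/2}\ln^2 L \cdot \var(\coupon) = o(\var(\coupon))$, yielding $\limsup \var(\tilde\coupling)/\var(\coupon) \le 1$.

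The hardest step will be the monotonic sandwiching: the restart rule for the tilde chains is governed by the \emph{original} coupling time $\coupling$ rather than anything intrinsic to the tilde chains, and some care is needed to verify that this external dependence does not disrupt the bound $\tilde\coupling \le W_{M^*}$ over the accumulated restart history prior to $W_{M^*-1}$. Once the pointwise bound is in place, the moment estimates reduce to routine geometric sums combined with the Cauchy-Schwarz control of the cross term $\alpha$.
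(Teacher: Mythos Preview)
Your proof is correct. The overall architecture matches the paper's: both control $\tilde\coupling$ by a geometric number of coupon-collector epochs and reduce the moment estimates to those of $\coupon$ together with the probability $\PP(\coupling=\coupon)$. The execution differs in two respects. First, the paper asserts the \emph{equality} $\tilde\coupling = W_J$ with $J$ geometric and then applies Wald's first and second equations; you instead prove only the pointwise inequality $\tilde\coupling \le W_{M^*}$ via the monotone sandwich, which neatly sidesteps any question of whether $\tilde\coupling$ must land exactly at some $W_j$ under the restart rule as written (governed by $\coupling$ rather than $\tilde\coupling$) --- precisely the concern you flag at the end. Second, for the variance bound the paper combines Wald's second equation with a Jensen-type inequality carrying an auxiliary parameter $a>1$ that is sent to $1$ at the end; you instead expand $\EE(W_{M^*}^2)$ directly and control the cross term $\alpha=\EE(X_1\indicator(G_1^c))$ by Cauchy--Schwarz. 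This costs you a factor $L^{-\epsilon/2}$ rather than $L^{-\epsilon}$ in the correction, but since the comparison with $\var(\coupon)$ only brings in an extra $\ln^2 L$, it is still $o(1)$ and the $\limsup$ bound follows. Your route is arguably the more robust of the two, since it relies only on the sandwich $\tilde\coupling\le W_{M^*}$, which holds regardless of the accumulated restart history.
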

\begin{proof}
By construction of the processes $\tilde{\bottom}_t$ and $\tilde{\top}_t$, we have $\tilde{\coupling}\in\{W_1,W_2,W_3,\ldots\}$.
Defining the random index $J$ via 
$$
J:=\inf\{j\in\naturals : \tilde{\bottom}_{W_j} = \tilde{\top}_{W_j}\},
$$
we therefore have $\tilde{\coupling}=W_J$. It follows that 
$$
\tilde{\coupling} = \sum_{j=1}^J Y_j,
$$
where $Y_j:=(W_j-W_{j-1})$ form an iid sequence of copies of $W_1=\coupon$.
Moreover, $J$ is geometrically distributed, with success probability $\PP(\coupling = \coupon)$.
From Proposition~\ref{prop:coupon and coupling are asymptotic when d=1} we therefore obtain
\begin{equation}
\begin{split}
\EE(J) &= 1 + O(L^{-\epsilon}),\\
\var(J) &= O(L^{-\epsilon}).
\label{J mean and variance asymptotics}
\end{split}
\end{equation}

Let $\sF_t:=\sigma(\sE_1,U_1,\ldots,\sE_t,U_t)$ denote the natural filtration of the auxiliary noise. For each $j\in\posint$, the time
$W_j$ is a stopping time with respect to $\sF_t$, and we can define the $\sigma$-algebra $\sG_j:=\sF_{W_j}$. Since $W_{j-1}<W_j$, the
sequence $(\sG_j)_{j\in\posint}$ is a filtration, and moreover, $(Y_j)_{j\in\posint}$ is adapted to it.  It is easily verified that
$\sigma(Y_j)$ and $\sG_{j-1}$ are independent, for each $j\in\posint$. Furthermore, $J$ is a stopping time with respect to
$(\sG_j)_{j\in\posint}$. It therefore follows from Wald's first equation~\cite[Theorem 5.3.1]{ChowTeicher78} that
\begin{equation}
\EE(\tilde{\coupling}) = \EE(J)\,\EE(\coupon).
\label{output of wald's first equation}
\end{equation}
Combining~\eqref{J mean and variance asymptotics} with~\eqref{output of wald's first equation} yields statement~\ref{tilde mean}.

We now turn to statement~\ref{tilde variance}. Consider the random variable $\tilde{\coupling}-\EE(\coupon)\,J$. 
We clearly have
$$
\tilde{\coupling} - \EE(\coupon)\,J = \sum_{j=1}^J [Y_j - \EE(\coupon)],
$$
and it follows from~\eqref{output of wald's first equation} that $\tilde{\coupling} - \EE(\coupon)\,J$ has mean zero. 
Wald's second equation~\cite[Theorem 5.3.3]{ChowTeicher78} therefore yields
\begin{equation}
\EE[(\tilde{\coupling} - \EE(\coupon)\,J)^2] = \EE(J)\,\var(Y_1-\EE(\coupon)) = \EE(J)\,\var(\coupon).
\label{output of wald's second equation}
\end{equation}

We can upper-bound $\var(\tilde{\coupling})$ using~\eqref{output of wald's second equation} as follows.
Fix a parameter $a>1$. Jensen's inequality implies that for any $b,c\in\RR$ we have
\begin{equation}
  (b+c)^2  = \left(\frac{1}{a} ba + \frac{a-1}{a}  \frac{ca}{(a-1)}\right)^2 \le b^2 a + c^2 \frac{a}{(a-1)}.
\label{elementary jensen inequality}
\end{equation}
From~\eqref{output of wald's first equation} and~\eqref{elementary jensen inequality} it follows that, for any $a>1$,
\begin{equation}
\begin{split}
 \var(\tilde{\coupling})  &=  \EE\left(\Big[\tilde{\coupling} -  \EE(\coupon)\EE(J)\Big]^2\right)\\
 &= \EE\left(\left[(\tilde{\coupling} - \EE(\coupon)\,J) +   \EE(\coupon)(J - \EE(J))\right]^2\right)\\
 &\le  a\, \EE\left[\left(\tilde{\coupling} - \EE(\coupon)\,J\right)^2\right] + \frac{a}{(a-1)} (\EE\coupon)^2 \var(J)\\
 &= a\,\EE(J)\,\var(\coupon) + \frac{a}{a-1}\,(\EE\coupon)^2 \var(J),
\label{bound for tilde coupling via coupon and geometric}
\end{split}
\end{equation}
where the last step follows from~\eqref{output of wald's second equation}. From~\eqref{ratio of coupon squared mean and variance}
and~\eqref{J mean and variance asymptotics} we therefore obtain that, for any $a>1$,
$$
\frac{\var(\tilde{\coupling})}{\var(\coupon)} \le a\,[1+O(L^{-\epsilon})] +\frac{a}{1-a} \frac{(\EE\coupon)^2}{\var(\coupon)} O(L^{-\epsilon})
\le a + o(1),
$$
and we conclude that 
\begin{equation}
\limsup_{L\to\infty} \frac{\var(\tilde{\coupling})}{\var(\coupon)} \le a. 
\label{variance of tilde coupling over variance of coupon}
\end{equation}
Finally, since~\eqref{variance of tilde coupling over variance of coupon} holds for all $a>1$, we in fact have
$$
\limsup_{L\to\infty} \frac{\var(\tilde{\coupling})}{\var(\coupon)} \le 1,
$$
as claimed.
\end{proof}

\subsection{Distribution}
\label{sec:FK one dimension distribution}
By combining Proposition~\ref{prop:coupon and coupling are asymptotic when d=1} with Parts~\ref{thm:d=1 coupling time mean} and~\ref{thm:d=1
  coupling time variance} of Theorem~\ref{thm: d=1}, we can now prove Part~\ref{thm:d=1 coupling time distribution}.
\begin{proof}[Proof of Theorem~\ref{thm: d=1}, Part~\ref{thm:d=1 coupling time distribution}]
Fix $q\ge1$ and $p\in(0,1)$, and let $\PP_L$ denote the corresponding measure for the FK heat-bath coupling on $\ZZ_L$, with analogous notation for
expectation and variance. Define the sequences $g_L:=\EE_L(\coupling)$, $\gamma_L:=\EE_L(\coupon)$, $h_L:=\sqrt{\var_L(\coupling)}$ and $\eta_L:=\sqrt{\var_L(\coupon)}$.
Proposition~\ref{prop:coupon and coupling are asymptotic when d=1} implies that for any fixed $x\in\RR$, we have
\begin{equation}
\begin{split}
\PP_L[\coupling \le g_L + x\,h_L] 
&= \PP_L[\coupon \le g_L + x\,h_L | \coupling=\coupon] + O(L^{-\epsilon}) \\
&= \PP_L[\coupon \le g_L + x\,h_L] + O(L^{-\epsilon}).
\end{split}
\label{eq:distribution of T vs distribution of W}
\end{equation}
Since Parts~\ref{thm:d=1 coupling time mean} and~\ref{thm:d=1 coupling time variance} of Theorem~\ref{thm: d=1} imply, respectively, that
$\gamma_L\sim g_L$ and $\eta_L\sim h_L$, the stated result follows from~\eqref{eq:distribution of T vs distribution of W}
via the Convergence of Types theorem~\cite[Theorem 14.2]{Billingsley94}.
\end{proof}

\subsection{Relaxation time}
\label{sec:FK one dimension relaxation time}
\begin{proof}[Proof of Theorem~\ref{thm: d=1}, Part~\ref{thm:d=1 relaxation time}]
Let $P$ denote the transition matrix of the FK process on $\ZZ_L$ with parameters $(p,q)$, and let $\fk$ denote the corresponding stationary
distribution.  For $g,h:2^{E}\to\RR$, let $\<g,h\>_{\fk}:=\sum_{A\subseteq E}g(A)\,h(A)\,\fk(A)$ denote the inner product on $l^2(\fk)$.  The
Dirichlet form $\sE$ corresponding to $P$ and $\fk$ is defined by $\sE(g,h):=\<(I-P)g,h\>_{\fk}$.  It is well-known (see e.g.~\cite[Lemma
  13.11]{LevinPeresWilmer09}) that
\begin{equation}
\sE(g) := \sE(g,g) = \frac{1}{2} \sum_{A,B\subseteq E} \nabla_g(A,B)\,Q(A,B),
\label{nice expression for dirichlet form}
\end{equation}
where
\begin{equation}
\begin{split}
\nabla_g(A,B) &:= [g(A) - g(B)]^2 = \nabla_g(B,A),\\
Q(A,B) &:= \fk(A)\,P(A,B) = Q(B,A)
\end{split}
\label{dirichlet difference definition}
\end{equation}
We denote the spectral gap of $P$ by $\gamma:=1-\secondeigenvalue$.
The Rayleigh-Ritz characterization~\cite[Remark 13.13]{LevinPeresWilmer09} of the spectral gap implies that
\begin{equation}
\gamma = \min_{\substack{g:2^E\to\RR \\ \var_{\fk}(g)\neq0}} \frac{\sE(g)}{\var_{\fk}(g)}.
\label{eq:Rayleigh-Ritz}
\end{equation}

We can bound the spectral gap of $P$ via a comparison with percolation with edge probability $\tilde{p}$. In what
follows, the quantities $\tilde{P}$, $\tilde{Q}$, $\tilde{\sE}$, $\tilde{\gamma}$, $\tilde{\fk}$ are defined analogously to $P$, $Q$, $\sE$,
$\gamma$, $\fk$, but with parameters $(\tilde{p},1)$ rather than $(p,q)$.

Replacing the number of components $k(A)$ in~\eqref{FK distribution} with the cyclomatic number $c(A)=L-|A|+k(A)$, and using the fact that
$c(A)=1$ iff $A=E$, and $c(A)=0$ otherwise, we find 
\begin{equation}
\begin{split}
\fk(A) &= r_L\,q^{\indicator(A=E)}\,\tilde{\fk}(A),\\
r_L&:=\frac{1}{1+(q-1)\tilde{p}^L}.
\end{split}
\label{relating mu and tilde mu}
\end{equation}

If $A$ and $B$ are both different from $E$, then $P(A,B)=\tilde{P}(A,B)$ and so 
$$
Q(A,B)=r_L\,\tilde{Q}(A,B). 
$$
By contrast, for any $e\in E$, 
$$
Q(E,E_e) = r_L \frac{p}{\tilde{p}} \,\tilde{Q}(E,E_e) = r_L \,\tilde{Q}(E,E_e) + c\, \frac{\tilde{p}^L}{L}
$$
where $c>0$ depends only on $p$ and $q$. It follows that
\begin{equation}
\sE(g) = r_L\,\tilde{\sE}(g) + c\,r_L\,\frac{\tilde{p}^L}{L}\,\sum_{e\in E}\nabla_g(E,E_e).
\label{explicit dirichlet form}
\end{equation}

Due to the product form of $\tilde{P}$, an explicit diagonalization can be easily obtained. A discussion of the case $\tilde{p}=1/2$ can be
found in~\cite[Example 12.15]{LevinPeresWilmer09}, which can be extended to any $\tilde{p}\in(0,1)$, to show that the eigenvalues of
$\tilde{P}$ have the form $1-k/L$ for $0\le k \le L$, and to obtain explicit forms for the corresponding eigenfunctions.  In particular,
this shows that the second-largest eigenvalue is $\tilde{\lambda}_2=1-1/L$, and so $\tilde{\gamma}=1/L$.

Fix an edge $e\in E$, and let $J_e:=\{A\subseteq E : A\ni e\}$, the event that $e$ is occupied.
It can be easily verified directly that the function $\Psi:2^E\to\RR$ defined by
$$
\Psi(A) = \indicator_{J_e}(A) - \tilde{p}
$$
is an eigenfunction of $\tilde{P}$ with eigenvalue $\tilde{\lambda}_2$. It follows that
\begin{equation}
\tilde{\sE}(\Psi)=\<(I-\tilde{P})\Psi,\Psi\>_{\tilde{\fk}} = \tilde{\gamma}\,\var_{\tilde{\fk}}(\Psi) = \dfrac{\tilde{p}(1-\tilde{p})}{L}.
\label{percolation dirichlet form of percolation eigenvector}
\end{equation}
Since $\fk(J_e^c) = r_L\tilde{\fk}(J_e^c)$, we have
\begin{equation}
\var_{\fk}(\Psi) = r_L^2\,\tilde{p}(1-\tilde{p})\left(1+(q-1)\tilde{p}^{L-1}\right).
\label{variance of percolation eigenfunction}
\end{equation}
Combining~\eqref{eq:Rayleigh-Ritz}, \eqref{explicit dirichlet form}, \eqref{percolation dirichlet form of percolation eigenvector}
and~\eqref{variance of percolation eigenfunction} we see that, as $L\to\infty$,
$$
\gamma \le \frac{\sE(\Psi)}{\var_{\fk}(\Psi)} = \frac{1}{L}\left[1 + O\left(\tilde{p}^L\right)\right].
$$
This establishes the stated lower bound for $\rel$. 

To establish the upper bound, first note that~\eqref{explicit dirichlet form} implies $\sE(g)\ge r_L\,\tilde{\sE}(g)$ for all $g:2^E\to\RR$.
Similarly, for any $g:2^E\to\RR$ we have 
\begin{equation*}
\begin{split}
\var_{\fk}(g) &= \frac{1}{2} \sum_{A,B\subseteq E} \nabla_g(A,B)\,\fk(A)\,\fk(B) \\
&\le q r_L^2 \var_{\tilde{\fk}}(g),
\end{split}
\end{equation*}
where the inequality follows by inserting~\eqref{relating mu and tilde mu} and noting that $\nabla_g(E,E)=0$.  It follows that, for any
non-constant $g:2^E\to\RR$, we have
$$
\frac{\sE(g)}{\var_{\fk}(g)} \ge \frac{1}{q\,r_L} \frac{\tilde{\sE}(g)}{\var_{\tilde{\fk}}(g)},
$$
and so
$$
\gamma \ge \frac{\tilde{\gamma}}{q\,r_L} = \frac{1}{q\,L}\left[1+O(\tilde{p}^L)\right].
$$
\end{proof}

\section{Single-spin Ising heat-bath process}
\label{sec:Ising}
In this section, we present a brief discussion of the coupling time for the single-spin Ising heat-bath process.
Since the process has exponentially slow mixing below the critical temperature, we focus on temperatures at and above criticality.
At temperatures above criticality, we find that the coupling time again displays the same coupon-collector-like behaviour observed for the
FK heat-bath process. As we shall see, however, at the critical temperature the behaviour is somewhat different.

We define the Ising heat-bath process precisely in Section~\ref{subsec:definition of ising process}, and in Section~\ref{subsec:ising
  heat-bath results} we summarise our conjectures for the behaviour of its coupling time. Sections~\ref{subsec:ising moments}
to~\ref{subsec:ising limit law} then outline the numerical evidence in support of these conjectures.

\subsection{Definition of the process}
\label{subsec:definition of ising process}
The zero-field ferromagnetic Ising model on finite graph $G=(V,E)$ at inverse temperature $\beta\ge0$ is defined by the Gibbs measure
\begin{equation}
\ising(\omega) \propto \exp\left(\beta\sum_{ij\in E}\omega_i \omega_j \right), \qquad \omega\in\{-1,1\}^V.
\label{Ising distribution}
\end{equation}
It is intimately related to the $q=2$ Fortuin-Kasteleyn random-cluster model. The correlated percolation transition displayed
by the FK model on $\ZZ^d$, when $d\ge2$, manifests itself as an order-disorder transition in the Ising model at a critical $0 < \betac <
\infty$.  This transition is known to be continuous~\cite{AizenmanDuminilCopinSidoravicius15}.  The two-dimensional model is particularly
well-understood~\cite{McCoyWu73}, where it is known that $\betac=\ln\sqrt{1+\sqrt{2}}$.

The single-spin Ising heat-bath process is a Markov chain with stationary distribution~\eqref{Ising distribution}, which can be defined by
the following random mapping representation.  Let $\sV$ and $U$ be independent random variables, with $\sV$ uniform on $V$ and $U$ uniform
on $[0,1]$.  For $v\in V$ and $\omega\in\{-1,1\}^V$, let $S_v(\omega)=\sum_{w\sim v}\omega_w$ denote the local magnetization at $v$ in
configuration $\omega$, where the notation $w\sim v$ denotes adjacency between vertices $w$ and $v$.
Then define {$f:\{-1,1\}^V\times V\times[0,1]\to\{-1,1\}^V$} so that $f(\omega,v,u)=\omega'$ where, for each $w\in
V$,
\begin{equation}
  \omega_w' := 
  \begin{cases}
  \omega_w, & w\neq v,\\
  +1, & w=v \text{ and } u \le   \dfrac{e^{\beta S_v(\omega)}}{e^{\beta S_v(\omega)}+e^{-\beta S_v(\omega)}},\\
    \\
  -1, & w=v \text{ and } u > \dfrac{e^{\beta S_v(\omega)}}{e^{\beta S_v(\omega)}+e^{-\beta S_v(\omega)}}.
\end{cases}
\label{Ising random mapping representation}
\end{equation}
The set $\{-1,1\}^V$ has a natural partial order such that $\omega<\omega'$ iff $\omega_v<\omega_v'$ for all $v\in V$. It is straightforward
to verify that $f$ is monotonic with respect to this partial order; i.e. for any fixed $v\in V$ and $u\in[0,1]$, if $\omega<\omega'$, then
$f(\omega,v,u)< f(\omega',v,u)$.

Let $(\sV_t,U_t)_{t\in\posint}$ be an iid sequence of copies of $(\sV,U)$. Analogous to the FK heat-bath process, we define top and bottom
chains corresponding to the random mapping representation~\eqref{Ising random mapping representation}.  Specifically, we define the top
chain $(\top_t)_{t\in\naturals}$ so that $\top_0=(+1,\ldots,+1)$ and $\top_{t+1}=f(\top_t,\sV_{t+1},U_{t+1})$, and the bottom chain
$(\bottom_t)_{t\in\naturals}$ so that $\bottom_0=(-1,\ldots,-1)$ and $\bottom_{t+1}=f(\bottom_t,\sV_{t+1},U_{t+1})$.  We refer to the
coupled process $(\sB_t,\sF_t)_{t\in\posint}$ as ``the Ising heat-bath coupling''.  With these definitions for the top and bottom chains,
the \emph{coupling time} of the Ising heat-bath process is again defined by~\eqref{coupling time definition}.

\subsection{Behaviour of the coupling time for the Ising heat-bath process}
\label{subsec:ising heat-bath results}
We now summarise our expectations for the behaviour of the coupling time for the Ising heat-bath process. Numerical evidence in support of these
conjectures will be presented in the following sections.
\begin{conjecture}
\label{conj:ising heat-bath}
Consider the Ising heat-bath process on $\ZZ_L^d$ with $d\ge1$. As $L\to\infty$:
\begin{enumerate}[label=(\roman*)]
\item\label{conj-part:ising supercritical moments} $\mu_\coupling\sim C_1(\beta,d)\, \mu_\coupon$ and $\sigma_\coupling\sim C_2(\beta,d)\, \sigma_\coupon$
  when $\beta<\betac$ with {$C_1(\beta,d),C_2(\beta,d)>0$}
\item\label{conj-part:ising critical moments} $\mu_\coupling/\sigma_\coupling \to C_3(d)$ at $\beta=\betac$, with $C_3(d)>0$
\item\label{conj-part:ising relaxation time} $\sigma_\coupling \sim C_4(\beta,d)\,\rel$ for all $\beta\le \betac$, with $C_4(\beta,d)>0$.
Moreover, $C_4(\beta,d)=\pi/\sqrt{6}$ for all $\beta<\betac$ and all $d$.
\item\label{conj-part:ising limit law} If $\beta\le\betac$
$$
\lim_{L\rightarrow \infty} \PP[\coupling_L \leq \EE(\coupling_L) + x \sqrt{\var(\coupling_L)}]=
F(x), \qquad \text{ for each } x\in \RR
$$
for some non-degenerate distribution function $F$. Moreover, $F(x)=G(x)$ for all $\beta<\betac$, where $G(x)$ is the Gumbel distribution
defined by~\eqref{gumbel distribution function}.
\end{enumerate}
\end{conjecture}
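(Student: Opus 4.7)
My approach would closely parallel the strategy used to prove Theorem~\ref{thm: d=1}, adapted to vertex-based Ising dynamics. The first step, for the subcritical parts (i), (iii), and the subcritical half of (iv), would be to establish an Ising analogue of Proposition~\ref{prop:coupon and coupling are asymptotic when d=1}, at least in one dimension: for $\beta<\betac$ on $\ZZ_L$, one expects $\PP(\coupling_L = \coupon_L) \to 1$ as $L\to\infty$, where here $\coupon_L$ denotes the time to visit every vertex. The intuition is that for $\beta<\betac$, the Ising model has finite correlation length and exponential decay of truncated two-point functions, so a discrepancy between $\top_t$ and $\bottom_t$ near the last visit to a vertex $v$ is confined to a finite-range neighbourhood of $v$ and is highly likely to be resolved before $\coupon$. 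A disagreement-percolation argument, combined with monotonicity of the map $f$ in~\eqref{Ising random mapping representation}, should provide the quantitative decay needed.

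Granted this asymptotic equivalence, the machinery of Section~\ref{sec:FK one dimension mean and variance} transfers directly in one dimension: one defines a restarted coupling $\tilde{\coupling}$ as in~\eqref{tilde top chain}--\eqref{tilde coupling time}, uses the sandwich $\coupon \le \coupling \le \tilde{\coupling}$, and invokes Wald's equations as in Lemma~\ref{lem:T tilde moments} to pass moment and distributional asymptotics from $\coupon$ to $\coupling$. The Gumbel limit in Part~(iv) then follows via the Convergence of Types theorem, exactly as in Section~\ref{sec:FK one dimension distribution}. In higher subcritical dimensions the constants $C_1,C_2$ are (based on the FK data in Figures~\ref{fig:mean_ratio_off_critical}--\ref{fig:std_ratio_off_critical}) expected to exceed unity, so $\coupling$ is not exactly $\coupon$; the correct picture should be a coupon-collector-with-correction analysis in which the residual clean-up time at each vertex after its last visit has geometrically decaying tails. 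The identification $C_4(\beta,d)=\pi/\sqrt{6}$ in Part~(iii) for $\beta<\betac$ would then follow from the Gumbel tail asymptotics combined with a matching lower tail $\PP(\coupling>t)\ge e^{-t/\expauto}/2$ of the type established in Theorem~\ref{thm:FK bounds on arbitrary graphs}; together with $\rel \asymp L^d$ (obtainable by standard log-Sobolev or canonical-path arguments for $\beta<\betac$), this closes the subcritical case.

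The critical case, i.e.\ Part~(ii) and the critical halves of (iii) and (iv), is the main obstacle. At $\beta=\betac$ the coupon-collector mechanism fails entirely: $\rel$ grows polynomially in $L$ with a non-trivial dynamic exponent $z_{\exp}$, so $\sigma_\coupling$ is forced to scale like $\rel$, which dominates $\sigma_\coupon=\Theta(L^{d/2})$. The conjectured scenario is that a single slow spectral mode dominates $\coupling$, producing $\mu_\coupling/\sigma_\coupling \to C_3(d)>0$ and a non-degenerate, non-Gumbel limit $F$. A reasonable first step would be to adapt Theorem~\ref{thm:FK bounds on arbitrary graphs} to the Ising setting to obtain $\sigma_\coupling = O(\expauto)$ and an exponential upper tail at rate $1/\expauto$, then seek a matching lower bound via an eigenfunction witness as in the derivation of~\eqref{expauto upper tail bound}; this would yield $\sigma_\coupling \asymp \expauto \asymp \rel$ at criticality, giving at least the qualitative content of Part~(iii). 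Identifying the constant $C_4(\betac,d)$ and the limit distribution $F$ in Part~(iv), however, appears to require genuinely new input about the structure of the slow modes of the critical generator, and is where I would expect the proof to stall. This is consistent with the authors' reluctance to even speculate about the form of $F$ beyond asserting non-degeneracy.
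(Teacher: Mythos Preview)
The statement you are attempting to prove is a \emph{conjecture} in the paper, not a theorem. The paper does not offer a proof of Conjecture~\ref{conj:ising heat-bath}; it offers only numerical evidence, presented in Sections~\ref{subsec:ising moments}--\ref{subsec:ising limit law}. Specifically, the authors support Part~\ref{conj-part:ising supercritical moments} with Figure~\ref{fig:supercritical_ising_moments}, Part~\ref{conj-part:ising critical moments} with Figure~\ref{fig:ising_critical_moments} and least-squares fits, Part~\ref{conj-part:ising relaxation time} with the exact one-dimensional formula~\eqref{eq:ising 1d trel} together with Figures~\ref{fig:ising1dstd} and~\ref{fig:ising rho scaled} and Table~\ref{tab:d=2 ising sigma/tauexp ratios}, and Part~\ref{conj-part:ising limit law} with the histograms in Figures~\ref{fig:ising_off_critical_gumbel} and~\ref{fig:ising_critical_limit_law}. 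There is no rigorous argument to compare against.

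Your proposal is therefore not a competing proof but an outline of how one \emph{might} try to prove the conjecture, and as such it is a reasonable sketch for the subcritical one-dimensional case, where the machinery of Sections~\ref{sec:FK one dimension coupong collector behaviour}--\ref{sec:FK one dimension distribution} could plausibly be adapted. Two remarks on the details: first, your claim that $\sigma_\coupon = \Theta(L^{d/2})$ is incorrect---from~\eqref{coupon collector variance} one has $\sigma_\coupon \sim (\pi/\sqrt{6})\,L^d$ (the coupon collector here runs over the $L^d$ vertices), so the gap between $\sigma_\coupling$ and $\sigma_\coupon$ at criticality is governed by the dynamic exponent $z_{\exp}$, not by a square-root discrepancy. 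Second, your own assessment of the critical case is accurate: identifying $C_4(\betac,d)$ and the non-Gumbel limit $F$ would require structural information about the critical spectrum that is not currently available, which is precisely why the paper leaves these as open questions rather than theorems.
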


The numerical results presented in Section~\ref{subsec:ising limit law} strongly suggest that the limit law conjectured in
Part~\ref{conj-part:ising limit law} is not a Gumbel distribution when $\beta=\betac$. We offer no conjecture on the form of the limiting
distribution in this case; it appears to be an interesting open question. Similarly, we offer no conjecture for the exact form of $C_4(\beta,d)$ 
at $\beta=\betac$.

Preliminary results, for very small $L$ values with $d=2$, suggest that $(\coupling-\mu_\coupling)/\sigma_\coupling$ also converges to a
non-degenerate limit law as $L\to\infty$ when $\beta>\betac$, which again appears not to be $G(x)$. Furthermore, it also seems plausible
that $\sigma_\coupling \asymp \expauto$ remains true when $\beta>\betac$. However, given the computational difficulties in simulating this
regime, we have not attempted to test these predictions for $\beta>\betac$ in a detailed manner, and we therefore do not include their
statements in Conjecture~\ref{conj:ising heat-bath}.

\subsection{Moments}
\label{subsec:ising moments}
We begin by considering the high-temperature regime.  Fig.~\ref{fig:supercritical_ising_moments} plots the $L$ dependence of
$\mu_\coupling/\mu_\coupon$ and $\sigma_\coupling/\sigma_\coupon$ with $d=2$ and $\beta=0.4<\betac$. The data clearly support
Part~\ref{conj-part:ising supercritical moments} of Conjecture~\ref{conj:ising heat-bath}. We note that $C(\beta,d)$ seems to be strictly larger
than 1, and strongly $\beta$ dependent.
\begin{figure}
\centering
\includegraphics[width=0.49\columnwidth]{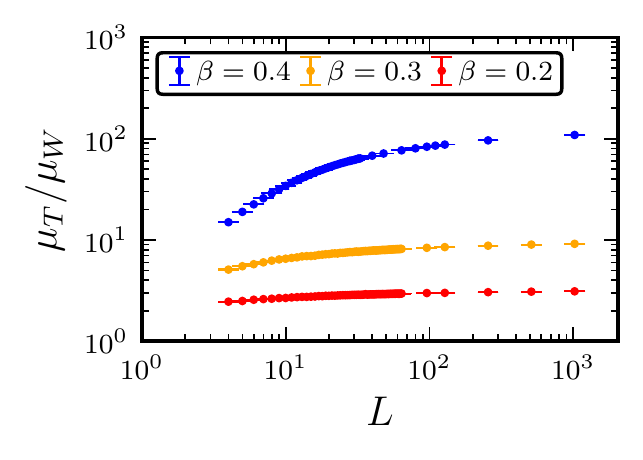}
\includegraphics[width=0.49\columnwidth]{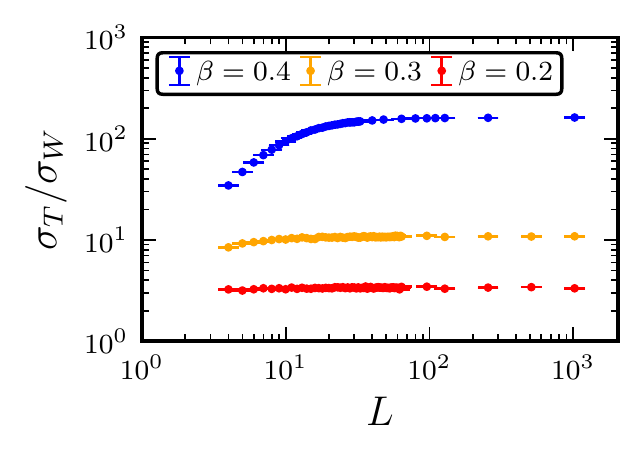}
    \caption{
      Monte Carlo estimates of $\mu_\coupling/\mu_\coupon$ (left) and $\sigma_\coupling/\sigma_\coupon$ (right) for the Ising heat-bath process with $d=2$
      and $\beta<\betac$ values as specified in the figure.
      Error bars corresponding to one standard error are shown.
\label{fig:supercritical_ising_moments}}
\end{figure}

Turning to the critical case, Fig.~\ref{fig:ising_critical_moments} shows the $L$ dependence of $\mu_\coupling$ and $\sigma_\coupling$ for $d=2$.
The figure clearly suggests that both $\mu_\coupling/L^d$ and $\sigma_\coupling/L^d$ diverge like a power law in $L$, with the \emph{same} exponent.
A least squares analysis for $\mu_\coupling$ produces a power-law exponent $2.168(4)$, while an analogous analysis for $\sigma_\coupling$ produces 
an exponent
\begin{equation}
z_\coupling=2.166(9).
\label{eq:ising z_T estimate}
\end{equation}
The combination of the figure and the fits lends strong support to Part~\ref{conj-part:ising critical moments} of Conjecture~\ref{conj:ising heat-bath}, 
that $\mu_\coupling/\sigma_\coupling$ approaches a constant as $L\to\infty$.
 \begin{figure}
 \centering
 \includegraphics[width=0.75\columnwidth]{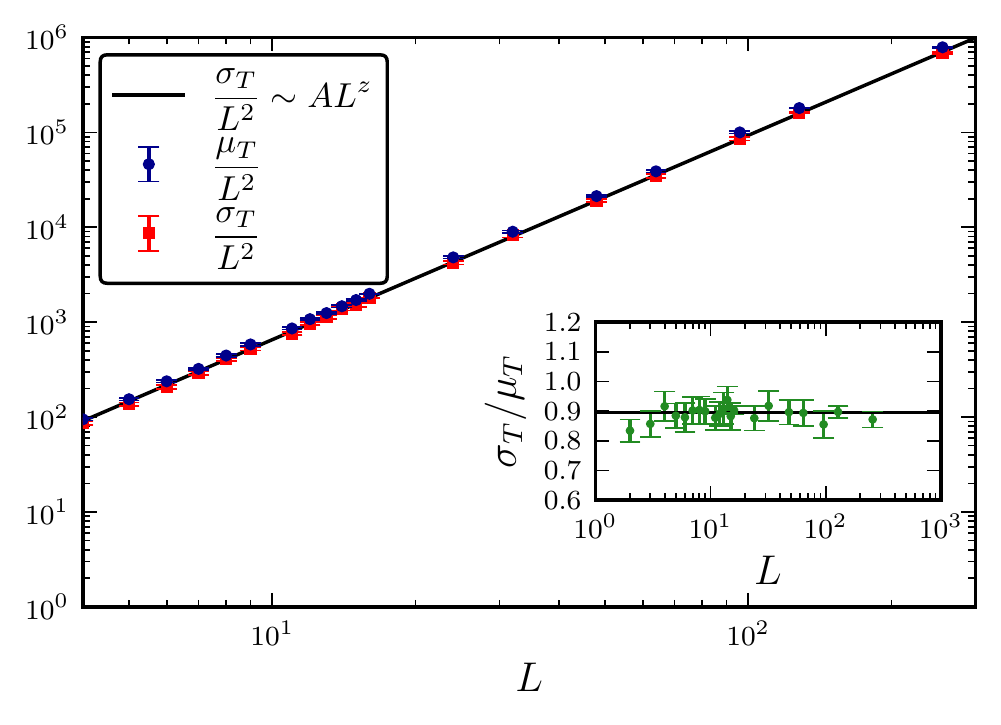}
 \caption{
 \label{fig:ising_critical_moments}
 Monte Carlo estimates of $\mu_\coupling/L^d$ and $\sigma_\coupling/L^d$ for the critical Ising heat-bath process with $d=2$.
 The solid black line shows the curve $A L^z$, with the estimated values of $A$ and $z=2.166$.
 The inset shows the ratio $\sigma_\coupling/\mu_\coupling$. The solid line within the inset corresponds to the estimated
 asymptotic limit of $\sigma_\coupling/\mu_\coupling\rightarrow 0.895(8)$.
 Error bars corresponding to one standard error are shown.
 }
 \end{figure}

\subsection{Variance and relaxation time}
\label{subsec:ising std vs relaxation time}
We now turn attention to Part~\ref{conj-part:ising relaxation time} of Conjecture~\ref{conj:ising heat-bath}. We first consider the case
$d=1$, where the relaxation time can be calculated explicitly. It was shown in~\cite[Lemma 4]{Nacu03} that if the transition matrix, $P$, of
the Ising heat-bath process (on any graph) has a strictly increasing eigenfunction, then its eigenvalue is the second-largest eigenvalue, $\lambda_2$. The
total magnetization $\mathcal{M} = \sum_{i=1}^L \omega_i$ is clearly strictly increasing. Moreover, on $\mathbb{Z}_L$ it is known (see
e.g. the proof of Theorem 15.4 in~\cite{LevinPeresWilmer09}) that $\sM$ is an eigenfunction of $P$ with eigenvalue
\begin{equation}
\lambda(\beta) = 1-\frac{1-\tanh{(2\beta)}}{L}.
\end{equation}
This immediately yields the following closed-form expression for the relaxation time on $\ZZ_L$
\begin{equation}
\rel(L) = \frac{L}{1 - \tanh(2 \beta)}.
\label{eq:ising 1d trel}
\end{equation}
Fig.~\ref{fig:ising1dstd} compares Monte Carlo estimates of $\sigma_\coupling$ on $\ZZ_L$ with the exact expression for $\rel$ given
in~\eqref{eq:ising 1d trel}. The agreement is clearly excellent, over the entire range of $\beta$ considered, thus lending strong support to 
Part~\ref{conj-part:ising relaxation time} of Conjecture~\ref{conj:ising heat-bath} in the case $d=1$.
\begin{figure}
\centering
\includegraphics[width=0.75\columnwidth]{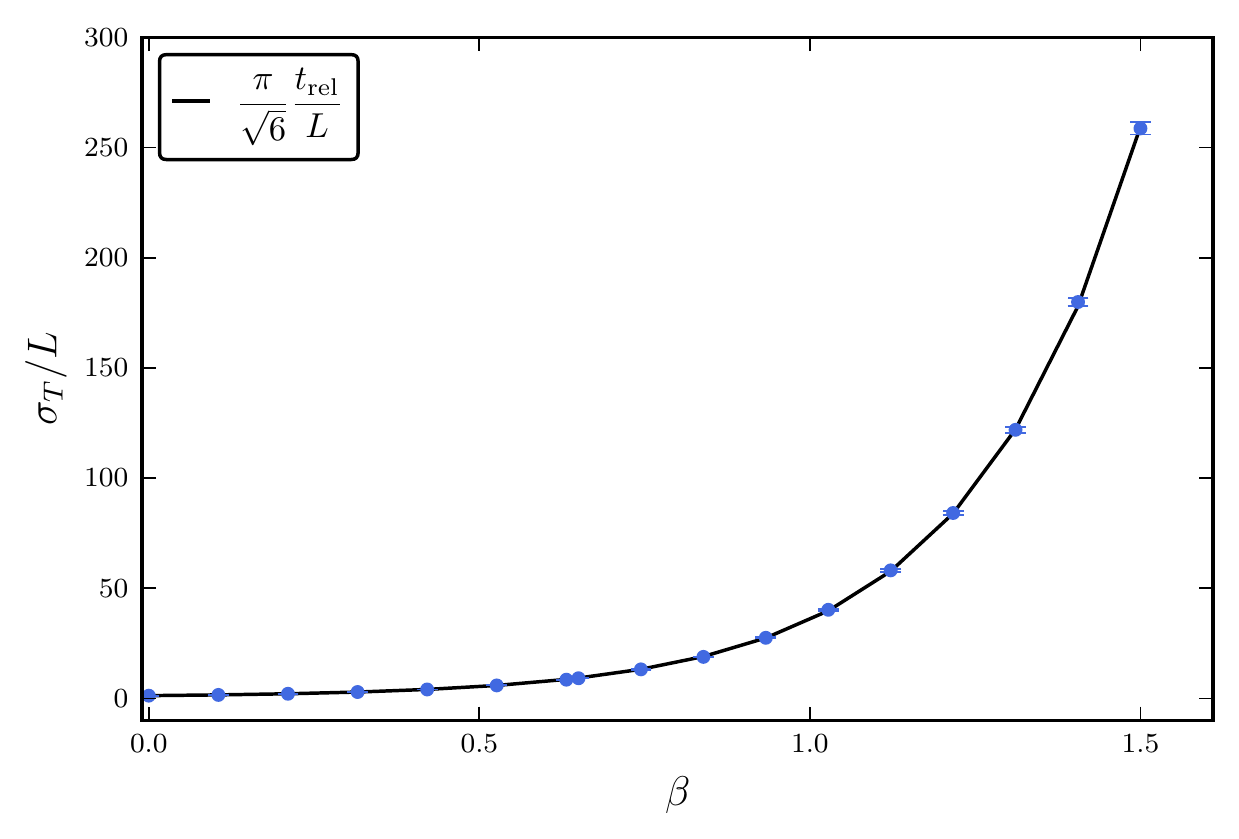}
    \caption{Monte Carlo estimates of $\sigma_\coupling/L$ for the Ising heat-bath process on $\ZZ_L$ with $L=10^4$. The blue curve 
      corresponds to the exact expression for $\rel/L$ given in~\eqref{eq:ising 1d trel}. 
      Error bars corresponding to one standard error are shown.
\label{fig:ising1dstd}}
\end{figure}

We now consider the case $d>1$, using analogous arguments to those presented in Section~\ref{subsec:fk relaxation time} in the FK setting.  Let
$(X_t)_{t\in\naturals}$ be a stationary Ising heat-bath process, and define $(\sM_t)_{t\in\naturals}$ via $\sM_t=\sM(X_t)$.  Although
Proposition~\ref{prop:decay of rho_N} is stated in the specific context of the FK heat-bath process, the positive association of the Ising
measure~\eqref{Ising distribution} (see e.g.~\cite[Theorem 3.31]{FriedliVelenik16}) implies that the proof of Lemma~\ref{lem:projection},
and then also the proof of Proposition~\ref{prop:decay of rho_N}, immediately extend to the Ising heat-bath process.  It follows that, since
the magnetization is strictly increasing, we have
\begin{equation}
\rho_{\sM}(t) \sim C e^{-t/\expauto}, \qquad t\to\infty
\label{eq:asymptotics of rho_M}
\end{equation}
for some (parameter dependent) constant $C>0$. Assuming the validity of 
Part~\ref{conj-part:ising relaxation time} of Conjecture~\ref{conj:ising heat-bath}, it follows
from~\eqref{eq:asymptotics of rho_M} that
\begin{equation}
\ln\rho_{\sM}(k\,\sigma_\coupling) \sim -C(\beta,d)\, k
\label{ising scaled rho ansatz}
\end{equation}
as $k$ and $L$ tend to infinity, with $C(\beta,d)>0$, and with $C(\beta,d)=\pi/\sqrt{6}$ for all $\beta<\betac$. 

For a given time lag $t$, we estimated $\rho_{\sM}(t)$ using the procedure described in Section~\ref{subsec:fk relaxation time} for the
estimation of $\rho_{\sN}(t)$ for the FK model.  Fig.~\ref{fig:ising rho scaled} shows the resulting estimates of
$\rho_{\sM}(k\,\sigma_\coupling)$ versus $k$ for $d=2$, in the high-temperature regime (left panel) and at criticality (right panel), for a
variety of $L$ values. In both cases, the data collapse evident in the figure clearly supports the expectation~\eqref{ising scaled rho
  ansatz}, and therefore provides direct evidence to support the conjecture that $\sigma_\coupling \sim C(\beta,d)\, \rel$. Moreover, in the
high-temperature case, the collapse of the curves arising from distinct temperature values onto a single curve corresponding to
$\exp(-k\pi/\sqrt{6})$, supports the claim that $C(\beta,d)=\pi/\sqrt{6}$ when $\beta<\betac$.
 \begin{figure}
 \centering
 \includegraphics[width=0.45\columnwidth]{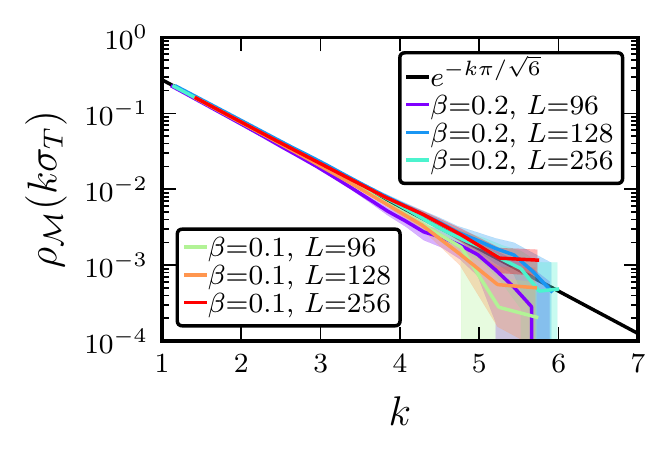}
 \includegraphics[width=0.45\columnwidth]{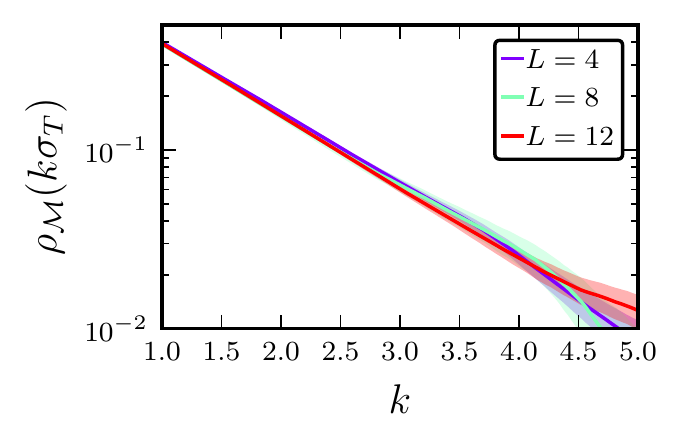}
 \caption{\label{fig:ising rho scaled}
 Monte Carlo estimates of $\ln \rho_{\sM}(\sigma_\coupling\,k)$ for the Ising heat-bath process with $d=2$ in high temperature (left) and at
 criticality (right).  The enclosing filled regions correspond to one standard error.
 }
 \end{figure}

\begin{table}
\begin{center}
\begin{tabular}{|c|c|}
\hline
$L$ & $\sigma_\coupling/\expauto$ \\
\hline
 4 & 0.895(1) \\
 5 & 0.894(3) \\
 6 & 0.901(3) \\
 7 & 0.897(3) \\
 8 & 0.889(3) \\
 9 & 0.898(4) \\
10 & 0.890(3) \\
11 & 0.893(3) \\
12 & 0.904(3) \\
13 & 0.893(3) \\
14 & 0.896(4) \\
15 & 0.894(3) \\
\hline
\end{tabular}
\end{center}
\caption{\label{tab:d=2 ising sigma/tauexp ratios}
Ratios of estimated $\sigma_\coupling$ values to the estimated values of $\expauto$ from \cite{NightingaleBloete96}, for the critical Ising
heat-bath process when $d=2$.
Error bars corresponding to one standard error are shown.
}
\end{table}

In the critical case, we have no explicit conjecture for the value of $C(\beta,d)$. However, using the critical $d=2$ values of $\expauto$
reported in~\cite{NightingaleBloete96}, we computed the ratios $\sigma_\coupling/\expauto$, which are reported in Table~\ref{tab:d=2 ising
  sigma/tauexp ratios}. The first observation to make is that, for the $L$ values considered, there appears to be
extremely weak $L$ dependence; in fact, the size of any $L$ dependence appears to be smaller than our statistical errors. In particular,
this gives direct, independent, support to the conjectured asymptotic proportionality of $\sigma_\coupling$ and $\expauto$. Furthermore, it
suggests that we have $C(\betac,2)\approx 0.895$. It is interesting to note that this constant agrees, within error bars, with the constant
of proportionality relating $\sigma_\coupling$ to $\mu_\coupling$, reported in Fig.~\ref{fig:ising_critical_moments}, suggesting the
possibility that $\mu_\coupling \sim \expauto$ at criticality, at least when $d=2$.

Finally, as yet further evidence to support Part~\ref{conj-part:ising relaxation time} of Conjecture~\ref{conj:ising
  heat-bath} in the critical case, we note that the estimated value of the exponent~\eqref{eq:ising z_T estimate}, governing
$\sigma_\coupling$ at criticality for $d=2$, agrees, within error bars, with Grassberger's~\cite{Grassberger95} estimate for the dynamic
exponent $z_{\exp} = 2.172(6)$.

\subsection{Limit law}
\label{subsec:ising limit law}
Fig.~\ref{fig:ising_off_critical_gumbel} plots the empirical distribution of the standardized coupling time
$\standard:=(\coupling-\mu_\coupling)/\sigma_\coupling$ for a high-temperature Ising heat-bath process with $d=2$ and $L=1024$.  The
agreement with the Gumbel distribution~\eqref{gumbel distribution function} clearly supports Part~\ref{conj-part:ising limit law} of
Conjecture~\ref{conj:ising heat-bath} in the case $\beta<\betac$.  Fig.~\ref{fig:ising_critical_limit_law} shows the critical case, again
with $d=2$. The data collapse of the $L=128$ and $L=256$ curves strongly supports the claim that $\standard$ converges in distribution to a
non-degenerate limit, thus supporting Part~\ref{conj-part:ising limit law} of Conjecture~\ref{conj:ising heat-bath} in the case
$\beta=\betac$. However, it is clear that this limiting distribution is not $G(x)$.
\begin{figure}
\centering
\subfigure[\label{fig:ising_off_critical_gumbel}]{\includegraphics[width=0.49\columnwidth]{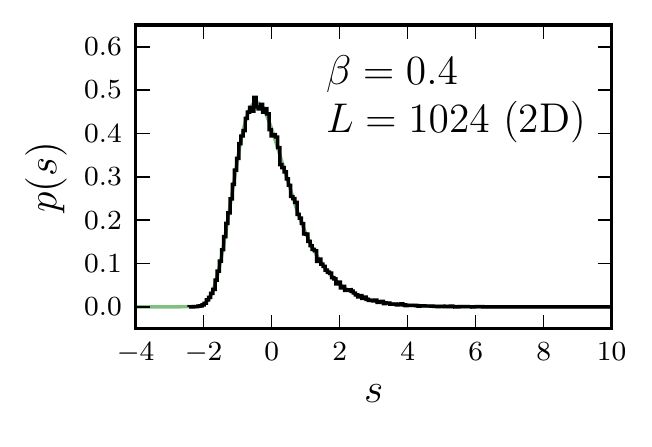}}
\subfigure[\label{fig:ising_critical_limit_law}]{\includegraphics[width=0.49\columnwidth]{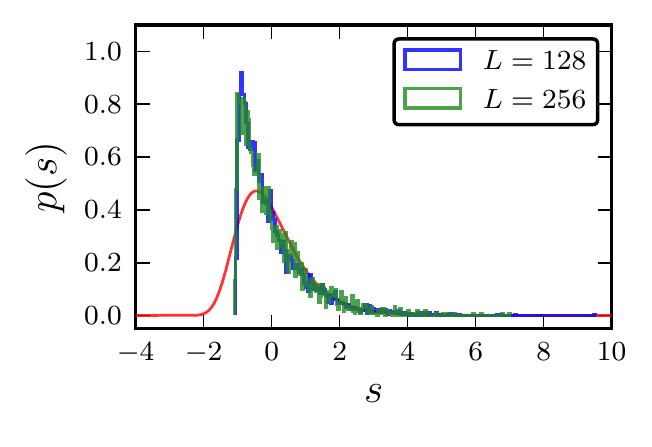}}
    \caption{
      Histogram of $\standard$ at high temperature (left) and criticality (right), with parameters as specified in the figure.
      Here $p(s)$ denotes the probability density function of $\standard$. For comparison, the solid
      green line shows the probability density function corresponding to \eqref{gumbel distribution function}.
}
\end{figure}

\begin{acknowledgements}
  The authors thank Youjin Deng, Alan Sokal, and Ulli Wolff for useful discussions. 
  This work was supported under the Australian Research Council's Discovery Projects funding scheme (project numbers DP140100559 \&
  DP110101141), and T.G. is the recipient of an Australian Research Council Future Fellowship (project number FT100100494). A.C. would like
  to thank STREP project MATHEMACS.
\end{acknowledgements}

\appendix
\section{Autocorrelation functions of strictly increasing observables}
\label{appendix:autocorrelation functions of increasing observables}
Let $P$ denote the transition matrix of the FK heat-bath process on a finite graph $G=(V,E)$ with parameters $p\in(0,1)$ and $q\ge1$, and let $k=2^{|E|}$.
To avoid trivialities, we assume $|E|>1$. We regard elements of $\RR^k$ as functions from $2^E$ to $\RR$, and
we endow $\RR^k$ with the inner product $\<\cdot,\cdot\>$ defined by
$$
\<g,h\> := \sum_{A\subseteq E} g(A)\,h(A)\,\fk(A).
$$
Denote the eigenvalues of $P$ by $1=\lambda_1 > \lambda_2\ge \ldots \ge\lambda_k$. 
As mentioned in Section~\ref{subsec:previous studies}, general results for heat-bath chains~\cite{DyerGreenhillUllrich14} imply that all $\lambda_i$ are 
non-negative. Let $\{\psi_i\}_{i=1}^k$ be an orthonormal basis for $\RR^k$ such that $\psi_i$ is an eigenfunction of $P$ corresponding to $\lambda_i$.
The Perron-Frobenius theorem implies that the eigenspace of $\lambda_1$ is one-dimensional, and that we can take $\psi_1(A)=1$ for all $A\subseteq E$. 
Let $W$ denote the eigenspace of $\lambda_2$. For $g\in\RR^k$, we let $g_W$ denote its projection onto $W$.

We say $g\in\RR^k$ is \emph{increasing} if $A\subset B$ implies $g(A)\le g(B)$, and \emph{strictly increasing} if $A\subset B$ implies
$g(A) < g(B)$.

\begin{proposition}\label{prop:decay of rho_N}
Let $(X_t)_{t\in\naturals}$ be a stationary FK heat-bath process, and for $g\in\RR^k$ define $(g_t)_{t\in\naturals}$ via $g_t:=g(X_t)$. 
If $g$ is strictly increasing, then its autocorrelation function satisfies
$$
\rho_{g}(t) := \frac{\cov(g_0,g_t)}{\var(g_0)} \sim C e^{-t/\expauto},\qquad t\to\infty,
$$
for constant $C>0$.
\begin{proof}
Let $\Pi$ denote the projection matrix onto the space of constant functions. 
General arguments (see e.g.~\cite{Sokal97} or~\cite[Chapter 9]{MadrasSlade96}) imply
$$
\cov(g_0,g_t) = \<g,(P^t-\Pi)g\> = \sum_{l=2}^k \<g,\psi_l\>^2\lambda_l^t = \|g_W\|^2\lambda_2^t \,+\, \sum_{l=\dim(W)+2}^k\<g,\psi_l\>^2\lambda_l^t.
$$
Since $g$ is strictly increasing, Lemma~\ref{lem:projection} implies that $\|g_W\|^2>0$, and therefore
$$
\cov(g_0,g_t) \sim \|g_W\|^2 e^{-t/\expauto}, \qquad t\to\infty.
$$
It follows that
$$
\rho_g(t)  \sim  \frac{\|g_W\|^2}{\var(g)} e^{-t/\expauto}, \qquad t\to\infty.
$$
\end{proof}
\end{proposition}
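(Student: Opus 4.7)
My plan is to exploit reversibility of $P$ (which holds since the heat-bath chain has stationary distribution $\fk$) to obtain a spectral decomposition in $\ell^2(\fk)$, read off the leading decay rate, and then show the leading coefficient is strictly positive by combining monotonicity of the FK heat-bath coupling with the FKG inequality.

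\emph{Step 1: spectral expansion of the covariance.} Reversibility makes $P$ self-adjoint on $\RR^k$ with the inner product $\<\cdot,\cdot\>$, so the orthonormal eigenbasis $\{\psi_i\}$ of the statement exists with real eigenvalues, and (by~\cite{DyerGreenhillUllrich14}) they are non-negative. Writing $g-\EE g=\sum_{i\ge 2}\<g,\psi_i\>\psi_i$ and using $P^t\psi_i=\lambda_i^t\psi_i$ together with $\psi_1\equiv 1$, I would obtain
\begin{equation*}
\cov(g_0,g_t)=\<g,P^t g\>-(\EE g)^2=\sum_{i=2}^k \<g,\psi_i\>^2\,\lambda_i^t.
\end{equation*}
Grouping the terms with $\lambda_i=\lambda_2$ into the projection onto $W$, and using $\lambda_2=e^{-1/\expauto}$ from definition~\eqref{exponential autocorrelation time definition}, this becomes
\begin{equation*}
\cov(g_0,g_t)=\|g_W\|^2\,e^{-t/\expauto}+\sum_{i:\lambda_i<\lambda_2}\<g,\psi_i\>^2\,\lambda_i^t,
\end{equation*}
so the remainder is $o(e^{-t/\expauto})$ and the claimed asymptotics with $C=\|g_W\|^2/\var(g)$ follow \emph{provided} $\|g_W\|^2>0$.

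\emph{Step 2: positivity of the leading coefficient.} This is the only delicate point, and it is the main obstacle. The strategy is to show that $W$ contains a non-constant increasing function $\chi$; then, since $\chi$ is orthogonal to constants, $\<g,\chi\>=\cov(g,\chi)$, and the FKG inequality for the random-cluster measure with $q\ge 1$ (applicable on any finite graph with $p\in(0,1)$) gives $\cov(g,\chi)>0$ because $g$ is strictly increasing and $\chi$ is a non-constant increasing function on a measure with full support. This implies $g_W\ne 0$. To produce such a $\chi$, I would invoke the monotonicity of the random mapping representation~\eqref{random mapping definition}: since $f$ is monotone and $P$ has a monotone coupling, standard arguments (in the spirit of~\cite[Lemma 4]{Nacu03}, adapted to the edge-update heat-bath chain on a distributive lattice) show that the second-largest eigenvalue of $P$ admits an increasing eigenfunction. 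The candidate eigenfunction is obtained as a limit of $\lambda_2^{-t}(P^t h-\EE h)$ for any strictly increasing $h$ such that $h_W\ne 0$; in the monotone setting one can show the limit exists and is itself increasing by monotone coupling of the top and bottom chains started from $h$.

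\emph{Anticipated difficulty.} The routine part is Step 1, which is a standard spectral manipulation. Almost all of the work lies in Step 2, specifically in producing an increasing eigenfunction in $W$. Two issues must be handled cleanly: first, if $\lambda_2$ has multiplicity greater than one, then it is only the whole eigenspace $W$ that is canonically defined, and one must check that \emph{some} element of $W$ is increasing and non-constant; second, the FKG argument requires that the candidate $\chi$ not be $\fk$-a.s.\ constant, so the monotonicity construction must yield a non-trivial limit. I would package these points as a separate lemma (the statement used in the proof is called Lemma~\ref{lem:projection}) asserting that $g_W\ne 0$ whenever $g$ is strictly increasing; once that lemma is in hand, the proposition is immediate from Step 1.
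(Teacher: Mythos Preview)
Your approach is essentially the paper's: spectral expansion, then show $g_W\ne0$ by producing a non-zero increasing element of $W$ via the limit $\lambda_2^{-t}P^t h\to h_W$ (using that $P$ preserves monotonicity), and finally invoke FKG. This is exactly the chain of lemmas~\ref{lem:projection}--\ref{lem: lambda_2 is positive} in the appendix.

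There is, however, a circularity in Step~2 that the paper resolves and you do not. You propose to take the limit of $\lambda_2^{-t}(P^t h-\EE h)$ ``for any strictly increasing $h$ such that $h_W\ne0$''. But the assertion you are trying to prove is precisely that strictly increasing functions have $h_W\ne0$; you cannot assume such an $h$ exists. The limit construction only shows that \emph{if} an increasing $h$ has $h_W\ne0$, then $h_W$ is increasing --- it does not by itself produce a non-trivial increasing element of $W$. The paper breaks the circle with an explicit construction (Lemma~\ref{lem:increasing eigenfunction}): take $h=\psi_2+C(\sN-\EE\sN)$ for $C$ large. The $\psi_2$ term guarantees $h_W\ne0$ by fiat, and since $\sN$ is strictly increasing, $C$ can be chosen large enough to dominate the oscillations of $\psi_2$ and make $h$ increasing. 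Then $h_W$ is the desired increasing, non-zero element of $W$.

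A second minor point: you invoke a \emph{strict} FKG inequality (``$\cov(g,\chi)>0$'') directly. The paper instead uses only the standard non-strict FKG, and deduces strictness by noting that $g-\alpha\chi$ is still increasing for small $\alpha>0$, whence $\<g-\alpha\chi,\chi\>\ge0$ gives $\<g,\chi\>\ge\alpha\|\chi\|^2>0$. Your direct claim is correct under the stated hypotheses (full support, $g$ strictly increasing, $\chi$ non-constant increasing), but it warrants a line of justification.
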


\begin{lemma}\label{lem:projection}
If $g$ is strictly increasing, then its projection onto $W$ is non-zero.
\begin{proof}
Lemma~\ref{lem:increasing eigenfunction} implies there exists $\psi\in W$ which is non-zero and increasing.
Positive association (see e.g.~\cite[Theorem 3.8 (b)]{Grimmett06}) then implies that for any other increasing $g$ we have
\begin{equation}
\<g,\psi\> \ge \EE(g)\,\EE(\psi) = 0,
\label{positive association}
\end{equation}
since $\EE(\psi)=\<\psi_1,\psi\>=0$. In particular, suppose that $g$ is strictly increasing. Choosing $\alpha>0$ so that 
$$
g(B)-g(A) > \alpha[\psi(B)-\psi(A)], \qquad \text{ for all } A \subset B\subseteq E,
$$
implies that $g-\alpha \psi$ is also strictly increasing. Applying~\eqref{positive association} to $g-\alpha \psi$ then yields
$$
\<g-\alpha \psi,\psi\>\ge0.
$$
Rearranging, and using the fact that $\psi$ is non-zero then implies
$$
\<g,\psi\> \ge \alpha\<\psi,\psi\> >0.
$$
Therefore, $g$ has a non-zero projection onto $\psi\in W$, and the stated result follows.
\end{proof}
\end{lemma}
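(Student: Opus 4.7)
My plan is a two-step argument: first produce a non-zero increasing element $\psi \in W$, then invoke FKG positive association of the random-cluster measure (available because $q \geq 1$) to show that $\langle g,\psi\rangle > 0$ whenever $g$ is strictly increasing; this immediately forces $g_W \neq 0$.

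For the construction of the increasing eigenfunction, I would first observe that $P$ preserves the cone of increasing functions. This follows from the monotone random mapping representation~\eqref{random mapping definition}: if $h$ is increasing and $A \subseteq B$, coupling the two transitions through the same pair $(\sE,U)$ gives $f(A,\sE,U) \subseteq f(B,\sE,U)$, and hence $Ph(A) = \EE\, h(f(A,\sE,U)) \leq \EE\, h(f(B,\sE,U)) = Ph(B)$. Since $P\Pi = \Pi P = \Pi$, one has $(P-\Pi)^n = P^n - \Pi$ for $n \geq 1$, so $(P^n - \Pi)h_0$ is increasing whenever $h_0$ is. Now choose any increasing $h_0$ with non-zero projection onto $W$; such a choice exists because the indicator functions $\indicator_{\{C \supseteq B\}}$ (for $B \subseteq E$) are increasing and form a basis of $\RR^k$, so $W$ cannot be orthogonal to every increasing function. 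Spectral decomposition then yields
$$
\frac{(P^n - \Pi)h_0}{\lambda_2^n} \;\longrightarrow\; h_{0,W} \qquad (n\to\infty),
$$
since the remaining modes carry factors $(\lambda_j/\lambda_2)^n \to 0$. The left-hand side is increasing for every $n$, and increasingness is preserved under pointwise limits on the finite poset $2^E$, so the limit $h_{0,W}$ is the desired non-zero increasing element of $W$.

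Given such a $\psi$, the FKG inequality for $\fk$ yields $\langle g,\psi\rangle \geq \EE(g)\,\EE(\psi) = 0$ for any increasing $g$, because $\psi \perp \psi_1 = 1$ forces $\EE(\psi)=0$. To sharpen this to a strict inequality for strictly increasing $g$, I would use a subtraction trick: because $g$ has strictly positive increments across every covering pair $A \subsetneq A \cup \{e\}$ while the increments of $\psi$ are bounded, there is some $\alpha > 0$ small enough that $g - \alpha\psi$ remains strictly increasing. Applying FKG to $g - \alpha\psi$ and $\psi$ then gives $\langle g-\alpha\psi,\psi\rangle \geq 0$, hence $\langle g,\psi\rangle \geq \alpha\|\psi\|^2 > 0$, and so $g_W \neq 0$.

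The step I expect to be the main obstacle is the construction of $\psi$: verifying that the increasing functions span $\RR^k$ and that increasingness survives the iterative pointwise limit are both elementary, but they require some care to state cleanly, and the argument hinges on the spectral gap between $\lambda_2$ and the next distinct eigenvalue being handled correctly. Once these facts are in place and $\psi$ is produced, the FKG step combined with the subtraction trick is essentially mechanical.
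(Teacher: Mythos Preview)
Your proposal is correct and follows the paper's approach closely: construct a non-zero increasing $\psi \in W$ as the limit $\lambda_2^{-n}(P^n-\Pi)h_0$ applied to an increasing seed with non-trivial $W$-component, then use FKG together with the subtraction $g \mapsto g-\alpha\psi$ to force $\langle g,\psi\rangle \geq \alpha\|\psi\|^2 > 0$. The only difference is cosmetic: the paper selects the seed explicitly as $\psi_2 + C(\sN-\EE\sN)$ for large $C$, whereas you argue that the increasing up-set indicators $\indicator_{\{\cdot \supseteq B\}}$ span $\RR^k$ and hence some increasing function must project non-trivially onto $W$; both routes also tacitly use $\lambda_2>0$, which the paper records as a separate lemma.
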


The following lemma is the natural analogue, in the FK setting, of the result~\cite[Lemma 3]{Nacu03} established for the Ising heat-bath process.
\begin{lemma}\label{lem:increasing eigenfunction}
There exists $\psi\in W$ which is non-zero and increasing.
\begin{proof}
Let $g=\psi_2+ C (\sN-\EE(\sN))$, where $\sN\in\RR^k$ is defined so that $\sN(A)=|A|$ for each $A\subseteq E$, and $C>0$ is a constant. We have
$$
g = [1+C\<\sN,\psi_2\>]\psi_2 + C \sum_{j=3}^k \<\sN,\psi_j\>\psi_j.
$$
If $\<\sN,\psi_2\>=0$, then $g$ has a non-zero projection onto $\psi_2$, for any choice of $C>0$. If $\<\sN,\psi_2\>\neq0$, then choosing 
$C>|\<\sN,\psi_2\>|^{-1}$ suffices to guarantee that $g$ again has a non-zero projection onto $\psi_2$. In either case, assume $C$ is so chosen. 
It follows that $g_W$ is non-zero. 

If $A\subset B$, then
$$
g(B) - g(A) = \psi_2(B)-\psi_2(A) + C[\sN(B)-\sN(A)] \ge \min_{A\subset B\subseteq E} [\psi_2(B) - \psi_2(A)] + C.
$$
Therefore, by choosing $C>\left|\min\limits_{A\subset B\subseteq E} [\psi_2(B) - \psi_2(A)]\right|$ we guarantee that $g$ is increasing.
Lemma~\ref{lem: increasing functions have increasing projections} then implies that $g_W$ is increasing. Therefore, $\psi=g_W$ is an increasing, non-zero 
element of $W$.
\end{proof}
\end{lemma}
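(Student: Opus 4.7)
\medskip

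The target is to produce a non-zero, increasing element of the second eigenspace $W$. My plan is to exhibit an increasing $g$ whose projection $g_W$ is non-zero, and then argue that this projection itself must be increasing. The workhorse is the elementary identity, valid for any $h\in\RR^k$,
\[
\lambda_2^{-t}\big(P^t-\Pi\big)h \;=\; g_W \;+\; \sum_{i:\lambda_i<\lambda_2}\Big(\frac{\lambda_i}{\lambda_2}\Big)^{\!t}\<h,\psi_i\>\psi_i \;\longrightarrow\; h_W, \qquad t\to\infty,
\]
which holds because $P$ is reversible (so diagonalizable in the inner product $\<\cdot,\cdot\>$) and $\lambda_2>0$ by the general heat-bath result~\cite{DyerGreenhillUllrich14}.

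First I would isolate as an auxiliary step (the lemma invoked in the statement as ``increasing functions have increasing projections'') the claim that $h$ increasing implies $h_W$ increasing. This rests on the fact that the heat-bath operator $P$ preserves the cone of increasing functions: since $f(A,e,u)\subseteq f(B,e,u)$ whenever $A\subseteq B$, there is a monotonic coupling of the chains started at $A$ and at $B$, giving $\EE_A[h(X_1)]\le \EE_B[h(X_1)]$ whenever $h$ is increasing, i.e. $Ph$ is increasing. Iterating, $P^th-\EE h$ is increasing for every $t\ge 0$, and multiplying by the positive scalar $\lambda_2^{-t}$ and passing to the limit gives that $h_W$ is increasing (since pointwise limits of increasing functions are increasing on the finite set $2^E$).

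Second I would construct an explicit increasing $g$ with $g_W\ne 0$ by a perturbation trick. Take the strictly increasing observable $\sN(A)=|A|$ and fix some unit eigenfunction $\psi\in W$, which exists since $W$ is nontrivial. Set
\[
D := \max_{\substack{A\subset B\subseteq E\\ |B\setminus A|=1}}|\psi(B)-\psi(A)|,
\]
and choose any $C>D$ with $C\ne -1/\<\sN,\psi\>$ (trivially possible since the forbidden value is a single number). Then $g := C\sN + \psi$ has increments $\ge C-D>0$ across any edge inclusion, so $g$ is strictly increasing; and $\<g,\psi\> = C\<\sN,\psi\>+1\ne 0$, so $g_W\ne 0$. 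The first step then guarantees that $\psi:=g_W$ is a non-zero increasing element of $W$.

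The main obstacle is the first step: one must be confident that the limit $\lambda_2^{-t}(P^t-\Pi)g$ exists and equals $g_W$ in a setting where $W$ may have dimension larger than one and where $\lambda_2$ might coincide with $\lambda_3,\lambda_4,\dots$ only up to $\dim W$. This is where the spectral decomposition available for the reversible chain $P$, together with positivity of all eigenvalues, is essential; otherwise the sign oscillations of $\lambda_i^t$ or the vanishing of $\lambda_2$ would destroy the argument. The rest is routine, and the construction in the second step is only needed to rule out the degenerate possibility that every increasing function happens to be orthogonal to $W$.
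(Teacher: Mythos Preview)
Your proposal is correct and follows essentially the same route as the paper: construct an increasing function by adding a large multiple of $\sN$ to a second eigenfunction, then use the spectral limit $\lambda_2^{-t}(P^t-\Pi)g\to g_W$ together with the monotonicity of $P$ to conclude the projection remains increasing. One small caveat: the Dyer--Greenhill--Ullrich result only gives non-negativity of the eigenvalues, not the strict inequality $\lambda_2>0$ that your limit argument requires; the paper supplies this separately via the observation that $P(A,B)=0\neq\fk(B)$ whenever $|A\symdif B|>1$, which rules out $\lambda_2=0$.
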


\begin{lemma}\label{lem: increasing functions have increasing projections}
If $g$ is increasing and has zero-mean, then its projection onto $W$ is also increasing.
\begin{proof}
Let $g\in\RR^k$ be any increasing observable with mean zero, and let $t\in\posint$. Since Lemma~\ref{lem: lambda_2 is positive} implies
$\lambda_2>0$, we can write
$$
\frac{P^t g}{\lambda_2^t}  
= g_W + \sum_{l=\dim(W)+2}^k \<g,\psi_l\>\psi_l\,\left(\frac{\lambda_l}{\lambda_2}\right)^t.
$$
It follows that
\begin{equation}
\lim_{t\to\infty} \frac{P^t g}{\lambda_2^t}  = g_W.
\label{eq:limit of P^t g}
\end{equation}

Now, for any given $t\ge1$, Lemma~\ref{lem:P preserves monotonicity} implies that $P^t g(A)$ is an increasing function of $A$, and so
$\lambda_2^{-t}P^t g (A)$ is also an increasing function of $A$. It then follows, as an elementary consequence of~\eqref{eq:limit of P^t g},
that $g_W$ is also increasing. We have therefore established that if $g$ is an increasing zero-mean function, then its
projection $g_W$ is also increasing.
\end{proof}
\end{lemma}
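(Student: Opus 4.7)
The plan is to recover $g_W$ as a rescaled long-time limit of $P^t g$, and then transfer the monotonicity of $P^t g$ (inherited from the monotonic random mapping representation) through the limit. Since the poset $2^E$ is finite, pointwise limits of increasing functions are increasing, so the argument is essentially: ``$P$ preserves monotonicity, and $g_W$ lives in the tail behavior of $P^t g$.''

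Concretely, I would first expand $g$ in the orthonormal basis $\{\psi_i\}_{i=1}^k$. Because $g$ has mean zero, $\<g,\psi_1\> = 0$, so $g = \sum_{l\ge 2}\<g,\psi_l\>\psi_l$ and therefore $P^t g = \sum_{l\ge 2}\<g,\psi_l\>\lambda_l^t\,\psi_l$. Grouping the terms for which $\lambda_l = \lambda_2$ into $g_W$ yields
\[
\lambda_2^{-t}\,P^t g \;=\; g_W \,+\, \sum_{l:\lambda_l < \lambda_2}\<g,\psi_l\>\left(\frac{\lambda_l}{\lambda_2}\right)^t\psi_l,
\]
whose right-hand side converges pointwise on $2^E$ to $g_W$ as $t\to\infty$, provided $\lambda_2 > 0$.

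Next I would argue that $A \mapsto P^t g(A)$ is increasing on $2^E$ for every fixed $t\ge 1$, using the monotonic random mapping representation $f$ introduced in~\eqref{random mapping definition}. Coupling two copies of the FK heat-bath chain with common noise $(\sE_s,U_s)_{s=1}^t$, started from $A\subseteq B$, monotonicity of $f$ gives $X_t^A \subseteq X_t^B$ almost surely, hence $P^t g(A) = \EE[g(X_t^A)] \le \EE[g(X_t^B)] = P^t g(B)$. Multiplying by the positive constant $\lambda_2^{-t}$ preserves this, so $\lambda_2^{-t} P^t g$ is increasing for every $t$. Since the pointwise limit of increasing functions on the finite poset $2^E$ is increasing, $g_W$ is increasing.

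The two genuine prerequisites the argument rests on are $\lambda_2 > 0$ for the FK heat-bath transition matrix, and the fact that $P$ (and hence each $P^t$) preserves the increasing cone — both of which I would state and prove as separate auxiliary lemmas. The main obstacle is really the first of these: non-negativity of the spectrum follows from the general results on heat-bath chains in~\cite{DyerGreenhillUllrich14}, but ruling out $\lambda_2 = 0$ requires a little extra input (one approach: exhibit an explicit non-constant eigenfunction with a strictly positive eigenvalue, e.g.\ by a perturbative/comparison argument, or use irreducibility and aperiodicity of $P$ together with the heat-bath structure). Once $\lambda_2 > 0$ is in hand, the monotonicity preservation step is immediate from the coupling in Section~\ref{FK heat-bath process definitions}, and the rest of the proof is the short finite-dimensional limiting argument sketched above.
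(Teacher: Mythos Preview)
Your proposal is correct and follows essentially the same route as the paper: express $g_W$ as the limit of $\lambda_2^{-t}P^t g$ via the spectral decomposition (using zero mean to kill the $\psi_1$ component), invoke monotonicity preservation of $P$ through the monotone coupling, and pass to the pointwise limit. The two auxiliary facts you flag are exactly the paper's Lemmas~\ref{lem:P preserves monotonicity} and~\ref{lem: lambda_2 is positive}; for the latter, the paper's argument is simply that $\lambda_2=0$ would force $P(A,B)=\fk(B)$ for all $A,B$, contradicting $P(A,B)=0$ whenever $|A\symdif B|>1$.
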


\begin{lemma}\label{lem:P preserves monotonicity}
If $g\in\RR^k$ is increasing, then $P^tg$ is also increasing, for every $t\ge1$.
\begin{proof}
Let $(f,\sE,U)$ be the random mapping representation for $P$ given in Section~\ref{subsec:definitions}; see~\eqref{random mapping definition}.
Let $A_1\subset A_2\subseteq E$, and let $B_i=f(A_i,\sE,U)$ for $i=1,2$. Clearly, $(B_1,B_2)$ is a coupling of the distributions $P(A_1,\cdot)$ and
$P(A_2,\cdot)$, and the monotonicity of $f$ implies $B_1\subseteq B_2$. Strassen's theorem (see e.g.~\cite[Theorem 4.2]{Grimmett10}) then implies that 
$$
\EE_{P(A_1,\cdot)}(g) \le \EE_{P(A_2,\cdot)}(g)
$$
for any increasing $g\in\RR^k$. It follows that 
$$
(Pg)(A_1) = \sum_{B\subseteq E} P(A_1,B) g(B) = \EE_{P(A_1,\cdot)}(g) \le \EE_{P(A_2,\cdot)}(g) = \sum_{B\subseteq E} P(A_2,B) g(B) = (Pg)(A_2).
$$
Since this holds for any $A_1\subset A_2\subseteq E$, it follows that $Pg$ is increasing. It then follows by a simple induction that $P^tg$ is 
increasing for any $t\ge1$.
\end{proof}
\end{lemma}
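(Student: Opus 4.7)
The plan is to proceed by induction on $t$, reducing everything to the single monotonicity statement that $Pg$ is increasing whenever $g$ is, which I would handle via a monotone coupling together with Strassen's theorem.

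First I would treat the base case $t=1$. Fixing $A_1\subset A_2\subseteq E$, the natural move is to invoke the random mapping representation $(f,\sE,U)$ of Section~\ref{subsec:definitions}, which was already shown there to satisfy the key monotonicity property $A\subseteq B \Rightarrow f(A,e,u)\subseteq f(B,e,u)$. Drawing a single pair $(\sE,U)$ and setting $B_i:=f(A_i,\sE,U)$ for $i=1,2$ therefore yields a simultaneous coupling of $P(A_1,\cdot)$ with $P(A_2,\cdot)$ in which $B_1\subseteq B_2$ holds surely. Strassen's theorem (see e.g.\ \cite[Theorem 4.2]{Grimmett10}) then converts this coupling into the inequality $\EE g(B_1)\le \EE g(B_2)$ for any increasing $g$, which rewritten in terms of the transition matrix is precisely $(Pg)(A_1)\le(Pg)(A_2)$. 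Hence $Pg$ is increasing.

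For the inductive step, I would simply apply the base case with $P^{t-1}g$ in the role of $g$: assuming $P^{t-1}g$ is increasing, the base case gives that $P^tg=P(P^{t-1}g)$ is increasing as well, closing the induction.

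There is no substantial obstacle in this argument; the proof essentially assembles two off-the-shelf ingredients, namely the monotonicity of the random mapping $f$ (already established where the random mapping representation is introduced) and Strassen's theorem. The only thing to be careful about is ensuring the same auxiliary randomness $(\sE,U)$ is shared between the two copies, so that the monotone coupling actually delivers the inclusion $B_1\subseteq B_2$ pointwise rather than merely in distribution.
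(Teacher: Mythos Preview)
Your proposal is correct and follows essentially the same approach as the paper: construct the monotone coupling via the random mapping representation, invoke Strassen's theorem to obtain $(Pg)(A_1)\le(Pg)(A_2)$, and then induct on $t$. Even the citation to Strassen's theorem matches; as a minor aside, since you already have $B_1\subseteq B_2$ surely, you could take expectations of $g(B_1)\le g(B_2)$ directly without Strassen, but the paper phrases it the same way you do.
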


\begin{lemma}\label{lem: lambda_2 is positive}
The second-largest eigenvalue of $P$ is positive.
\begin{proof}
Since $P$ is reversible and irreducible we have the spectral decomposition (see e.g.~\cite[Lemma 12.2]{LevinPeresWilmer09}) 
$$
\frac{P(A,B)}{\fk(B)} = 1 + \sum_{j=2}^k\psi_i(A)\psi_i(B)\lambda_i.
$$
Since $\lambda_2\ge\lambda_j\ge0$ for all $j>2$, it follows that if $\lambda_2=0$, then $P(A,B)=\fk(B)$ for all $A,B\subseteq E$. But since,
by assumption, we have $|E|>1$, we can choose $A,B\subseteq E$ with $|A\symdif B| > 1$, where $\symdif$ denotes symmetric difference,
and~\eqref{transition matrix} then implies
$$
P(A,B) = 0 \neq \fk(B).
$$
We have therefore reached a contradiction, and we conclude that $\lambda_2>0$.
\end{proof}
\end{lemma}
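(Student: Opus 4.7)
The plan is a short contradiction argument using two ingredients already recalled in the appendix: the spectral representation of a reversible Markov chain, and the fact (quoted from Dyer--Greenhill--Ullrich) that the heat-bath transition matrix $P$ has a non-negative spectrum. First I would suppose for contradiction that $\lambda_2 = 0$. Because the eigenvalues are ordered $1 = \lambda_1 > \lambda_2 \ge \cdots \ge \lambda_k \ge 0$, this forces $\lambda_j = 0$ for every $j \ge 2$.

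Inserting this into the standard spectral decomposition
\[
\frac{P(A,B)}{\fk(B)} \;=\; \sum_{j=1}^{k} \lambda_j\, \psi_j(A)\,\psi_j(B),
\]
with $\psi_1 \equiv 1$ in $l^2(\fk)$, collapses the right-hand side to $1$, so $P(A,B) = \fk(B)$ for all $A, B \subseteq E$. The remaining step is to spot the incompatibility of this identity with the local structure of $P$: since $P = m^{-1}\sum_{e \in E} P_e$ and each single-edge update $P_e$ can only toggle one edge, $P(A,B) = 0$ whenever $|A \triangle B| \ge 2$. Using the standing appendix assumption $|E| > 1$, I would pick any two distinct edges $e_1, e_2 \in E$ and take, say, $A = \emptyset$ and $B = \{e_1,e_2\}$: then $P(A,B) = 0$ while $\fk(B) > 0$, the desired contradiction.

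There is no real obstacle here; the whole argument is spectral bookkeeping. The only points worth verifying carefully are that $\psi_1$ can indeed be taken as the constant function $1$ in $l^2(\fk)$ (standard Perron--Frobenius for an irreducible reversible chain, already exploited elsewhere in the appendix), and that one is entitled to quote the non-negativity of the spectrum from Dyer--Greenhill--Ullrich, both of which are handled by the setup of the appendix.
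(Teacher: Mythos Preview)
Your proof is correct and follows essentially the same approach as the paper: assume $\lambda_2=0$, use non-negativity of the spectrum to force all $\lambda_j=0$ for $j\ge 2$, collapse the spectral decomposition to $P(A,B)=\fk(B)$, and then contradict this using the single-edge-update structure of $P$ together with $|E|>1$. The only cosmetic difference is that you exhibit a concrete pair $A=\emptyset$, $B=\{e_1,e_2\}$, whereas the paper simply invokes any $A,B$ with $|A\symdif B|>1$.
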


\section{Coupon Collecting}
\label{appendix:coupon collecting}
Let $n\in\posint$, and let $C_1,C_2,\ldots$ be an iid sequence of uniformly random elements of $[n]:=\{1,2,\ldots,n\}$. For $t\in\posint$, we think
of $C_t$ as the \emph{coupon collected at time $t$}. For $i\in[n]$, let $D_i\in[n]$ denote the $i$th distinct type of coupon collected; i.e. the
$i$th distinct element of the sequence $C_1,C_2,\ldots$. Let $S_i(t):=\#\{s\le t : C_s = D_i\}$, the number of copies of $D_i$ collected by
time $t$. Define $R_t:=\{c\in[n]: C_s = c \textrm{ for some } s\leq t\}$, the set of distinct coupon types collected up to time $t$. For
any $1\le k \le n$, let $\coupon_k=\inf\{t\in\posint : |R_t| = k\}$, and note that $\coupon_k$ is simply the hitting time of $D_{k}$.
The \emph{coupon collector's time} is then defined as $\coupon:=\coupon_n$.

For each $c\in[n]$, define
$$
\lastvisit(c) = \sup\{t\le \coupon: C_t=c\}.
$$
We refer to the time $\lastvisit(c)$ as the \emph{last visit} to $c$. Let $(\lastvisit_i)_{i=1}^n$ denote the sequence of the
$\lastvisit(c)$, arranged in increasing order. In particular, $\lastvisit_1$ is the first time that a last visit occurs.

\begin{lemma}
There exists $\varphi>0$ such that $\PP(|R_{\lastvisit_{1}}| \le \floor{\ln n}) = O(n^{-\varphi})$.
\label{lem:the first visited edges are revisited often}
\end{lemma}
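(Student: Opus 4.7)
Set $k=\lfloor\ln n\rfloor$. The plan is a two-step bound: a union bound and relabeling symmetry reduce the problem to controlling a single coupon's last visit, and a thinning/strong-Markov argument identifies the resulting conditional probability with the chance that a specific item is last-to-first-appear in a smaller iid uniform sequence. The key structural fact is that $\coupon_k$ is a stopping time, so the post-$\coupon_k$ evolution is independent of the pre-$\coupon_k$ history.

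Because $t\mapsto|R_t|$ is non-decreasing with $|R_{\coupon_k}|=k$, the event in question equals $\{\lastvisit_1\le\coupon_k\}=\bigcup_{c\in[n]}\{\lastvisit(c)\le\coupon_k\}$, so the union bound and symmetry under permutations of coupon labels give
\[
\PP(|R_{\lastvisit_1}|\le k)\;\le\;n\,\PP(\lastvisit(1)\le\coupon_k).
\]
Decomposing on $\sA:=\{1\in R_{\coupon_k}\}$, the same relabeling symmetry yields that $R_{\coupon_k}$ is uniform on $k$-subsets of $[n]$, hence $\PP(\sA)=k/n$. Moreover $\{\lastvisit(1)\le\coupon_k\}\subseteq\sA$, and on $\sA$ this event is precisely the event that coupon $1$ is not collected in the interval $(\coupon_k,\coupon]$.

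Since $\coupon_k$ is a stopping time, the strong Markov property shows that conditionally on $\sA$ and $R_{\coupon_k}$, the draws $C_{\coupon_k+1},C_{\coupon_k+2},\ldots$ are iid uniform on $[n]$, and $\coupon-\coupon_k$ is the additional time until all of $[n]\setminus R_{\coupon_k}$ has appeared. Thinning this tail to the $(n-k+1)$-element set $S:=([n]\setminus R_{\coupon_k})\cup\{1\}$ gives an iid uniform sequence on $S$, in which $\coupon-\coupon_k$ is exactly the first time all of $S\setminus\{1\}$ has appeared. Consequently coupon $1$ is absent from $(\coupon_k,\coupon]$ if and only if $1$ is the last element of $S$ to first appear in the thinned sequence; since first-appearance orderings of an iid uniform sequence on a finite set are uniformly distributed over permutations, this probability equals $1/(n-k+1)$. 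Combining,
\[
\PP(|R_{\lastvisit_1}|\le k)\;\le\;n\cdot\frac{k}{n}\cdot\frac{1}{n-k+1}\;=\;\frac{k}{n-k+1}\;=\;O\!\left(\frac{\ln n}{n}\right),
\]
which is $O(n^{-\varphi})$ for any $\varphi\in(0,1)$, so e.g.\ $\varphi=1/2$ works.

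The main technical step to write carefully is the thinning argument: one should verify that the thinned sequence is genuinely iid uniform on $S$ and that the stopping time $\coupon-\coupon_k$ coincides (under the thinning correspondence) with the completion time of $S\setminus\{1\}$ in the thinned process. This is routine but is the only point where one is doing more than bookkeeping; everything else is standard coupon-collector symmetry.
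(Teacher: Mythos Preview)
Your approach is essentially correct and notably simpler than the paper's, but there is one slip in the event identification. Since $|R_t|=k$ for every $t\in[\coupon_k,\coupon_{k+1})$, the correct equality is $\{|R_{\lastvisit_1}|\le k\}=\{\lastvisit_1<\coupon_{k+1}\}$, which \emph{strictly contains} $\{\lastvisit_1\le\coupon_k\}$; as written, your bound controls only the smaller event. The fix is immediate: since $\{\lastvisit_1<\coupon_{k+1}\}\subseteq\{\lastvisit_1\le\coupon_{k+1}\}$, run your same symmetry--thinning argument with $k+1$ in place of $k$ to obtain $\PP(|R_{\lastvisit_1}|\le k)\le (k+1)/(n-k)=O((\ln n)/n)$. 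The thinning step itself is sound.

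By contrast, the paper's proof is considerably more involved. It first establishes, via Chernoff-type bounds on sums of geometric random variables (Lemma~\ref{lem:bounding the number of copies of coupons at coupon collector time}), that with probability $1-O(n^{-\rho})$ each of the first $\lfloor\ln n\rfloor$ coupons is collected more than $\lfloor(\ln n)/4\rfloor$ times by time $\coupon$; it then separately bounds (Lemma~\ref{lem:collecting log copies of a coupon requires visiting log distinct coupons}) the probability that any coupon accumulates that many copies before $\lfloor\ln n\rfloor$ distinct types have been seen. Your direct argument bypasses both auxiliary lemmas, is much shorter, and yields the sharper quantitative bound $O((\ln n)/n)$ rather than $O(n^{-\varphi})$ for some small unspecified $\varphi$. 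The paper's route does produce tail estimates for the counts $S_i(\coupon)$ that may be of independent interest, but for the lemma as stated your method is clearly preferable once the minor indexing error is repaired.
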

\begin{proof}
 Inserting $a_n=\lfloor\ln(n)\rfloor$ and $c_n=\lfloor\ln(n)/4\rfloor$ into Lemma~\ref{lem:bounding the number of copies of coupons at
   coupon collector time} and applying the union bound, implies
 \begin{align*}
 \PP\left(\bigcup_{i=1}^{a_{n}} \left\{S_{i}(\coupon) \le c_n \right\} \right)
 &\le \ln(n) \exp\left(-\frac{1}{2}\ln(n-a_n) + \frac{\ln(n)}{4} + 1\right)\\
 &= e\,\ln(n) \exp\left(-\frac{1}{4}\ln(n) - \frac{1}{2}\ln\left(1-\frac{a_n}{n}\right)\right)\\
 &= \frac{e}{\sqrt{1-\lfloor \ln(n)\rfloor/n}}\,\ln(n)\, n^{-1/4}.
 \end{align*}
 Therefore, for any $0<\rho<1/4$, we have
 $$
 \PP\left(\bigcup_{i=1}^{a_{n}} \left\{S_{i}(\coupon) \le c_n \right\} \right)
 = O(n^{-\rho}), \qquad n\to\infty.
 $$
 It follows that, 
 \begin{equation}
 \PP(|R_{\lastvisit_1}| \le a_n) = \PP\left(|R_{\lastvisit_1}|\le a_n, \bigcap_{i=1}^{a_n}\{S_i(\coupon)>c_n\}\right) + O(n^{-\rho})
 \label{eq:conditioning on c_n returns to a_n first edges}
 \end{equation}

 Let $I:=\inf\{t\in\posint : S_i(t) = c_n \textrm{ for some } i\in [n] \}$,
 the first time that there exists a coupon type for which exactly $c_n$ copies have been collected, and define the random
 variable $K\in[n]$ via $C_{\lastvisit_1}=D_K$. If $|R_{\lastvisit_1}|\le a_n$, then $1\le K \le a_n$.  Therefore, observing that
 $S_{K}(\coupon)=S_{K}(\lastvisit_1)$, we find
\begin{equation}
  \begin{split}
 \PP\left(|R_{\lastvisit_1}|\le a_n, \bigcap_{i=1}^{a_n}\{S_i(\coupon)>c_n\}\right)
 &\le
 \PP(|R_{\lastvisit_1}|\le a_n, S_{K}(\coupon) > c_n) \\
 &=
 \PP(|R_{\lastvisit_1}|\le a_n, S_{K}(\lastvisit_1) > c_n) \\
 &\le
 \PP(|R_I|\le a_n)
 \end{split}
  \label{eq:bounding conditioned R_\lastvisit_1 in terms of R_I}
 \end{equation}
 since if $|R_{\lastvisit_1}|\le a_n$ and $S_{K}(\lastvisit_1) > c_n$ then $|R_I|\le a_n$.  Combining~\eqref{eq:conditioning on c_n returns to a_n first
   edges} and~\eqref{eq:bounding conditioned R_\lastvisit_1 in terms of R_I} then implies
 $$
 \PP(|R_{\lastvisit_1}| \le a_n) \le \PP(|R_I|\le a_n) + O(n^{-\rho}).
 $$
 However, Lemma~\ref{lem:collecting log copies of a coupon requires visiting log distinct coupons} implies that there exists $\delta>0$ such
 that $\PP(|R_I|\le a_n) = O(n^{-\delta})$. We therefore conclude that, if $\varphi=\min\{\rho,\delta\}$, then
 $$
 \PP(|R_{\lastvisit_{1}}| \le \floor{\ln n}) = O(n^{-\varphi}).
 $$
\end{proof}

\begin{lemma}
\label{lem:bounding the number of copies of coupons at coupon collector time}
Let $(a_n)_{n\in\posint}$ and $(c_n)_{n\in\posint}$ be any two sequences of natural numbers.
For $n\in\posint$, if $a_n<n$ then for each $1\le i\le a_n$ we have
$$
\PP\left(S_i(\coupon) \le c_n \right) \le \exp\left(-\ln(\sqrt{n-a_n}) + c_n + 1\right).
$$
\end{lemma}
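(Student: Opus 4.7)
The plan is to decompose $S_i(\coupon) - 1$ as a sum of independent shifted-geometric random variables and then apply an exponential Markov inequality, with the rate tuned to produce the factor $e^{c_n + 1}$.

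For the decomposition, let $\mathcal{U}$ be the random $(n-i+1)$-element subset of $[n]$ consisting of $D_i$ together with the $n-i$ coupons unseen at time $\coupon_i$. A standard filtering argument (using that $\coupon_i$ and $\mathcal{U}$ are measurable with respect to $C_1,\ldots,C_{\coupon_i}$, hence independent of the future draws) shows that the subsequence of $C_{\coupon_i + 1}, C_{\coupon_i + 2}, \ldots$ lying in $\mathcal{U}$ is iid uniform on $\mathcal{U}$, and that $\coupon$ corresponds to the time at which every one of the $n-i$ formerly-unseen coupons has appeared in this subsequence. Partition the subsequence into $n-i$ phases demarcated by successive new-coupon arrivals, and let $Y_k$ count the number of occurrences of $D_i$ inside phase $k$. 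Conditional on the phase length $G_k \sim \mathrm{Geom}((n-i-k+1)/(n-i+1))$, each of the $G_k - 1$ non-terminal draws of phase $k$ is uniform on the $k$ elements $\{D_i\}\cup\{\text{already-seen $B$'s}\}$, whence $Y_k\mid G_k \sim \mathrm{Bin}(G_k - 1, 1/k)$. A short pgf computation then shows that, marginally, $Y_k$ is a shifted geometric with $\PP(Y_k = j) = m/(m+1)^{j+1}$ for $j\ge 0$, where $m = n-i-k+1$, and that the $Y_k$'s are jointly independent across $k$. Reindexing by $m$ yields
\begin{equation*}
S_i(\coupon) \;=\; 1 + \sum_{m=1}^{n-i} X_m, \qquad \PP(X_m = j) = \frac{m}{(m+1)^{j+1}},\ j\ge 0,\ \text{the $X_m$ independent.}
\end{equation*}

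Next, apply Markov's inequality to $\exp(-\sum X_m)$. A routine calculation yields $\EE(e^{-X_m}) = m/(m+\alpha)$ with $\alpha := 1 - 1/e$, so
\begin{equation*}
\PP(S_i(\coupon) \le c_n) \;=\; \PP\!\Big(\textstyle\sum_m X_m \le c_n - 1\Big) \;\le\; e^{c_n - 1} \prod_{m=1}^{n-i}\frac{m}{m+\alpha}.
\end{equation*}
The choice $t=1$ in the Markov step is dictated precisely by the $e^{c_n+1}$ target. For the remaining product I use $1-x\le e^{-x}$ together with an integral comparison for the harmonic-type sum:
\begin{equation*}
\prod_{m=1}^{n-i}\frac{m}{m+\alpha} \;\le\; \exp\!\Big({-}\alpha\sum_{m=1}^{n-i}\tfrac{1}{m+\alpha}\Big) \;\le\; \Big(\frac{1+\alpha}{n-i}\Big)^{\!\alpha}.
\end{equation*}
Since $(1+\alpha)^\alpha \le 1+\alpha < 2$ (using $\alpha<1$) and $\alpha > 1/2$, this is bounded by $2/\sqrt{n-i}$. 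Combining with $n-i \ge n-a_n$ and $2e^{c_n - 1} \le e^{c_n + 1}$ gives
\begin{equation*}
\PP(S_i(\coupon) \le c_n) \;\le\; \frac{2\,e^{c_n - 1}}{\sqrt{n-a_n}} \;\le\; \frac{e^{c_n + 1}}{\sqrt{n-a_n}} \;=\; \exp\!\Big({-}\ln\sqrt{n-a_n} + c_n + 1\Big).
\end{equation*}

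The main technical obstacle will be the decomposition step: carefully verifying the filtering argument, and proving that the phasewise counts $Y_k$ are not only identically the claimed shifted geometrics but also independent across $k$. Once that is in place, the rest is routine: a Markov bound at rate $t=1$, together with the crude integral estimate on the harmonic sum. Note that the estimate is in fact not tight --- the natural exponent is $\alpha = 1-1/e > 1/2$, and we simply discard this slack to match the cleaner $1/\sqrt{n-a_n}$ in the statement.
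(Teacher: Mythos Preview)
Your proof is correct and takes essentially the same approach as the paper: both decompose $S_i(W)-1$ as a sum of independent shifted-geometric variables indexed by the phases between successive new-coupon arrivals (with MGF $m/(m+1-e^\theta)$), and then apply an exponential Markov inequality followed by an integral comparison for the resulting harmonic-type sum. The only cosmetic differences are that the paper works directly with the original sequence (phases $[W_k,W_{k+1}]$) rather than your filtered subsequence, and that the paper fixes $\lambda=1-e^{\theta}=1/2$ at the outset, whereas you take $\theta=-1$ (so $\lambda=1-1/e>1/2$) and discard the resulting slack at the end.
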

\begin{proof}
Fix $n\in\posint$ and $1\le i \le a_n$, and assume $a_n<n$. Adopting the convention $\coupon_0=0$, for $0\le k \le n-1$ we define
$$ 
Y_i(k) := \sum_{j=\coupon_{k}+1}^{\coupon_{k+1}-1} \indicator_{\{C_{j}= D_{i}\}}.
$$
Since $Y_i(k)=0$ for all $k<i$, and $C_{\coupon_k}=D_i$ iff $k=i$, we then have
$$
S_{i}(\coupon) = 1 + \sum_{k=i}^{n-1} Y_i(k).
$$
And since the random variables $Y_i(k)$ are independent, for any $\theta<0$, we have
\begin{equation}
\begin{split}
 \PP\left(S_i(\coupon) \le c_n \right)
 &\le \PP \left(\sum_{k=a_n}^{n-1} Y_i(k)  \le c_n \right)\\
 &= \PP \left(\exp\left[\theta \sum_{k=a_n}^{n-1} Y_i(k)\right]  \ge e^{\theta c_n} \right) \\
 &\le \exp\left(-\theta c_n + \sum_{k=a_n}^{n-1} \ln \EE[e^{\theta Y_i(k)}]\right),
\label{eq:upper tail bound for S_i(\coupon)}
\end{split}
\end{equation}
where the final step follows from Markov's inequality.

The moment generating function of $Y_i(k)$ can be calculated explicitly.
Let $i \le k \le n-1$. Given $\coupon_{k}$ and $\coupon_{k+1}$, the random variable $Y_i(k)$ has binomial distribution with
$\coupon_{k+1} -\coupon_{k}-1$ trials and success probability $1/k$, which implies
\begin{align*}
  \EE(e^{\theta Y_i(k)}) &= 
  \EE(\EE[e^{\theta Y_i(k)}|\coupon_{k},\coupon_{k+1}]) \\
  &= 
  \EE\left[\left(\frac{e^{\theta}}{k} + 1 - \frac{1}{k}\right)^{\coupon_{k+1} - \coupon_{k}-1}\right].
\end{align*}
But since $\coupon_{k+1}-\coupon_{k}$ has geometric distribution with parameter $1-k/n$, this becomes
$$
 \EE(e^{\theta Y_i(k)}) = \frac{n-k}{n-k+1-e^{\theta}}.
$$
Therefore, setting $\lambda=1-e^\theta$ and $b_n=n-a_n$, it follows from the fact that $\ln(1+\lambda/k)$ is a decreasing function of $k$ that
\begin{equation}
\begin{split}
   -\sum_{k=a_{n}}^{n-1} \ln \EE(e^{\theta Y_i(k)}) &= \sum_{k=1}^{b_n} \ln \left(1 + \frac{\lambda}{k} \right) \\
   &\ge \int_1^{b_n} \ln\left(1+\frac{\lambda}{x}\right) \d x \\
   &= \lambda \ln(b_n) + (b_n + \lambda) \ln(1 + \lambda/b_n) - (1+\lambda)\ln(1+\lambda)\\
   &\ge \lambda\ln(b_n) + \lambda - (1+\lambda)\ln(1+\lambda)\\
   &\ge \lambda\ln(b_n) - 1
\label{eq:moment generating function for sum of Y_i(k)}
\end{split}
\end{equation}
where, in the penultimate step, we used the fact that $\ln(1+x)\ge x/(1+x)$ holds for all $x>-1$, and in the last step we used the fact
that $(1+\lambda)-(1+\lambda)\ln(1+\lambda)>0$ for any $\lambda\in(0,1)$. Combining~\eqref{eq:upper tail bound for S_i(\coupon)} and~\eqref{eq:moment
generating function for sum of Y_i(k)}, we conclude that for all $\lambda\in(0,1)$ we have
$$
\PP\left(S_i(\coupon) \le c_n \right) \le \exp\left[-\lambda\ln(n-a_n) -\ln(1-\lambda)c_n + 1) \right].
$$
Choosing $\lambda=1/2$ yields the stated result.
\end{proof}

\begin{lemma}
\label{lem:collecting log copies of a coupon requires visiting log distinct coupons}
 Fix $c\in(0,1)$, and define sequences $(a_n)_{n\in\posint}$ and $(c_n)_{n\in\posint}$ such that $a_n=\lfloor\ln(n)\rfloor$ and 
 $c_n=\lfloor c\ln(n)\rfloor$. 
 Let
 $$
 I:=\inf\{t\in\posint : S_i(t) = c_n \textrm{ for some } i\in [n] \},
 $$
 the first time that there exists a coupon type for which exactly $c_n$ copies have been collected.
 Then there exists $\delta>0$ such that
 $$
 \PP(|R_{I}|\le a_n)=O(n^{-\delta}), \qquad n \to \infty. 
 $$
\begin{proof}
We assume, in all that follows, that $n$ is sufficiently large that $c_n>1$. For $k\in[n]$, let
$$
I_k=\inf\{t\in\posint : S_{k}(t) =c_n\}
$$
be the first time that $c_n$ copies of coupon type $D_k$ have been collected. For any sequence of natural numbers $(b_n)_{n\in\posint}$, we have
\begin{equation}
\begin{split}
\PP(|R_{I_k}| \le a_n) &= \PP(|R_{I_k}|\le a_n, I_k\le b_n) + \PP(|R_{I_k}|\le a_n, I_k > b_n)\\
&\le \PP(I_k\le b_n) + \PP(|R_{I_k}|\le a_n, I_k > b_n)\\
&\le \PP(I_k\le b_n) + \PP(\coupon_{a_n+1}>b_n),
\end{split}
\label{eq:partitioning R_I_k bound}
\end{equation}
where the last inequality follows by observing that if $|R_{I_k}|\le a_n$ and $I_k>b_n$, then $\coupon_{a_n+1}>b_n$.

To find an upper bound for $\PP(I_k\le b_n)$, note that, for any $s\ge1$, the random time between the $s$th and $(s+1)$th arrival of coupon
type $D_k$ is a geometric random variable with success probability $1/n$. It follows that {$\Delta_k:=I_k-\coupon_k$} is a sum of $c_n-1$
independent geometric random variables\footnote{Since the time $\coupon_k$ of the first arrival of $D_k$ is not geometrically distributed, $I_k$
  is not itself a sum of geometric random variables.}, each with success probability $1/n$.  Lemma~\ref{lem:tail bounds for geometric
  variables} therefore implies that for any $0<\lambda<1$,
$$
\PP(\Delta_k\le \lambda n(c_n-1)) \le e^{-f(\lambda) c_n + f(\lambda)}
$$
where $f(\lambda)>0$. But from the trivial lower bound $\coupon_k\ge1$, it follows that $\Delta_k\le I_k -1$. Therefore, for any
$b_n\le\lambda n(c_n-1)+1$, we have
\begin{equation}
\label{I_k tail bound}
\PP(I_k \le b_n) \le \PP(I_k\le\lambda n (c_n-1) + 1) \le \PP(\Delta_k\le \lambda n (c_n-1)) \le e^{-f(\lambda) c_n + f(\lambda)}.
\end{equation}

To find an upper bound for $\PP(\coupon_{a_n+1}>b_n)$, we begin with the observation that,
with the convention $\coupon_0=0$, we have
$$
\coupon_{a_n+1} = \sum_{i=0}^{a_n}(\coupon_{i+1}-\coupon_i).
$$
For $0\le i \le a_n$, the random variables $\coupon_{i+1}-\coupon_i$ are
independent, and distributed according to a geometric distribution with success probability $1-i/n$. 
Therefore, Lemma~\ref{lem:tail bounds for geometric variables} implies that for any $\zeta>1$
$$
\PP(\coupon_{a_n+1}\ge \zeta \,\EE[\coupon_{a_n+1}]) \le e^{-f(\zeta)(1-a_n/n)\EE(\coupon_{a_n+1})}
$$
with $f(\zeta)>0$. But explicit calculation shows that
$$
\EE(\coupon_{a_n+1}) = n(H_n - H_{n-a_n-1})\sim a_n, \qquad n\to\infty,
$$
where $H_i$ is the $i$th harmonic number, and the asymptotic result follows from $H_n\sim\ln(n)$ and the fact that $a_n=o(n)$.
It follows that for any choice of $b_n \ge \zeta\, \EE(\coupon_{a_n+1})$ and $\alpha\in(0,f(\zeta))$, for sufficiently large $n$, we have
\begin{equation}
\label{coupon_a_n tail bound}
\PP(\coupon_{a_n+1}\ge b_n) \le e^{-\alpha\, a_n}.
\end{equation}

Any choice of $b_n$ satisfying $\zeta\,\EE(\coupon_{a_n+1}) \le b_n \le \lambda n(c_n-1)+1$, for sufficiently large $n$, suffices to
ensure~\eqref{I_k tail bound} and~\eqref{coupon_a_n tail bound} hold simultaneously. It therefore suffices to set $b_n=n$. For simplicity,
$\lambda\in(0,1)$ and $\zeta>1$ can be chosen so that $f(\lambda)=1=f(\zeta)$. Combining~\eqref{eq:partitioning R_I_k bound}, \eqref{I_k
  tail bound} and~\eqref{coupon_a_n tail bound} then implies that for any $\alpha<1$ we have
$$
\PP(|R_{I_k}| \le a_n) \le e^{-c_n+1} + e^{-\alpha\,a_n}
$$
for sufficiently large $n$.

Finally, since $|R_I|\le a_n$ implies $|R_{I_k}|\le a_n$ for some $1\le k \le a_n$, it follows from the union bound that, for sufficiently
large $n$,
\begin{align*}
 \PP(|R_I|\le a_n) \le \PP\left(\bigcup_{k=1}^{a_n}\{|R_{I_k}|\le a_n\}\right) \le \sum_{k=1}^{a_n} \PP(|R_{I_k}|\le a_n)
 &\le a_n\, e^{-c_n+1} + a_n\, e^{-\alpha\,a_n}\\
 &\le e^2\,\ln(n)\,n^{-c} + e^{\alpha}\,\ln(n)\,n^{-\alpha}.
\end{align*}
Since $c,\alpha>0$, we can choose $0< \delta <\min\{c,\alpha\}$, and we obtain $\PP(|R_I|\le a_n) = O(n^{-\delta})$.
\end{proof}
\end{lemma}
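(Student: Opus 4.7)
The plan is to establish the bound separately for each individual coupon type and then take a union bound over the first $a_n$ coupons. The central observation is that if $|R_I| \le a_n$, then there must be some $k \in [a_n]$ with $|R_{I_k}| \le a_n$, where $I_k$ is the first time that $c_n$ copies of coupon $D_k$ have been collected. So it suffices to bound $\PP(|R_{I_k}| \le a_n)$ for each such $k$, with total loss at most $a_n \le \ln n$ from the union bound.

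For a single $k$, I would use the following decomposition: for any natural-valued sequence $b_n$,
\begin{equation*}
\PP(|R_{I_k}| \le a_n) \le \PP(I_k \le b_n) + \PP(\coupon_{a_n+1} > b_n),
\end{equation*}
which holds because the event $\{|R_{I_k}| \le a_n,\, I_k > b_n\}$ forces at most $a_n$ distinct coupon types by time $b_n$, hence $\coupon_{a_n+1} > b_n$. The strategy is then to pick $b_n$ lying deep in the left tail of $I_k$ and deep in the right tail of $\coupon_{a_n+1}$. Since $I_k - \coupon_k$ is a sum of $c_n-1$ i.i.d.\ geometric random variables with success probability $1/n$ (mean $\sim n c_n$), and $\coupon_{a_n+1}$ is a sum of independent geometrics with mean $n(H_n - H_{n-a_n-1}) \sim a_n \sim \ln n$, the choice $b_n = n$ lies comfortably between the two means and gives both tails Chernoff-type decay.

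To quantify these tails I will apply the hypothesized Lemma~\ref{lem:tail bounds for geometric variables}: for the left tail, $\PP(I_k - \coupon_k \le \lambda n(c_n-1)) \le e^{-f(\lambda)(c_n-1)}$ with $f(\lambda)>0$ for $\lambda \in (0,1)$, and since $\coupon_k \ge 1$ this controls $\PP(I_k \le \lambda n(c_n-1)+1)$; for the right tail, $\PP(\coupon_{a_n+1} \ge \zeta\, \EE\coupon_{a_n+1}) \le e^{-f(\zeta)(1-a_n/n)\EE\coupon_{a_n+1}}$ with $f(\zeta)>0$ for $\zeta > 1$. Both bounds are of the form $\exp(-\Theta(\ln n)) = n^{-\Theta(1)}$, so each is $O(n^{-c'})$ and $O(n^{-\alpha})$ respectively for some positive constants depending on the choice of $\lambda, \zeta$. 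Adding and applying the union bound over $k \in [a_n]$ incurs only a logarithmic factor, yielding
\begin{equation*}
\PP(|R_I| \le a_n) \le a_n\bigl(e^{-c_n+1} + e^{-\alpha a_n}\bigr) = O(n^{-\delta})
\end{equation*}
for any $0 < \delta < \min\{c, \alpha\}$.

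The main obstacle, as far as there is one, is purely bookkeeping: making sure a single choice of $b_n$ works simultaneously for the lower-tail bound on $I_k$ and the upper-tail bound on $\coupon_{a_n+1}$, while keeping the resulting exponents strictly positive after the union bound. There is some delicacy in verifying that the bound $b_n \le \lambda n(c_n - 1) + 1$ required by the first estimate is compatible with $b_n \ge \zeta\, \EE \coupon_{a_n+1}$ required by the second, but since $\EE \coupon_{a_n+1} \sim \ln n$ and $\lambda n(c_n-1) \sim \lambda c n \ln n$ for any fixed $\lambda \in (0,1)$, choosing $b_n = n$ works for all sufficiently large $n$. No genuinely new probabilistic input is needed beyond the geometric-sum concentration lemma already cited.
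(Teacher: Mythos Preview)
Your proposal is correct and follows essentially the same approach as the paper's proof: the same decomposition $\PP(|R_{I_k}|\le a_n)\le\PP(I_k\le b_n)+\PP(\coupon_{a_n+1}>b_n)$, the same use of the geometric-sum concentration lemma on $I_k-\coupon_k$ and on $\coupon_{a_n+1}$, the same choice $b_n=n$, and the same union bound over $k\in[a_n]$. Even the final displayed estimate $a_n\bigl(e^{-c_n+1}+e^{-\alpha a_n}\bigr)$ matches the paper's.
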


\begin{lemma}
\label{lem:tail bounds for geometric variables}
Let $X_1,X_2,\ldots,X_n$ be independent random variables, such that $X_i$ has geometric distribution with success probability $p_i$, and let 
$X=\sum_{i=1}^n X_i$. Then 
\begin{align*}
\PP(X\le \lambda\mu) &\le e^{-p_{\ast}\mu f(\lambda)}, \qquad \forall\,\lambda\le 1,\\
\PP(X\ge \zeta\mu) &\le e^{-p_{\ast}\mu f(\zeta)}, \qquad \forall\,\zeta\ge 1,
\end{align*}
where $\mu=\EE(X) = \sum_{i=1}^n 1/p_i$, $p_\ast = \min_{i\in[n]} p_i$ and $f(x)=x -1 -\ln(x)$.
\begin{proof}
These results can be established, in the standard way, by applying Markov's inequality to $\EE(e^{t X})$, and using the 
explicit form for $\EE(e^{t X_i})$; see e.g.~\cite{Janson14}.
\end{proof}
\end{lemma}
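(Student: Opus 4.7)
This is a standard Chernoff-type bound for sums of independent geometrics, so the plan is to apply Markov's inequality to $\EE(e^{tX})$ and then optimize carefully over the parameter $t$ in a way that exposes the factor $p_\ast$.

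The first step is to compute the moment generating function in a convenient form. For a geometric variable $X_i$ with success probability $p_i$, summing the geometric series gives $\EE(e^{tX_i})=p_ie^t/(1-(1-p_i)e^t)$ for $t<-\ln(1-p_i)$. I would introduce the reparametrization $s=1-e^{-t}$ (so that $t>0$ corresponds to $s\in(0,1)$ and $t<0$ corresponds to $s<0$). A short calculation then yields the much cleaner identity
\[
\EE(e^{tX_i})=\frac{1}{1-s/p_i},\qquad e^{-t\alpha\mu}=(1-s)^{\alpha\mu},
\]
valid whenever $s<p_i$, in particular for all $s<p_\ast$ and for all $s\le 0$.

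Next, Markov's inequality, together with independence of the $X_i$, gives for the upper tail (with $s\in(0,p_\ast)$)
\[
\ln\PP(X\ge\zeta\mu)\le \zeta\mu\ln(1-s)-\sum_{i=1}^n\ln(1-s/p_i),
\]
and analogously for the lower tail with $s<0$. The \emph{key step}, and the main technical obstacle, is to replace each $p_i$ in the sum by $p_\ast$ at the cost of a factor $p_\ast/p_i$; summing then produces exactly $p_\ast\mu=\sum_i p_\ast/p_i$ in the exponent. The right elementary lemma to do this is the monotonicity of $y\mapsto \ln(1+y)/y$, which is a consequence of $\ln(1+y)\ge y/(1+y)$; this monotonicity yields, for $y_i\le y_\ast$ of the same sign,
\[
-\ln(1-y_i)\le \frac{y_i}{y_\ast}\bigl(-\ln(1-y_\ast)\bigr)\quad\text{and}\quad
\ln(1+y_i)\ge \frac{y_i}{y_\ast}\ln(1+y_\ast).
\]
Applied with $y_i=s/p_i$ and $y_\ast=s/p_\ast$, this gives $-\sum_i\ln(1-s/p_i)\le -p_\ast\mu\ln(1-s/p_\ast)$ (and the analogous inequality for $s<0$).

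Finally, I would substitute this back, use $\ln(1-s)\le -s$ to simplify the first term, and change variables $v=s/p_\ast$ to obtain, for the upper tail, the clean bound $\ln\PP(X\ge\zeta\mu)\le p_\ast\mu\bigl(-\zeta v-\ln(1-v)\bigr)$ for $v\in(0,1)$. Optimizing over $v$ gives $v=(\zeta-1)/\zeta$, which exactly produces $-p_\ast\mu f(\zeta)$ with $f(\zeta)=\zeta-1-\ln\zeta$. The lower tail is handled symmetrically, with $v<0$ and optimum $v=(1-\lambda)/\lambda$, yielding $-p_\ast\mu f(\lambda)$. The only subtlety worth double-checking is that the Markov step and the monotonicity lemma combine cleanly across the change of variables $s\mapsto v$, but this is a routine verification once the reparametrization $s=1-e^{-t}$ has been introduced.
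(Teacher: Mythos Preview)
Your proposal is correct and follows exactly the route the paper indicates: the paper's proof is just the one-line remark that the bounds follow ``in the standard way'' from Markov's inequality applied to $\EE(e^{tX})$ together with the explicit geometric MGF, with a reference to Janson's note, and you have simply supplied those details (the reparametrization $s=1-e^{-t}$, the monotonicity of $\ln(1+y)/y$ to pass from $p_i$ to $p_\ast$, and the optimization in $v=s/p_\ast$). One trivial slip: for the lower tail the optimizer is $v=(\lambda-1)/\lambda\le 0$, not $(1-\lambda)/\lambda$, but this does not affect the argument.
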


\bibliographystyle{spmpsci}      

\end{document}